\title{\LARGE \bf Regulation, Volatility and Efficiency in Continuous-Time Markets}
\newtheorem{thm}{Theorem}[section] 
\newtheorem{lem}[thm]{Lemma}
\newtheorem{defn}{Definition}[section] 
\newtheorem{prop}[thm]{Proposition} 
\newtheorem{cor}{Corollary}
\newcounter{taskcounter}
\newenvironment{hypot}{ \refstepcounter{taskcounter}\textbf{A\arabic{taskcounter}}:}
{}
\newcommand {\ass}[1]{\textbf{A\ref{#1}}}
\def \t{^\top}
\def \b{\mathbf}
\def \tr{\mathrm{Tr}}
\def \teq{\triangleq}
\def \th{\mathbf{\theta}}
\def \Th{\mathbf{\Theta}}
\def \be{\begin{equation}}
\def \ee{\end{equation}}
\def \bml{\begin{multline}}
\def \eml{\end{multline}}
\def \ba{\begin{align}}
\def \ea{\end{align}}
\def \qquadtwo{\qquad\qquad}
\def \qquadthree{\qquad\qquad\qquad}
\def \qquadeight{\qquad\qquad\qquad\qquad\qquad\qquad\qquad\qquad}
\def \R{\mathbb{R}}		%
\def \F{\mathcal{F}}     	%
\def \w{\omega}                 %
\def \conv{\rightarrow}		%
\def \e{\epsilon}             	%
\def \p{\partial}		%
\def \E{\mathbb{E}}		%
\def \l{\lambda}		%
\def \f{\infty}			%
\author{Arman~C. Kizilkale and Shie Mannor%

\thanks{Arman~C. Kizilkale is with the Department of Electrical and Computer Engineering at McGill University, Montreal, Canada. {\tt\small arman@cim.mcgill.ca}}%
\thanks{Shie Mannor is with the Department of Electrical Engineering at the Technion, Haifa, Israel.  {\tt\small shie@ee.technion.ac.il}}%
}
\begin{document}

\maketitle
\thispagestyle{empty}

\begin{abstract}
We analyze the efficiency of markets with friction, particularly power markets. We model the market as a dynamic system with $(d_t;\,t\geq 0)$ the demand process and $(s_t;\,t\geq 0)$ the supply process. Using stochastic differential equations to model the dynamics with friction, we investigate the efficiency of the market under an integrated expected undiscounted cost function solving the optimal control problem. Then, we extend the setup to a game theoretic model where multiple suppliers and consumers interact continuously by setting prices in a dynamic market with friction. We investigate the equilibrium, and analyze the efficiency of the market under an integrated expected social cost function. We provide an intriguing \emph{efficiency--volatility} no-free-lunch trade-off theorem.
\end{abstract}

\section{INTRODUCTION}\label{sec:intro}

The first attempts of privatization and deregulation of power industry took place in the 1980s starting in Chile and the UK \cite{2006Jo}. After the restructuring of power markets in California in the late 1990s, price fluctuations have resulted in an estimate of \$45 billion in higher electricity costs, lost businesses due to long blackouts, and a weakening economic growth according to the Public Policy Institution of California \cite{2005Ro}. Even though such events have been mostly considered as market failures \cite{1998WNS,2003NS}, it was shown in \cite{2010CM} that the occurrence of choke-up prices (the maximum price a consumer is willing to pay) is intrinsic to markets with \emph{friction}, and the market mechanism is efficient in a stylized model. Choke-up prices are observed in current market mechanisms regardless of being efficient, intrinsic, or market failure; this is undesirable and costly.

Dynamic pricing in electricity markets have interesting characteristics. Locational Marginal Pricing (LMP) schemes may determine very high prices for a region while a neighbour region might be assigned a low or even a negative price for the same amount of energy, where the supplier is actually willing to pay to consumers for the power they use. The constraints due to transmission congestion, voltage and thermal constraints, Kirchoff's Laws and start-up and shut-down costs are the main reasons behind excess and lack of supply which cause volatile prices \cite{1992Ho}.

As shown in \cite{2010CM}, the current deregulated market mechanism is efficient with respect to the infinite horizon social cost. However, the definition of \emph{efficiency} depends on the social cost function defined. Many models do not penalize \emph{volatility} in their cost functions. One can define \emph{volatility} as rapid or unexpected changes in the price process. Several models of deterministic and stochastic volatility have been studied in the economics literature including the famous deterministic Black-Scholes formula \cite{1973BS}, and the stochastic Heston's extension \cite{1993He}, SABR \cite{2002HKLW} and GARCH \cite{1986Bo} models. We are going to adopt a much simpler definition of volatility since our goal is to give a specialized analysis of volatility in power markets.

Several authors studied efficiency in power markets. Even though most studies are based on static frameworks, it was shown in \cite{2008Ma} that under ramping constraints, markets might face prices not necessarily equal to the marginal cost price. A dynamic game model based on duopoly markets is analyzed in \cite{2009GS_ET}, and a dynamic competitive equilibrium for a stochastic market model is formulated and the role of volatility for the value of wind generation is presented in \cite{2010MPWKS_CDC}.

We model the power market through continuous dynamics and an integrated undiscounted cost function. The problem is presented as an optimal control problem, and the control action is defined as an increment process applied by the regulator. The HJB equation is solved and the resulting optimal control is presented. As a special case, in the class of linear quadratic cost functions, we analytically show that there is a trade-off between efficiency and non-volatility. In the second part of the paper we take a decentralized approach and define the market as a dynamic linear quadratic game among individual decision maker supplier and consumer agents. The agents are coupled through the price process. We show that this price process can be estimated, and the agents can calculate their best response actions based on this estimation. We show that these best response actions constitute an equilibrium and the trade-off theorem between efficiency and non-volatility is shown to hold in this dynamic game model as well.

In the first part of the paper we suggest a dynamic optimization framework for power markets: in Sec.\ \ref{sec:model}, we introduce the model we are going to use for the centralized control model. Demand $( d_t;\,t\geq 0)$, supply $( s_t;\,t\geq 0)$ and price $( p_t;\,t\geq 0)$ processes are defined for the social cost optimizer regulator agent $R$ with the corresponding cost function. In Sec.\ \ref{sec:CCF}, we present the optimal control that leads to a volatile price process. In Sec.\ \ref{sec:NFL}, we define volatility and modify the social cost function to account for it. We solve the dynamic stochastic optimization problem for linear dynamics and a quadratic cost function and present the closed form solution. We show that there is a trade-off that can be quantified between efficiency and non-volatility, and present supporting simulations. In the second part of the paper we suggest a dynamic game-theoretic optimization framework: in Sec.\ \ref{sec:DCF}, the consumer agents $D_i,\, 1\leq i \leq N^d$, with their dynamics $( d_t^i;\, t\geq 0)$, the supplier agents $S_i,\, 1\leq i \leq N^s$ with their dynamics $( s_t^i;\, t\geq 0)$, and the price process $( p_t;\,t\geq 0)$ are defined with the corresponding cost functions for the consumers and suppliers. In this framework there is no regulator agent: the price process is solely determined in the market mechanism through the actions of the consumers and suppliers \cite{2010KM_All}. In Sec.\ \ref{sec:equ_ana}, we first show the existence of best response actions for the game model, we present the closed form solutions, and finally we analyze the equilibrium properties of the system. In Sec.\ \ref{sec:eff_vol}, we define volatility for this model, and show that the trade-off theorem can be extended to the multi-player game setup. We present supporting simulations in Sec.\ \ref{sec:sim} and conclude in Sec.\ \ref{sec:Conc}.

\section{MODEL}\label{sec:model}
In this section we define the optimization problem for the social cost optimizer in power markets. Here we call the optimizer the ``regulator'' (agent $R$). We define the three dimensional state process $(x_t: x_t = (d_t,s_t,p_t)\t;\,t\geq 0)$. We have $(d_t;\,t\geq 0)$, the demand process, $(s_t;\,t\geq 0)$, the supply process, and $(p_t;\,t\geq 0)$, the price process. 
Demand and supply dynamics are defined as
\begin{equation}\label{eqn:dynamics_cont} 
\begin{aligned}
dd_t =& f^d(d_t,p_t)dt + \sigma_d dw_t^d, \quad t\geq 0,\\
ds_t =& f^s(s_t,p_t)dt + \sigma_s dw_t^s, \quad t\geq 0,
\end{aligned}
\end{equation} 
using deterministic continuous functions $f^d$ and $f^s$ with $(w_t^d;\,t\geq 0)$ and $(w_t^s;\,t\geq 0)$, standard Wiener processes. The function $f^d$ is allowed to be a function of $d$ and $p$, values of demand and price, and $f^s$ is allowed to be a function of $s$ and $p$, values of supply and price processes.

We employ the following assumptions on the functions $f^d$ and $f^s$ in \eqref{eqn:dynamics_cont}. The first assumption \ass{ass:fric} reflects \emph{friction} for power markets. This assumption ensures that the instantaneous change in demand and supply processes with respect to a price change is constrained. This is one of the key properties of power dynamics: the suppliers and consumers are unable to respond to abrupt changes in the system instantly. The reason for the supplier's sluggishness is slow ramp up in power production, whereas for the consumers it is usually not handy or very complicated and costly to startup and shutdown a running machine or a household. The second assumption, \ass{ass:incr_dyna_func}, reflects natural characteristics of demand and supply dynamics: demand is a decreasing function of the price, whereas supply is an increasing function of the price.

\begin{hypot}\label{ass:fric}
For constant $C_1>0$, $f^d(0,0)\leq C_1,\; f^s(0,0) \leq C_1$ and
\[ \left \lvert \frac{\p f^d}{\p p} \right \rvert + \left \lvert \frac{\p f^s}{\p p}\right \rvert  \leq C_1  .\] 
\end{hypot}
An immediate example is a linear function of the form $f(x,p) = A(t)x + B(t)p$ with $A,\, B$ of class $\b{C}^1([0,T])$. 

\begin{hypot}\label{ass:incr_dyna_func}
$f^d$ is a strictly decreasing function of $p$, whereas $f^s$ is strictly increasing. 
\end{hypot}

This assumption ensures that an increase in price is reflected on the deterministic portion of decreasing demand dynamics and increasing supply dynamics.

We also adopt the assumption below for initial values of the processes and the disturbance process:

\begin{hypot}\label{ass:init}
$\{d_0,s_0,p_0\in  \R\}$ are mutually independently distributed bounded initial conditions, and $\{w^d,w^s\}$ are mutually independent and independent of the initial conditions. Instantaneous variances of the disturbance processes, $\sigma_d^2,\sigma_s^2$, are bounded. 
\end{hypot}

We adopt the stepwise price adjustment model \cite{2004HCM_TAC} for the optimizer (so called regulator agent $R$), where the bounded input control process $(u_t;\, t\geq 0 )$ controls the amount of the increment. The price process controlled by agent $R$'s input is defined as
\begin{equation}\label{eqn:control_cont}
dp_t = u_t dt, \quad |u_t|\leq u_{max}, \quad t\geq0.
\end{equation}

The actions of $R$ is the set $\{ u:\lvert u \rvert < u_{max}, u\in \R, u_{max}>0\}$ which is simply the constrained price adjustment. $R$ observes the demand and supply processes and taking into consideration their dynamics, cost function and the constraint on price increment, takes an action in terms of increasing or decreasing the power price. This action is intended to control market dynamics by only applying increments on the price process. 

Following \cite{2010CM}, the individual loss functions of the consumer and supplier are defined respectively:
\begin{align*}
g^d(d,s,p) &= p \cdot s - v\cdot \min(d,s) + c_{bo}(r), \\
g^s(d,s,p) &= c(s)-p \cdot s.
\end{align*}
Here, $c(s)\in \b{C}_b^2:\R \rightarrow \R_+$, with polynomial growth with power $k_1$ or less, i.e., $\lvert c(s) \rvert \leq \lvert c \rvert (1+s^{k_1})$, where $\b{C}_b^2$ denotes the family of all bounded functions which are twice differentiable. The function $c(s)$ is the production cost, and is strictly convex and strictly increasing with respect to $s$. One needs to work on a realistic production cost function in order to have a reasonable power market model. We note that in real power markets, production cost is not a convex function. The startup and shutdown costs, transmission line constraints, weather fluctuations all affect the production cost function. However, if one neglects the startup and shutdown costs, the cost function resembles a convex function \cite[see Figure 1]{2008Ma}. For our model we will assume a continuous convex cost. The constant $v\in(0,\infty)$ is the value the consumer obtains for a unit of power. The blackout is denoted by $c_{bo}(r)\in \mathbf{C}_b^2:\mathbb{R}\rightarrow \mathbb{R}_+$, with polynomial growth with power $k_2$ or less, i.e., $\lvert c_{bo}(r)\rvert \leq \lvert c_{bo} \rvert (1+r^{k_2})$, is convex, zero on $[0,\infty)$ and strictly decreasing on $(-\infty,0)$, where $r$ denotes the reserve, $r := s - d$. In other words, if the total consumption in the system can not be met, blackout cost is paid.
In the spot market, the consumer, $D$, pays $p \cdot s$, the price of all the supply bought, to the supplier, $S$. Note that $v$ is multiplied by the supplied portion of his demand. Blackout cost $c_{bo}(\cdot)$ is a function of the unmet demand. Further note that the supplier $S$ pays for all the cost of production, and gains unit price multiplied with all the units of supply bought by the consumer agent $D$. Finally, we employ the following integrated expected social cost function that is simply the sum of the consumer $D$ and the supplier $S$ loss functions integrated in time:
\begin{equation}\label{eqn:cost_int_cont} 
J(x,u) = \mathbb{E}\int_0^T \left[ -v\cdot \min(d_t,s_t)+c(s_t)+c_{bo}(r_t)\right] dt.
\end{equation}
In the section that follows, we consider the optimality of the cost function presented above with the dynamics \eqref{eqn:dynamics_cont}, the control \eqref{eqn:control_cont} and the cost function \eqref{eqn:cost_int_cont} under \ass{ass:fric}, \ass{ass:incr_dyna_func} and \ass{ass:init}.

\section{CENTRALIZED CONTROL FORMULATION}\label{sec:CCF}
In this section we analyze the optimal control problem in terms of the state vector $x:=(d,s,p)\t$. As stated before, this is a centralized control problem for the regulator agent $R$. In principle $R$'s objective is to regulate demand and supply processes using the price increments as the control tool, so that the best social outcome is achieved. In this section we show that the optimal control of the regulator is a ``bang-bang" control, which leads to volatile prices. We write (\ref{eqn:dynamics_cont}) and (\ref{eqn:control_cont}) in vector form with stochastic dynamics as
\begin{equation}\label{eqn:system_CCF}
dx = \psi dt + Gd\omega,\quad t\geq 0,
\end{equation}
where $\w$ is a $3\times 1$ standard Wiener process. We set $x := (d,s,p)\t,\psi := (f\t(\cdot),u)\t$ and,
\begin{equation*}
f(d,s,p) = \left( \begin{array}{c}
f^{d}(d,p)\\
f^{s}(s,p) \end{array} \right), \enspace G = \left( \begin{array}{ccc}
\sigma_d & 0 & 0\\
0 & \sigma_s & 0\\
0 & 0 & 0 \end{array} \right).
\end{equation*}

The loss function of \eqref{eqn:cost_int_cont} is rewritten here as $ g(x) = g(d,s,p) = -v\cdot \min(d,s)+c(s)+c_{bo}(s-d)$. The admissible control for the regulator is specified as $\mathcal{U}=\{u(\cdot):u$ adapted to $\sigma(x_s,s \leq t)$ and $u(t)\in U = [-u_{max},u_{max}], t\geq 0\}$. Therefore, the regulator can at most increase or decrease the price with unit $u_{max}$ and $-u_{max}$ at each iteration. Finally, the cost associated with \eqref{eqn:system_CCF} and a control $u$ is specified to be $J(x_0,u)=\mathbb{E}[\int_0^T g(d_t,s_t,p_t)dt]$. Further, we set the value function
\begin{equation}\label{eqn:V_function}
V(0,x_0) \enspace \triangleq \enspace \inf_{u\in\mathcal{U}}J(x_0,u).
\end{equation}
The theorem that follows claims the existence and uniqueness of the optimal control to the problem \eqref{eqn:V_function}.

\begin{thm}\label{thm:CCF_exis}
There exists a unique $\hat{u}\in\mathcal{U}$ such that $J(x_0, \hat{u}) = \inf_{u\in\mathcal{U}}J(x_0,u)$, where $x_0 = (d_0,s_0,p_0)\t$ is the initial state at time $t_0=0$, and if $\tilde{u}\in\mathcal{U}$ is another control such that $J(x_0,\tilde{u})=J(x_0,\hat{u})$, then $\mathbb{P}_{\Omega}(\tilde{u}_s\neq \hat{u}_s)>0$ only on a set of times $s\in[0,T]$ of Lebesgue measure zero. 
\end{thm}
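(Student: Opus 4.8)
The plan is to attack this through dynamic programming and a verification argument, exploiting the fact that the control enters the dynamics only affinely. First I would write the HJB equation for the value function $V(t,x)$ of \eqref{eqn:V_function}:
\[
\p_t V + \inf_{u\in[-u_{max},u_{max}]}\L^u V + g(x) = 0, \qquad V(T,x)=0,
\]
where $\L^u V = f^d \p_d V + f^s \p_s V + u\,\p_p V + \tfrac12\sigma_d^2\p_{dd}V + \tfrac12\sigma_s^2\p_{ss}V$ is the generator of \eqref{eqn:system_CCF}. The only place the control appears is the term $u\,\p_p V$, so the pointwise minimization is explicit,
\[
\inf_{|u|\le u_{max}} u\,\p_p V = -u_{max}\,\lvert \p_p V\rvert, \qquad \hat u(t,x) = -u_{max}\,\mathrm{sgn}\big(\p_p V(t,x)\big),
\]
which is exactly the \emph{bang--bang} form announced in this section and reduces the HJB to the quasilinear parabolic equation $\p_t V + \L^0 V - u_{max}\lvert \p_p V\rvert + g = 0$, with $\L^0$ the uncontrolled part of the generator.

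Second, I would establish that this PDE admits a solution $V$ regular enough that $\p_p V$ is defined $dt$-a.e.\ along trajectories. Here \ass{ass:fric} supplies the Lipschitz and linear-growth control on $f^d,f^s$, \ass{ass:init} the boundedness of the initial data and of $\sigma_d^2,\sigma_s^2$, and the $\b{C}_b^2$/polynomial-growth hypotheses on $c$ and $c_{bo}$ give the requisite regularity and growth of the running cost $g$; standard results for semilinear HJB equations then yield existence, uniqueness and smoothness of $V$. Given $V$, existence of an optimal control follows from the verification theorem: applying It\^o's formula to $V(t,x_t)$ along an arbitrary admissible $u$, taking expectations so the local-martingale term drops, and using the HJB inequality $\p_t V + \L^{u}V \ge -g$ yields $J(x_0,u)\ge V(0,x_0)$, with equality precisely when $u_t$ attains the infimum for $dt\times\mathbb{P}$-a.e.\ $(t,\omega)$. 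The feedback $\hat u$, along its closed-loop trajectory, realizes this equality, establishing optimality and the value $V(0,x_0)=J(x_0,\hat u)$.

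Third, for uniqueness I would note that any other optimal $\tilde u$ must likewise satisfy the pointwise condition $\tilde u_t\,\p_p V(t,x_t) = -u_{max}\lvert \p_p V(t,x_t)\rvert$ for $dt\times\mathbb{P}$-a.e.\ $(t,\omega)$. On the event $\{\p_p V(t,x_t)\neq 0\}$ this forces $\tilde u_t = -u_{max}\,\mathrm{sgn}(\p_p V(t,x_t)) = \hat u_t$, so the two controls can differ only on the singular set $\Sigma \teq \{(t,\omega): \p_p V(t,x_t)=0\}$. The theorem therefore reduces to showing that $\Sigma$ is $dt\times\mathbb{P}$-null, equivalently that the switching function $t\mapsto \p_p V(t,x_t)$ vanishes only on a time-set of Lebesgue measure zero, almost surely.

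This last step is the main obstacle: ruling out a singular arc of positive duration. The idea is to differentiate the reduced HJB in $p$ and set $W\teq\p_p V$; since $g$ is independent of $p$, on any region where $W\equiv 0$ the equation collapses to a relation governed by $\p_p f^d\,\p_d V + \p_p f^s\,\p_s V$, and \ass{ass:incr_dyna_func} (strict monotonicity, so $\p_p f^d<0$ and $\p_p f^s>0$) together with the strict convexity of $c$ prevents such a relation from persisting on a set of positive measure. Making this rigorous---handling the nonsmoothness of $\lvert\cdot\rvert$ at the switching surface and the degeneracy of the diffusion in the $p$-direction---is the delicate part; once $\Sigma$ is shown to be null, the stated uniqueness is immediate.
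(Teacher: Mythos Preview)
Your route is genuinely different from the paper's, and it has a real gap at the regularity step. The paper does not touch the HJB equation for uniqueness. Instead it argues that $J$ is \emph{strictly convex} as a functional of the price path $p$; since $u\mapsto p$ via $p_t=p_0+\int_0^t u_s\,ds$ is affine, if $\hat u$ and $\tilde u$ are both optimal and generate price paths that differ on a set of positive $dt\times\mathbb P$-measure, then $\tfrac12(\hat u+\tilde u)\in\mathcal U$ produces the averaged price path and would yield strictly lower cost---a contradiction. Hence $\hat p\equiv\tilde p$ on $[0,T]$ with probability one, and differentiating $\int_0^s(\tilde u_\tau-\hat u_\tau)\,d\tau=\tilde p_s-\hat p_s\equiv 0$ gives $\hat u_s=\tilde u_s$ for a.e.\ $s$. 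This convexity argument is elementary and completely sidesteps any PDE regularity.

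Your plan, by contrast, hinges on the HJB having enough smoothness that $\p_p V$ is defined along trajectories. But the paper points out, immediately after stating the theorem, that a classical solution to the HJB \emph{does not exist} because $GG\t$ is not of full rank: there is no diffusion in the $p$-direction, the equation is degenerate parabolic, and the ``standard results for semilinear HJB equations'' you invoke for smoothness do not apply. This is exactly why the paper subsequently passes to viscosity solutions and the $\epsilon$-perturbation $\tfrac12\epsilon^2\,\p^2 V/\p p^2$. Without $\p_p V$ in hand, neither the bang--bang characterization nor the singular-arc analysis can start. You flag the degeneracy only at the very end, as a wrinkle in ruling out singular arcs, but it actually bites earlier, at the regularity step you treat as routine; and even the singular-arc exclusion remains a sketch. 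The paper's convexity route avoids all of this.
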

\textit{Proof:} The proof is given in Appendix \ref{thm:CCF_exis_pro}.

Now that we have shown the existence and uniqueness of a control, we check for approaches to compute the optimal solution. For a function class $\mathcal{G}$: (i) $V \in \b{C}([0,T]\times \R^3)$, (ii) $ \lvert V \rvert \leq C_v(1+d^{k_1}+s^{k_2})$ where $C_v,k_1,k_2$ depend on $V$, (iii) $V(T,x)=0$, we write the HJB Equation
\begin{equation}\label{eqn:HJB_CCF} 
-\frac{\p V}{\p t} + \sup_{u \in \mathcal{U}} \left \{-\frac{\p\t V}{\p x}\psi \right \} - \frac{1}{2}\tr \left (\frac{\p^2 V}{\p x^2}GG\t \right ) - g(\cdot) = 0.
\end{equation} 
A classical solution to the HJB Equation \eqref{eqn:HJB_CCF} does not exist as $GG\t$ is not of full rank in \eqref{eqn:system_CCF} \cite{1999BO}. Therefore, viscosity solutions are adopted.

\begin{defn}\textit{Viscosity solution:} \cite[Sec.\ 4, Def.\ 5.1]{1999YZ}

A function $\underline{v}(t,x) \in \b{C}([0,T]\times \R^3)$ is a viscosity subsolution to the HJB equation \eqref{eqn:HJB_CCF} if $\underline{v}|_{t=T}\leq 0$, and for any $\phi(t,z)\in \b{C}^{1,2}([0,T]\times \R^3)$, whenever $\underline{v}-\phi$ obtains a local maximum at $(t,x)\in [0,T)\times \R^3$, we have
\begin{equation}\label{eqn:vis}
-\frac{\p \phi}{\p t} + \sup_{u \in \mathcal{U}} \left \{-\frac{\p\t \phi}{\p x}\psi \right \} - \frac{1}{2}\tr \left (\frac{\p^2 \phi}{\p x^2}GG\t \right ) - g(\cdot) \leq 0.
\end{equation} 
A function $\overline{v}(t,x)\in \b{C}([0,T]\times \R^3)$ is called a viscosity supersolution to \eqref{eqn:HJB_CCF} if $\overline{v}|_{t=T}\geq 0$, and whenever $\overline{v}-\phi$ takes a local minimum at $(t,x)\in[0,T)\times \R^3$, in \eqref{eqn:vis} the inequality is changed to $``\geq"$. A value function $v(t,x)$ is a viscosity solution if it is a viscosity subsolution and a viscosity supersolution. 
\end{defn}

\begin{thm}\label{thm:CCF_HJBexis}
The value function defined in \eqref{eqn:V_function} is the unique viscosity solution to the HJB equation \eqref{eqn:HJB_CCF} in the class $\mathcal{G}$.
\end{thm}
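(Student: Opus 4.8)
The plan is to prove the statement in two stages, following the standard dynamic programming route for degenerate stochastic control problems as developed in \cite{1999YZ}: first establish that the value function $V$ of \eqref{eqn:V_function} \emph{is} a viscosity solution of \eqref{eqn:HJB_CCF} and lies in $\mathcal{G}$, and then invoke a comparison principle to conclude that it is the \emph{unique} such solution.

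For the existence part, the first step is to verify membership in $\mathcal{G}$. The terminal condition $V(T,x)=0$ is immediate from the definition of $J$. The polynomial growth bound $|V|\leq C_v(1+d^{k_1}+s^{k_2})$ follows from the polynomial growth of the running cost $g$ (inherited from $c$ and $c_{bo}$) together with standard moment estimates for the state process $(x_t;\,t\geq 0)$; since the drift coefficients $f^d,f^s$ have at most linear growth in $(d,s,p)$ by \ass{ass:fric} and the control enters only the noise-free price equation with $|u|\leq u_{max}$, the usual bounds $\E[\sup_{t\le r\le T}|x_r|^q]\le C(1+|x|^q)$ hold uniformly in $u\in\mathcal{U}$. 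Continuity of $V$ in $(t,x)$ then follows by estimating the dependence of the controlled trajectories on their initial data, again using the linear growth of the coefficients and the boundedness of $U$.

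The second step of existence is the Dynamic Programming Principle: for every stopping time $\tau$ valued in $[t,T]$,
\[ V(t,x) = \inf_{u\in\mathcal{U}}\E\Big[\int_t^\tau g(x_r)\,dr + V(\tau,x_\tau)\Big]. \]
Granting the DPP, I would derive the subsolution and supersolution inequalities by the classical test-function argument. Fix $\phi\in\b{C}^{1,2}$ with $V-\phi$ attaining a local maximum at $(t,x)$; apply It\^o's formula to $\phi(r,x_r)$ under an arbitrary constant control $u\in U$, insert the result into the DPP, divide by the length of a small time interval, and let it shrink to zero. Because $\psi$ is affine in $u$ and $U=[-u_{max},u_{max}]$ is compact, the $\sup_{u}$ is attained (the bang-bang structure already noted in this section), and the resulting inequality is exactly \eqref{eqn:vis}; the supersolution inequality is obtained symmetrically at a local minimum using $\varepsilon$-optimal controls.

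For uniqueness, the plan is to apply a comparison theorem for viscosity solutions of the degenerate parabolic equation \eqref{eqn:HJB_CCF}: if $\underline{v}$ is a subsolution and $\overline{v}$ a supersolution in $\mathcal{G}$ with $\underline{v}|_{t=T}\le\overline{v}|_{t=T}$, then $\underline{v}\le\overline{v}$ on $[0,T]\times\R^3$. The argument is the doubling-of-variables technique combined with the Crandall--Ishii maximum principle for semicontinuous functions. The \emph{main obstacle} I anticipate is precisely this comparison step, for two reasons. First, the diffusion $GG\t$ is degenerate (rank two), so one cannot rely on uniform parabolicity and must control the second-order terms solely through the Lipschitz continuity of the coefficients in the doubled variables. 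Second, solutions are only of polynomial growth on the unbounded state space $\R^3$, so the supremum in the doubling argument need not be attained; this is handled by adding a penalization term such as $\beta(1+|x|^{2m})$, with $m$ chosen above the growth exponents $k_1,k_2$, to force a maximizer into a compact set, proving comparison with the penalization and then sending $\beta\to 0$. Once comparison is established, applying it in both directions between any two viscosity solutions in $\mathcal{G}$ sharing the terminal value $0$ yields uniqueness, which together with the existence established above completes the proof.
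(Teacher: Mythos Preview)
Your proposal is correct and follows the standard dynamic-programming route (membership in $\mathcal{G}$ via moment estimates, DPP, test-function derivation of the sub/supersolution inequalities, and a comparison principle with penalization to handle the degenerate diffusion and polynomial growth on $\R^3$). The paper, by contrast, does not carry out any of this directly: it simply checks that hypotheses H4.1 and H4.2 of \cite{2005HCM_SIAM} are met and then invokes Theorems~4.1 and~4.3 of that reference for existence and uniqueness, respectively. In substance the two approaches coincide---the machinery you sketch is precisely what those cited theorems encapsulate---so the difference is one of presentation: your version is self-contained and makes explicit where the degeneracy of $GG^\top$ and the unbounded state space enter, while the paper's version is a two-line appeal to an external result. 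Either is acceptable; if you retain your outline, it would be worth also citing \cite{2005HCM_SIAM} (or the relevant chapter of \cite{1999YZ}) so the reader can locate the full comparison argument rather than reconstruct it.
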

\begin{proof} H4.1 and H4.2 of \cite{2005HCM_SIAM} are satisfied. Theorem 4.1 of \cite{2005HCM_SIAM} proves that  $V$ defined in \eqref{eqn:V_function} is a viscosity solution to the HJB equation \eqref{eqn:HJB_CCF}, and Theorem 4.3 of \cite{2005HCM_SIAM} proves that the solution is a unique solution to \eqref{eqn:HJB_CCF} in the class $\mathcal{G}$.
\end{proof}

\subsection{Perturbation Method}\label{sec:CCF_PM}

In order to make the $GG\t$ matrix full rank, we add $(1/2)\epsilon^2(\p^2V/\p p^2)$ to \eqref{eqn:HJB_CCF} \cite{1975FR}. For a function class $\mathcal{G}'$: (i) $V \in \b{C}^{1,2}([0,T]\times \R^3)$, (ii) $ \lvert V \rvert \leq C_v(1+d^{k_1}+s^{k_2})$ where $C_v,k_1,k_2$ depend on $V$, (iii) $V(T,x)=0$, we write the HJB Equation
\begin{multline}\label{eqn:HJB_exp}
-\frac{\p V^p}{\p t} - \frac{\p V^p}{\p d}f^d(d,p) -   \frac{\p V^p}{\p s}f^s(s,p) + \sup_{u \in \mathcal{U}} \left \{-\frac{\p V^p}{\p p}u \right \} \\
-\frac{1}{2}\sigma_d^2\frac{\p^2 V^p}{\p d^2} - \frac{1}{2} \sigma_s^2 \frac{\p^2 V^p}{\p s^2} - \frac{1}{2}\epsilon^2\frac{\p^2 V^p}{\p p^2} - g(d,s,p) = 0,
\end{multline}
where $V^p(T,x)=0$.

\begin{lem}\label{lem:GS} \cite[Sec.\ 6, Theorem 4]{1972GS}
For each $k=1,2,...$
\[ \E |x(t)|^k \leq C_k (1+\E |x(s)|^k),\quad s\leq t \leq T, \]
where the constant $C_k$ depends on $k,\, T-s$, and $\psi$.  
\end{lem}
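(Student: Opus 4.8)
The plan is to read the statement as a standard moment bound for the solution of the stochastic differential equation \eqref{eqn:system_CCF} and to obtain it by applying It\^o's formula to a suitable power of $|x|$ and then closing the resulting integral inequality with Gronwall's lemma. The one structural fact the estimate rests on is a linear growth bound $|\psi(x)|\leq K(1+|x|)$ on the drift $\psi=(f^d,f^s,u)\t$. This follows from our standing assumptions on the dynamics: the control component obeys $|u|\leq u_{max}$, the drift coefficients $f^d,f^s$ grow at most linearly in $(d,s,p)$ (as in the affine example $f(x,p)=A(t)x+B(t)p$ following \ass{ass:fric}, whose $p$-dependence is controlled by the friction bound), and the diffusion $GG\t$ is bounded by \ass{ass:init}. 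The growth constant $K$, and hence every constant produced below, depends only on $\psi$ (the fixed matrix $G$ being bounded), which accounts for the $\psi$-dependence asserted in the statement.

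To carry out the It\^o step while avoiding the lack of smoothness of $x\mapsto|x|^k$ at the origin and for odd $k$, I would work with the mollified function $V(x)\teq(1+|x|^2)^{k/2}$, which lies in $\b{C}^2(\R^3)$ and satisfies the two-sided comparison $|x|^k\leq V(x)\leq 2^{k-1}(1+|x|^k)$. A direct differentiation gives, for the generator $\mathcal{L}V\teq(\p\t V/\p x)\psi+\tfrac12\tr((\p^2 V/\p x^2)GG\t)$, the pointwise bound $\mathcal{L}V(x)\leq L_k\,V(x)$: the first-order term is controlled by $k(1+|x|^2)^{k/2-1}|x|\,|\psi(x)|\leq C\,(1+|x|^2)^{k/2}$ using the linear growth of $\psi$, while the second-order term is controlled by the boundedness of $GG\t$ together with $(1+|x|^2)^{k/2-1}\leq V(x)$. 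Here $L_k$ depends on $k$, $K$ and $\|GG\t\|$ only.

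Next I would introduce the stopping times $\tau_n\teq\inf\{t\geq s:|x_t|\geq n\}$, apply It\^o's formula to $V(x_{t\wedge\tau_n})$, and take expectations. Because the integrand of the stochastic term is bounded on $[s,t\wedge\tau_n]$, that term is a genuine martingale and drops out, leaving $\E V(x_{t\wedge\tau_n})\leq \E V(x_s)+L_k\int_s^t \E V(x_{r\wedge\tau_n})\,dr$. Gronwall's inequality then yields $\E V(x_{t\wedge\tau_n})\leq e^{L_k(T-s)}\,\E V(x_s)$ uniformly in $n$, and letting $n\to\infty$ with Fatou's lemma on the left gives $\E V(x_t)\leq e^{L_k(T-s)}\,\E V(x_s)$. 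Converting back through the comparison between $V$ and $|x|^k$ produces $\E|x_t|^k\leq 2^{k-1}e^{L_k(T-s)}(1+\E|x_s|^k)$, which is exactly the claim with $C_k\teq 2^{k-1}e^{L_k(T-s)}$ depending on $k$, $T-s$ and $\psi$.

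The main obstacle is precisely the passage to the limit in this localization: it presupposes that the solution does not explode and has finite $k$-th moments on $[s,T]$, so that $\E V(x_s)$ is finite and Fatou applies. I would resolve this in the standard way by running the Gronwall estimate on the \emph{stopped} process first, obtaining a bound on $\E V(x_{t\wedge\tau_n})$ that is uniform in $n$; this simultaneously certifies finiteness of the moments and furnishes the limit. It is the linear growth of $\psi$ that makes this bound uniform and rules out blow-up in finite time, so verifying that linear growth genuinely follows from \ass{ass:fric} and the assumed structure of the dynamics is the one place where care is required.
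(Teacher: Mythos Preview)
The paper does not supply its own proof of this lemma; it is quoted directly from \cite[Sec.\ 6, Theorem 4]{1972GS} as a standard moment estimate for SDE solutions. Your argument via It\^o's formula applied to $(1+|x|^2)^{k/2}$, localization by first-exit times, and Gronwall is precisely the classical route to this result and is correct in outline and detail. The one point you yourself flag---that the linear growth of $\psi$ in all of its arguments is not quite pinned down by \ass{ass:fric} alone, since \ass{ass:fric} bounds only $\partial f/\partial p$ and the values at the origin---is the only gap, and it is a gap in the paper's hypotheses rather than in your reasoning; the cited reference assumes linear growth outright.
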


\begin{lem}\label{lem:CCF_exisparder} \cite[Lemma 6.2]{1975FR}
Let $B\subset \R^3$ be bounded, $V^p$ a solution of \eqref{eqn:HJB_exp} in $\b{C}^{1,2}((0,T)\times \R^3)$ with $V^p$ continuous in $\b{C}^{1,2}([0,T]\times \R^3)$ and $V^p(T,x)=0$. Then there exists a constant $M_B$ such that 
\[ |V^p(t,x)| \leq M_B,\quad |V_t^p|\leq M_B \quad \text{for all } x\in B,\, 0\leq t < T,\]
where the constant $M_B$ depends only on $B,\, T$, the constant $C_1$ in \ass{ass:fric} and $c$, $c_{bo}$ defined for $c(s)$ and $c_{bo}(r)$.
\end{lem}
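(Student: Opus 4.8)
The plan is to identify $V^p$ with the value function of the \emph{nondegenerate} stochastic control problem attached to \eqref{eqn:HJB_exp}, and then read off both bounds from the resulting probabilistic representation together with the moment estimate of Lemma \ref{lem:GS}. The perturbed equation \eqref{eqn:HJB_exp} is the HJB equation for the dynamics \eqref{eqn:dynamics_cont} with the price equation replaced by $dp_t = u_t\,dt + \epsilon\,dw_t^p$, so that the diffusion is now full rank. Since by hypothesis $V^p\in\b{C}^{1,2}$ solves \eqref{eqn:HJB_exp} with $V^p(T,x)=0$ and lies in the growth class $\mathcal{G}'$, a verification (It\^o) argument—whose martingale step is justified by the growth bound (ii) of $\mathcal{G}'$ and the moments of Lemma \ref{lem:GS}—identifies
\[ V^p(t,x) = \inf_{u\in\mathcal{U}} \E\Big[\textstyle\int_t^T g(x_\tau)\, d\tau \;\Big|\; x_t = x\Big], \qquad g(x)=-v\min(d,s)+c(s)+c_{bo}(s-d). \]

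First I would establish the pointwise bound. The polynomial growth hypotheses on $c$ and $c_{bo}$, together with $|\min(d,s)|\leq |x|$, give $|g(x)|\leq C(1+|x|^k)$ with $k=\max(k_1,k_2,1)$ and $C$ depending only on $v$, $c$, $c_{bo}$. Conditioning on $x_t=x$ and invoking Lemma \ref{lem:GS} gives $\E[\,|x_\tau|^k \mid x_t=x\,]\leq C_k(1+|x|^k)$ for $t\leq \tau\leq T$, whence
\[ |V^p(t,x)| \leq \int_t^T C\big(1+\E[|x_\tau|^k\mid x_t=x]\big)\,d\tau \leq T\,C\big(1+C_k(1+|x|^k)\big). \]
For $x$ ranging over the bounded set $B$ the right-hand side is bounded by a constant depending only on $B$, $T$, $C_1$, $c$ and $c_{bo}$, which is the claimed $M_B$.

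For the time-derivative bound I would exploit that, as written, the drift, diffusion and running cost in \eqref{eqn:HJB_exp} carry no explicit time dependence, so the problem is time-homogeneous and $V^p(t,x)=W(T-t,x)$ with $W(\theta,x)=\inf_u \E[\int_0^\theta g(x_s)\,ds \mid x_0=x]$; hence $V_t^p=-W_\theta$. To bound $W_\theta$ I would compare horizons $\theta$ and $\theta+h$ by dynamic programming: appending an arbitrary admissible control on $[\theta,\theta+h]$ to the optimizer for $W(\theta,\cdot)$ gives $W(\theta+h,x)-W(\theta,x)\leq \E[\int_\theta^{\theta+h} g(x_s)\,ds]$, while restricting the optimizer for $W(\theta+h,\cdot)$ to $[0,\theta]$ gives the reverse inequality. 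Both right-hand sides are short-interval integrals of $g$, again controlled by $h\,C(1+C_k(1+|x|^k))$ via Lemma \ref{lem:GS}, so $|W(\theta+h,x)-W(\theta,x)|\leq h\,M_B$ uniformly for $x\in B$; dividing by $h$ and letting $h\downarrow 0$ yields $|V_t^p|\leq M_B$.

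I expect the main obstacle to be this second bound. Making the comparison rigorous needs the existence of (near-)optimal admissible controls, which is supplied by Theorem \ref{thm:CCF_exis}, and it needs Lemma \ref{lem:GS} to apply to the perturbed dynamics, i.e.\ that the augmented drift $\psi=(f^d,f^s,u)\t$ has the linear growth its hypothesis requires; this follows from \ass{ass:fric} (bounded $p$-derivatives and $f(0,0)\leq C_1$) and boundedness of $u$ on $U$, with the state growth of $f^d,f^s$ as in the linear example. The non-smoothness of $g$ along $\{d=s\}$ plays no role, since only the representation and the growth bound on $g$ are used, not derivatives of $g$.
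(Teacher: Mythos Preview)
Your proposal is correct and follows the same line as the cited Fleming--Rishel argument: identify $V^p$ with the value function of the nondegenerate perturbed problem via a verification (It\^o) step, bound $|V^p|$ on $B$ through the cost representation combined with the moment estimate of Lemma~\ref{lem:GS}, and then obtain $|V_t^p|$ from the time-homogeneity by comparing value functions at nearby horizons. One small caveat: the lemma as stated does not list polynomial growth of $V^p$ among its hypotheses, yet you (correctly) invoke membership in $\mathcal{G}'$ to justify the martingale step in verification; this is harmless here because in the only place the paper uses the lemma (Theorem~\ref{thm:CCF_HJBperexis}) the approximants $V^d$ have compactly supported data and hence automatically satisfy the growth condition, but you should flag that you are tacitly adding this hypothesis rather than deriving it.
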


\begin{thm}\label{thm:CCF_HJBperexis}
The perturbed HJB equation \eqref{eqn:HJB_exp} has a unique classical solution in the class $\mathcal{G}'$ for all $\e>0$.
\end{thm}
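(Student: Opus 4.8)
\textit{Proof proposal:} The plan is to exploit that the added term $\tfrac{1}{2}\e^2\,\p^2 V^p/\p p^2$ renders the diffusion matrix uniformly elliptic, so that the degenerate problem \eqref{eqn:HJB_CCF} is replaced by a nondegenerate semilinear parabolic equation to which classical PDE theory applies. First I would put \eqref{eqn:HJB_exp} in standard form by reversing time, $\tau = T-t$, and evaluating the supremum explicitly: since $\mathcal{U}=[-u_{max},u_{max}]$, one has $\sup_{u\in\mathcal{U}}\{-(\p V^p/\p p)u\} = u_{max}\lvert \p V^p/\p p\rvert$. The equation then reads as a Cauchy problem $\p V^p/\p\tau = \mathcal{L}V^p + H(x,\nabla V^p) + g$, with initial data $V^p|_{\tau=0}=0$, where $\mathcal{L}=\tfrac12\sigma_d^2\,\p^2/\p d^2 + \tfrac12\sigma_s^2\,\p^2/\p s^2 + \tfrac12\e^2\,\p^2/\p p^2$ is uniformly elliptic for every $\e>0$, and $H(x,\nabla V^p)=f^d(d,p)\,\p V^p/\p d + f^s(s,p)\,\p V^p/\p s - u_{max}\lvert \p V^p/\p p\rvert$ is Lipschitz in the gradient with constant controlled by $C_1$ and $u_{max}$ through \ass{ass:fric}.

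Next I would establish existence. Because the domain $\R^3$ is unbounded and both the drift and the source grow, I would proceed by truncation: solve the problem on an expanding family of balls $B_R\subset\R^3$ with compatible Dirichlet data, invoking the standard existence theory for uniformly parabolic semilinear equations on bounded cylinders (fixed-point arguments combined with interior Schauder estimates). The uniform ellipticity coming from $\e^2>0$ is exactly what makes this step available. I would then obtain $R$-independent a priori bounds from Lemma \ref{lem:CCF_exisparder}, which supplies local bounds on $V^p$ and $\p V^p/\p t$ depending only on the bounded set, $T$, $C_1$, and the growth constants of $c,c_{bo}$; interior Schauder estimates upgrade these to uniform local bounds on the first and second spatial derivatives. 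Passing to the limit $R\to\infty$ along a subsequence yields a classical solution $V^p\in\b{C}^{1,2}((0,T)\times\R^3)$, continuous up to $\tau=0$ with $V^p(T,x)=0$.

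It remains to confirm membership in $\mathcal{G}'$ and uniqueness. For the growth bound (ii), I would use the stochastic representation of $V^p$ as the value function of the perturbed control problem driven by the full-rank dynamics $dx = \psi\,dt + G_\e\,d\w$, where $G_\e$ adds an independent noise $\e\,dw^p$ to the price equation; combining the polynomial growth of $g$ — inherited from $\lvert c(s)\rvert\leq\lvert c\rvert(1+s^{k_1})$ and $\lvert c_{bo}(r)\rvert\leq\lvert c_{bo}\rvert(1+r^{k_2})$ together with the linear growth of $v\min(d,s)$ — with the moment estimate of Lemma \ref{lem:GS}, uniform over admissible controls since the drift is controlled by \ass{ass:fric}, yields $\lvert V^p\rvert\leq C_v(1+d^{k_1}+s^{k_2})$. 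For uniqueness I would rely on the comparison principle for the uniformly parabolic equation: the difference $w$ of two solutions in $\mathcal{G}'$ satisfies a linear uniformly parabolic equation, since $\lvert \p V_1^p/\p p\rvert - \lvert \p V_2^p/\p p\rvert = b(x,\tau)\,\p w/\p p$ for a bounded $b$, and the maximum principle within the admissible growth class forces $w\equiv 0$. The main obstacle is the unboundedness of $\R^3$ together with the polynomial growth of $g$ and the drift: keeping the a priori estimates $R$-independent while staying inside the growth class $\mathcal{G}'$ is the delicate point, and it is precisely here that Lemma \ref{lem:GS} and Lemma \ref{lem:CCF_exisparder} are essential.
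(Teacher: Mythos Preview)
Your proposal is correct and follows essentially the same skeleton as the paper: exploit the uniform ellipticity supplied by the $\tfrac12\e^2\,\p^2/\p p^2$ term, localize to bounded sets, invoke the a priori bounds of Lemma~\ref{lem:CCF_exisparder}, use interior Schauder/H\"older estimates to gain compactness, and extract a limit via Arzela--Ascoli. Two points of genuine divergence are worth noting. First, you localize by solving Dirichlet problems on balls $B_R$, whereas the paper multiplies the drift and source by a smooth cutoff $h^d(x)$ and solves on all of $\R^3$; the cutoff route sidesteps the question of what ``compatible Dirichlet data'' should be and keeps the equation posed on the whole space at every stage, which is closer to the treatment in \cite{1975FR}. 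Second, for uniqueness you run a PDE comparison/maximum-principle argument on the linearized difference, while the paper instead identifies $V^p$ with the value function of the perturbed control problem via It\^o's formula (effectively deferring uniqueness to the verification step in Theorem~\ref{thm:CCF_HJBperconv}); your route is more self-contained on the PDE side, the paper's is more in keeping with the stochastic-control viewpoint and yields the interpretation of $V^p$ for free.
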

\begin{proof}
We employ an approximation approach. Let us first take $0 < \e < 1$. For integer $d\geq 1$, let $h^d(x)$ be such that $h^d(x)=1$ for $\lvert x \rvert \leq d$, $h^d(x)=0$ for $\lvert x \rvert \geq d+1$, and $\lvert h^d_x \rvert \leq 2$. Let $V^d$ be the solution to
\begin{multline}\label{eqn:HJB_expaux}
-\frac{\p V^d}{\p t} - \frac{\p V^d}{\p d}f^d(d,p)h^d(x) -   \frac{\p V^d}{\p s}f^s(s,p)h^d(x) + \sup_{u \in \mathcal{U}} \left \{-\frac{\p V^d}{\p p}u \right \} h^d(x)\\
-\frac{1}{2}\sigma_d^2\frac{\p^2 V^d}{\p d^2} - \frac{1}{2} \sigma_s^2 \frac{\p^2 V^d}{\p s^2} - \frac{1}{2}\epsilon^2\frac{\p^2 V^d}{\p p^2} - g(d,s,p) h^d(x)= 0,
\end{multline}
where $V^d(T,x)=0$.

For fixed $d_0>1$ and $D=(0,T)\times (\lvert x \rvert < d_0)$, for any $d\geq d_0$, $V^d(t,x)$ satisfies \eqref{eqn:HJB_exp} for $|x|<d_0$. Lemma \ref{lem:CCF_exisparder} ensures that $V^d,\, V^d_d,\, V^d_s, \, V^d_p$ are uniformly bounded on $D$. For any $D'=(0,T)\times (|x|<d')$, $0<d'<d_0$, by local estimates
\[
\lVert V^d \rVert_{\l,D}^{(2)} \teq \lVert V^d \rVert_{\l,D} + \lVert V_t^d \rVert_{\l,D}+\sum_{i=1}^3 \lVert V_{x_i}^d\rVert_{\l,D} + \sum_{i,j=1}^3 \lVert V_{x_i,x_j}^d \rVert_{\l,D} \]
is uniformly bounded, where $\lVert \cdot \rVert_{\l,D}$ denotes a Sobolev type $L^\l(D)$ norm, where $L^\l(K)$ denotes the space of $\l$-th power integrable functions on $K\subset Q$. Take $\l>3$, and by the H\"older estimates, $V_{x_i}^d$ satisfies a uniform H\"older condition on any compact subset of $D'$. Moreover, $V_t^d,V_{x_i,x_j}^d,d=d_0+1,d_0+2,...,$ satisfy a uniform H\"older condition on such a $D'$. At this point we employ Arzela-Ascoli theorem and take a subsequence $\{d_{k_q};q\geq 1\}$ such that $V^{d_{k_q}},V_t^{d_{k_q}},V_{x_i}^{d_{k_q}},V_{x_i,x_j}^{d_{k_q}}$ converge uniformly to $V^p,V_t^p,V_{x_i}^p,V_{x_i,x_j}^p$ on $D'$, respectively, as $q\conv\f$, where $V^p$ satisfies \eqref{eqn:HJB_exp} and is in the class $\mathcal{G}'$ due to the growth condition on $g$ and the compactness of $U$. In the next theorem, we use the It\=o's formula to show that $V^p$ is the value function to a related stochastic control system, and thus it is a unique solution to \eqref{eqn:HJB_exp} in the class $\mathcal{G}'$. 

\end{proof} 
\begin{thm}\label{thm:CCF_HJBperconv}
Let $x\in \R^3$ and $\epsilon>0$. Define $V^p$ as solution to \eqref{eqn:HJB_exp} and $V$ as solution to \eqref{eqn:HJB_CCF} for the admissible control set $\mathcal{U}$. Then $V^p\rightarrow V$ uniformly on $[0,T]$.
\end{thm}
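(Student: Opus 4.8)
The plan is to exploit the stochastic representation established at the end of the proof of Theorem~\ref{thm:CCF_HJBperexis}: $V^p$ is the value function of the controlled diffusion obtained from \eqref{eqn:system_CCF} by appending an independent noise $\e\,dw^p$ to the price equation, i.e.\ $dp_t=u_t\,dt+\e\,dw_t^p$, while $V$ is the value function of the unperturbed problem ($\e=0$). Both are infima over the same admissible class $\mathcal{U}$ of the running cost $\E\int_t^T g(x_s)\,ds$. I would therefore prove convergence by a direct coupling estimate that sandwiches $V^p$ between $V$ and an explicit $\e$-bound, which simultaneously yields uniformity in $t\in[0,T]$ and a convergence rate.

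First I would fix an admissible control $u$ and realize both trajectories on one probability space carrying $w^d,w^s$ and an independent $w^p$: let $x_s^\e=(d_s^\e,s_s^\e,p_s^\e)\t$ solve the perturbed system under $u$ and $x_s=(d_s,s_s,p_s)\t$ the unperturbed one, both started at $x$ at time $t$. Since the price drift is $u$ in both systems, $p_s^\e-p_s=\e\,w_s^p$, so $\E\sup_{t\le r\le T}|p_r^\e-p_r|^2$ is of order $\e^2$. The perturbation propagates to the $d$ and $s$ components only through the $p$-dependence of $f^d,f^s$, whose $p$-derivative is bounded by $C_1$ via \ass{ass:fric}; together with the Lipschitz regularity of $f^d,f^s$ in their own state variable (as in the affine example $A(t)x+B(t)p$) a Gronwall argument gives $\E\sup_{t\le r\le T}|x_r^\e-x_r|^2\le C\e^2$ with $C=C(T,C_1)$. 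Because $g$ is locally Lipschitz with polynomial growth (inherited from $c,c_{bo}$), Cauchy--Schwarz together with the moment bound of Lemma~\ref{lem:GS} turns this into
\[
\E\int_t^T |g(x_s^\e)-g(x_s)|\,ds \;\le\; C\,(1+|x|^m)\,\e ,
\]
for some $m$ and some $C$ independent of $u$ and of $\e\in(0,1)$.

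Running this comparison in both directions closes the argument. Applying a control that is $\d$-optimal for the unperturbed problem to the perturbed system gives $V^p(t,x)\le V(t,x)+\d+C(1+|x|^m)\e$; applying a control $\d$-optimal for the perturbed problem to the unperturbed system gives the reverse inequality, whence $|V^p(t,x)-V(t,x)|\le C(1+|x|^m)\e$ after letting $\d\downarrow0$. This bound is uniform over $t\in[0,T]$ for each fixed $x$ (and locally uniform in $x$), which is exactly the asserted uniform convergence $V^p\conv V$. The step I expect to be the main obstacle is the lower inequality: a control that is near-optimal for the perturbed problem is adapted to the enlarged filtration generated by $(w^d,w^s,w^p)$, so to re-use it in the unperturbed problem I must argue that the two value functions are unchanged when computed over this enlarged (weak-formulation) class of admissible controls --- a standard but nontrivial density and measurable-selection fact for nondegenerate controlled diffusions.

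As an alternative that sidesteps the filtration issue at the cost of losing the rate, I would invoke stability of viscosity solutions: the uniform-in-$\e$ bounds of Lemma~\ref{lem:CCF_exisparder} on $V^p$ and $V_t^p$, combined with a uniform spatial modulus of continuity, allow Arzel\`a--Ascoli to extract a locally uniform limit $\bar V$ along a subsequence $\e\downarrow0$; since the perturbation term $\tfrac12\e^2\phi_{pp}$ evaluated on any smooth test function tends to zero, $\bar V$ is a viscosity solution of \eqref{eqn:HJB_CCF}, hence $\bar V=V$ by the uniqueness of Theorem~\ref{thm:CCF_HJBexis}, and a routine subsequence argument upgrades this to convergence of the whole family.
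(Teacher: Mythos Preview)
Your proposal follows essentially the same route as the paper: represent $V^p$ as the value function of the perturbed price dynamics $dp_t=u_t\,dt+\e\,dv_t$, show that the cost functionals converge as $\e\downarrow0$, pass to the infima, and then invoke Arzel\`a--Ascoli (together with the uniform estimates of Lemma~\ref{lem:CCF_exisparder}) to upgrade to uniform convergence on $[0,T]$. Your viscosity-stability alternative is exactly the paper's final step.

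Where you differ is in the middle of the argument, and there your version is sharper. The paper argues qualitatively: for fixed $u$ one has $\sup_{0\le t\le T}|p^\e-p|\to0$ almost surely, whence $|J^\e(x,u)-J(x,u)|\to0$ by dominated convergence (using the moment bounds of Lemma~\ref{lem:GS}), and then ``$V^p\to V$ follows''. Your quantitative $L^2$ coupling, Gronwall propagation through \ass{ass:fric}, and Cauchy--Schwarz give a uniform-in-$u$ bound $|J^\e(x,u)-J(x,u)|\le C(1+|x|^m)\e$, which makes the passage from convergence of costs to convergence of value functions honest via the $\delta$-optimal sandwich and in addition produces a rate. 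You also correctly flag the filtration-enlargement issue for the lower inequality, a point the paper does not address; note, however, that because the admissible class $\mathcal{U}$ is the compact set $|u|\le u_{max}$ and the HJB verification gives optimal (or near-optimal) controls in Markovian feedback form, this obstacle is milder here than in a general setting.
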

\begin{IEEEproof}
For $(v_t;\,t\geq 0)$ a standard Wiener process, we can define an alternative control action in the form of a stochastic differential equation $dp_t^p = u_t dt + \epsilon dv_t$. The resulting value function can be shown to be a viscosity solution to \eqref{eqn:HJB_exp}, and this solution is unique (see Chapter 4, \cite{1999YZ}). For a fixed $u$, we have $P\{\lim_{\epsilon\rightarrow 0}\sup_{0\leq t\leq T} \lvert p^p - p\rvert =0 \} = 1$. We recall from \eqref{eqn:V_function} that $V(t,x)$ is the infimum of $J(x,u)$ among non-anticipative controls in $\mathcal{U}$. Let $k_1,k_2$ be as in the polynomial growth conditions for $c(\cdot)$ and $c_{bo}(\cdot)$, Since $\mathcal{U}$ is compact, \ass{ass:fric} together with Lemma \ref{lem:GS} imply that $\E|x(t)|^{k_1}$ and $\E|x(t)|^{k_2}$ are bounded uniformly with respect to $t\in[0,T)$ and $u\in\mathcal{U}$. It follows that $J^p(x,u)$ is uniformly bounded. One can use Lebesgue's dominated convergence theorem to obtain $\lvert J^p(x,u) - J(x,u) \rvert \rightarrow 0, \, \text{as } \epsilon\rightarrow 0$, and $V^p\rightarrow V$ as $\epsilon\rightarrow 0$ follows. By adopting Arzela-Ascoli Theorem similar to the methodology that was employed in the proof of Theorem \ref{thm:CCF_HJBperexis}, one can obtain $V^p\rightarrow V$ uniformly on $[0,T]$, as $\epsilon\rightarrow 0$.
\end{IEEEproof}
This gives us the following result:
\begin{cor}
For the function class $\mathcal{G}'$ the solution $u^* \in \mathcal{U}$ to the perturbed HJB Equation \eqref{eqn:HJB_exp} is found as:
\begin{equation}\label{eqn:vopt_sol}
u^* = \arg\min_{u\in\mathcal{U}}\frac{\p\t V^p}{\p x}\psi=-\mathrm{sgn}\left(\frac{\p V^p}{\p p}\right) u_{max}, 
\end{equation}
where $u$ was previously defined as $dp_t = u_t dt,\, t\geq0,\, |u_t|\leq u_{max}$.
\end{cor}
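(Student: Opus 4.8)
The plan is to treat \eqref{eqn:vopt_sol} as a pointwise (static) minimization of the Hamiltonian, exploiting that the drift $\psi$ is \emph{affine} in the control. Theorem \ref{thm:CCF_HJBperexis} already supplies a unique classical solution $V^p$ of \eqref{eqn:HJB_exp} in the class $\mathcal{G}'$, so in particular $V^p\in\b{C}^{1,2}$ and the partial derivative $\p V^p/\p p$ exists and is continuous; this is what makes the right-hand side of \eqref{eqn:vopt_sol} well defined.

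First I would write out the Hamiltonian explicitly and note that maximizing $-\p\t V^p/\p x\,\psi$ in \eqref{eqn:HJB_exp} is the same as minimizing $\p\t V^p/\p x\,\psi$, which is the $\argmin$ form stated in the corollary. With $\psi=(f^d,f^s,u)\t$ one has
\[
\frac{\p\t V^p}{\p x}\psi = \frac{\p V^p}{\p d}f^d(d,p) + \frac{\p V^p}{\p s}f^s(s,p) + \frac{\p V^p}{\p p}\,u,
\]
and I would observe that the first two summands, together with the diffusion terms and $g(\cdot)$ appearing in \eqref{eqn:HJB_exp}, do not depend on $u$. Hence the minimization collapses to
\[
\argmin_{u\in\mathcal{U}}\frac{\p\t V^p}{\p x}\psi = \argmin_{u\in[-u_{max},u_{max}]}\frac{\p V^p}{\p p}\,u.
\]

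Next I would solve this scalar problem. The objective $u\mapsto (\p V^p/\p p)\,u$ is linear on the compact interval $[-u_{max},u_{max}]$, so its minimum is attained at an endpoint: when $\p V^p/\p p>0$ the minimizer is $u=-u_{max}$, and when $\p V^p/\p p<0$ it is $u=+u_{max}$. Combining the two cases yields exactly $u^*=-\mathrm{sgn}(\p V^p/\p p)\,u_{max}$, the claimed bang-bang law. The resulting feedback is a measurable function of the state through $V^p$, hence adapted, and takes values in $\mathcal{U}$, so it is admissible.

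The only delicate point is the degenerate set where $\p V^p/\p p=0$: there the Hamiltonian is flat in $u$, the sign function is undefined, and \emph{every} admissible $u$ is a minimizer. I would dispose of this by invoking the uniqueness statement of Theorem \ref{thm:CCF_exis}, which guarantees that two optimal controls can differ only on a set of times of Lebesgue measure zero; consequently the vanishing of $\p V^p/\p p$ occurs only on a null set of times along the optimal trajectory, and the prescription \eqref{eqn:vopt_sol} determines $u^*$ uniquely almost everywhere. I do not expect any genuine obstacle beyond carefully recording this measure-zero caveat, since the substance of the corollary is the elementary affine-in-$u$ minimization.
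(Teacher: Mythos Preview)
Your argument is correct and matches the paper's own justification: the paper does not give a separate proof of the corollary but simply remarks after stating it that, since $V^p$ is differentiable in the class $\mathcal{G}'$ and the control enters linearly under a box constraint, the optimizer is bang-bang. Your write-up is in fact more explicit than the paper's; the only minor inaccuracy is that Theorem \ref{thm:CCF_exis} concerns the \emph{unperturbed} problem, so to handle the null set where $\p V^p/\p p=0$ you would instead appeal to the perturbed control problem introduced in the proof of Theorem \ref{thm:CCF_HJBperconv} (or simply leave the convention for $\mathrm{sgn}(0)$ unspecified, as the paper does).
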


When we look at the the perturbed HJB Equation \eqref{eqn:HJB_exp}, the bound $ \lvert V \rvert \leq C(1+d^{k_1}+s^{k_2})$ is a direct estimate, the value function is differentiable everywhere in the function class $\mathcal{G}'$, and due to the constraint defined on the control action, the optimal control is represented as a bang-bang control. Hence, the optimal control is found as a single switch.  At the boundary we have $V(T,x)=0$. Therefore, one can numerically solve \eqref{eqn:HJB_exp}. 

In Theorem \ref{thm:CCF_exis} we showed the existence of an optimal control to the problem \eqref{eqn:V_function}. Due to the problematic nature of stochastic differential equations, we have seen that the solution of an optimal control in ``classical sense" may not exist. This leads us to formulate a suboptimal approach. The convergence of the suboptimal solution to the optimal solution was shown. 

The control is shown to be a simple single switch. This has significant consequences, i.e., we proved that the regulator needs to increase the price increment to the possible maximum or decrease it to the possible minimum depending on the value obtained from \eqref{eqn:vopt_sol}. Due to \ass{ass:fric}, the effect of price on demand and supply is constrained. Therefore, a certain amount of time is needed in order to adjust the levels of demand and supply in the system. For cases where demand is much bigger than supply or supply is much bigger than demand, the maximal increment has to be applied for a long period of time. Hence, volatile prices are the optimal outcome of the market with respect to the cost function \eqref{eqn:cost_int_cont}.

Note that \ass{ass:fric} is important both for technical reasons and for modeling reasons. In addition to the fact that \ass{ass:fric} models friction, if \ass{ass:fric} is removed, the polynomial growth of the value function also may not be satisfied. Moreover, for a hypothetical frictionless market, a single increment on price would adjust demand and supply levels to the desired levels instantly; thus, less volatility would be expected. Indeed, for a completely deterministic frictionless system, volatility would be zero.

\section{EFFICIENCY--VOLATILITY TRADE-OFF}\label{sec:NFL}
Non-volatility and efficiency are two desirable properties of power markets. In this section we show that these two notions contradict each other in a market model with friction. Therefore, one has to trade-off non-volatility and efficiency in designing the market mechanism.

The optimal control policy for the system \eqref{eqn:dynamics_cont} and the price process due to the nature of the optimal control \eqref{eqn:vopt_sol} were discussed in the previous section. Since the demand and supply processes are defined by stochastic differential equations, they fluctuate on their trajectories and the regulator modifies the price process for the optimal outcome. The highest cost is paid when the difference between demand and supply is the highest.

In this section we prove that no efficient regulation strategy can exist that maintains a smooth price process when supply and demand are defined by mean-reverting stochastic differential equations.

We form a function that penalizes the control action $u$. Recall the loss function defined in \eqref{eqn:cost_int_cont}. We adopt the stepwise price adjustment model defined in \eqref{eqn:control_cont}, where the input control process $(u_t;\,t \geq 0)$ controls the amount of the increment. The cost associated with the system is defined as 
\begin{equation}\label{eqn:NFL_int_cost}
J(x_0,u) = \mathbb{E}\int_0^T  [ g(d_t,s_t,p_t) + r  u_t^2 ] dt,
\end{equation}
where we add $r u^2$ to the term \eqref{eqn:cost_int_cont} and $r>0$ is the \emph{volatility coefficient}. We will prove that if the \textit{volatility coefficient} decreases, the expected cost decreases. In other words, if high volatility is not allowed, the social cost defined in \eqref{eqn:cost_int_cont} increases. 

We define \emph{efficiency} as the quantity obtained when the expected cost is multiplied by -1 taken out the control action penalizing part: $-\mathbb{E}\int_0^T  [ g(d_t,s_t,p_t)  ] dt$. \emph{Volatility} on the other hand is defined by the price fluctuation measured by $\mathbb{E}\int_0^T  u_t^2  dt$.

We require one more assumption here:

\begin{hypot}\label{ass:OU}
The supply process $(s_t;\,t\geq 0)$ and the demand process $(d_t;\,t\geq 0)$ are linear mean-reverting processes that have bounded variances and admit stationary probability distributions in case of time invariant means.
\end{hypot}
As a special case, we study a linear quadratic cost function of the form
\begin{equation}\label{eqn:NFL_int_app_cost}
J(x_0,u) = \mathbb{E}\int_0^T (x_t\t Qx_t + 2x_t\t D + r u_t^2) dt,
\end{equation}
where $x := (d,s,p)\t$, and $Q\geq 0$, $r>0$ and $D$ are constant values. Employing \ass{ass:OU}, we have the dynamics
\begin{align}
\nonumber & dx_t=\psi(x_t,u_t) dt+G dw_t,\quad t\geq 0,\\
\label{eqn:lin_dynamics} &dx_t=\left( Ax_t+Bu_t+h \right) dt+G dw_t,\quad t\geq 0,
\end{align}
where $\omega$ is a $3\times 1$ standard Wiener process, 
$x(0)=x_0$, and $A,B,G$ are in the form of
\be\label{eqn:ABG}
A =  \left( \begin{array}{ccc}
\ast & 0 & \ast\\
0 & \ast & \ast\\
0 & 0 & 0 \end{array} \right),\quad B=\left( \begin{array}{c}
0\\0\\1\end{array} \right),\quad G = \left( \begin{array}{ccc}
\sigma_d & 0 & 0\\
0 & \sigma_s & 0\\
0 & 0 & 0 \end{array} \right),
\ee
where `$\ast$' denotes a bounded constant. 

\subsection{Existence and Uniqueness of the Optimal Control}

From now on, we will work on \eqref{eqn:NFL_int_app_cost} and \eqref{eqn:lin_dynamics}. We take the admissible control set $\mathcal{U}_2=\{u:u \text{ adapted to }\sigma(x_s,s\leq t) \text{ and }  \int_0^T u_t^2 dt < \infty\}$. The minimum cost-to-go from any initial state $(x)$ and any initial time $(t)$ is described by the \emph{value function} which is defined by $V(t,x) = \inf_{u\in \mathcal{U}_2}J(x,u)$. The optimal control problem is well defined with the Hamilton-Jacobi-Bellman (HJB) Equation
\begin{equation}\label{eqn:NFL_HJB}
-\frac{\p V}{\p t}  + \sup_{u \in \mathcal{U}_2} \left \{ -\frac{\p V}{\p x}\t \psi - r u^2 \right \} - \frac{1}{2}\tr \left ( \frac{\p^2 V}{\p x^2}GG\t \right ) - x\t Q x - 2x\t D = 0,
\end{equation}
where $V(T,x) = 0$.

As discussed earlier in Sec.\ \ref{sec:CCF}, due to the lack of uniform parabolicity, standard solutions may be hard to obtain. Viscosity solutions are adopted in these circumstances. Therefore we add the term $(1/2)\epsilon^2(\p^2V/\p p^2)$ to \eqref{eqn:NFL_HJB} and obtain uniform parabolicity. Equation \eqref{eqn:NFL_HJB} then becomes
\begin{equation}\label{eqn:NFL_per_HJB}
-\frac{\p V^p}{\p t}  + \sup_{u \in \mathcal{U}_2} \left \{ -\frac{\p V^p}{\p x}\t \psi - r u^2 \right \} - \frac{1}{2}\sigma_d^2\frac{\p^2 V^p}{\p d^2 } - \frac{1}{2}\sigma_s^2\frac{\p^2 V^p}{\p s^2 } - \frac{1}{2}\epsilon^2 \frac{\p^2 V^p}{\p p^2 } - x\t Qx - 2x\t D = 0,
\end{equation}
where $V^p(T,x)=0$.

Equation \eqref{eqn:NFL_per_HJB} has a unique solution as stated in the following theorem.
\begin{thm}\label{thm:CCF_LQRHJBperexis}
Equation \eqref{eqn:NFL_per_HJB} has a unique classical solution for the admissible control set $\mathcal{U}_2$ for all $\e>0$.
\end{thm}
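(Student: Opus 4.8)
The plan is to exploit the linear--quadratic structure and solve \eqref{eqn:NFL_per_HJB} by reduction to matrix Riccati ordinary differential equations, rather than repeating the approximation machinery of Theorem~\ref{thm:CCF_HJBperexis} (which does not apply verbatim, since the admissible set $\mathcal{U}_2$ is now non-compact). First I would carry out the supremum explicitly. With $\psi = Ax+Bu+h$ and $B=(0,0,1)\t$ as in \eqref{eqn:ABG}, the only $u$-dependence inside the bracket of \eqref{eqn:NFL_per_HJB} is $-\frac{\p V^p}{\p p}u - r u^2$, a strictly concave quadratic in $u\in\R$. Its unconstrained maximiser is
\[ u^* = -\frac{1}{2r}\frac{\p V^p}{\p p}, \qquad \sup_{u}\Bigl\{-\frac{\p V^p}{\p p}u - r u^2\Bigr\} = \frac{1}{4r}\Bigl(\frac{\p V^p}{\p p}\Bigr)^2 . \]
Substituting this back turns \eqref{eqn:NFL_per_HJB} into an explicit semilinear, uniformly parabolic (thanks to the $\tfrac12\e^2\p^2_{pp}$ term) PDE whose only nonlinearity is the quadratic gradient term $\frac{1}{4r}(\partial_p V^p)^2$.

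Second, because the data are quadratic in $x$ and the dynamics of \ass{ass:OU} affine, I would posit the ansatz $V^p(t,x)=x\t K(t)x + 2x\t m(t) + n(t)$ with $K(t)=K(t)\t$ and terminal conditions $K(T)=0$, $m(T)=0$, $n(T)=0$ inherited from $V^p(T,x)=0$. Inserting this into the reduced PDE and matching the coefficients of $x\t(\cdot)x$, $x\t(\cdot)$ and the constant term yields, respectively, a matrix Riccati equation of the form
\[ \dot K = -(A\t K + K A) + \tfrac{1}{r}K B B\t K - Q, \qquad K(T)=0, \]
a linear (in $m$) equation driven by $K$, $h$ and $D$, and a scalar equation for $n$ whose right-hand side contains the trace term $\tr(\Sigma_\e K)$ with $\Sigma_\e\teq\mathrm{diag}(\sigma_d^2,\sigma_s^2,\e^2)$; note that $\e$ enters the construction only through this last equation. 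Solving these three ODEs backward from $t=T$ and reassembling gives a candidate $V^p$ that automatically lies in $\b{C}^{1,2}$, grows quadratically in $x$, and vanishes at $t=T$, placing it in the class $\mathcal{G}'$.

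The crux is the global solvability of the Riccati equation on all of $[0,T]$, i.e.\ ruling out finite escape of $K$. Reversing time via $P(\tau)\teq K(T-\tau)$ gives the standard regulator Riccati equation
\[ \dot P = A\t P + P A - \tfrac1r P B B\t P + Q, \qquad P(0)=0. \]
Since $Q\geq 0$ and $r>0$, the quadratic term carries the stabilising sign, so $P(\tau)\geq 0$ for all $\tau$, and a comparison argument bounds $P(\tau)$ above by the solution $\bar P$ of the linear Lyapunov equation $\dot{\bar P}=A\t\bar P+\bar P A + Q$, $\bar P(0)=0$, which exists globally. Thus $0\leq P(\tau)\leq \bar P(\tau)$ on $[0,T]$, precluding blow-up; local Lipschitz continuity of the Riccati vector field then gives uniqueness of $K$, and the linear equations for $m$ and $n$ have unique global solutions once $K$ is bounded. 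This establishes existence of a classical solution in $\mathcal{G}'$.

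For uniqueness I would invoke a verification/probabilistic identification: the perturbed equation \eqref{eqn:NFL_per_HJB} is exactly the dynamic programming equation of the nondegenerate LQ problem with price dynamics $dp_t = u_t\,dt + \e\,dv_t$ (as in Theorem~\ref{thm:CCF_HJBperconv}), the $r u_t^2$ penalty guaranteeing that the value is finite over the non-compact class $\mathcal{U}_2$ and that the optimal feedback $u^*$ is admissible. Applying It\^o's formula to the constructed $V^p$ along the controlled state shows it coincides with this value function, which, being uniquely determined, forces any classical solution of \eqref{eqn:NFL_per_HJB} with quadratic growth to equal $V^p$; alternatively one may cite the uniqueness theory for uniformly parabolic equations in \cite{1999YZ}. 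I expect the genuine obstacle to be twofold: securing the global (no finite-escape) existence of the Riccati solution---handled above by the positivity/comparison argument---and controlling the non-compactness of $\mathcal{U}_2$ in the uniqueness step, where the coercive $r u^2$ term is essential.
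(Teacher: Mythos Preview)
Your argument is correct and takes a genuinely different route from the paper. The paper simply refers back to the approximation machinery of Theorem~\ref{thm:CCF_HJBperexis}---truncation by cut-off functions $h^d$, local H\"older estimates, and Arzel\`a--Ascoli extraction---whereas you exploit the linear--quadratic structure directly, reducing the perturbed HJB to the Riccati system that the paper only writes down later in Section~\ref{sec:CCF_cl}, and then securing global solvability of $K$ by the standard positivity/comparison bound $0\leq P\leq \bar P$. Your observation that the proof of Theorem~\ref{thm:CCF_HJBperexis} does not transplant verbatim is well taken: that proof explicitly uses compactness of $U$ (both in Lemma~\ref{lem:CCF_exisparder} and in the line ``is in the class $\mathcal{G}'$ due to \ldots\ the compactness of $U$''), while here $\mathcal{U}_2$ is unbounded, so the paper's ``very similar'' is glossing over a real gap that your coercivity-based treatment of $ru^2$ handles cleanly. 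Your approach is more elementary and yields the explicit quadratic form of $V^p$ immediately; the paper's PDE route, were its compactness issue patched, would be structurally more general but delivers nothing extra in this LQ setting.
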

\textit{Proof:} The proof is very similar to the Proof of Theorem \ref{thm:CCF_HJBperexis}, therefore omitted.

In the theorem below, we prove that the solution to the perturbed value function \eqref{eqn:NFL_per_HJB} converges uniformly to the value function obtained from the HJB Equation \eqref{eqn:NFL_HJB}.
\begin{thm}\label{thm:CCF_LQRHJBperconv}
Let $x\in \R^3$ and $\epsilon>0$. Define $V^p$ as solution to \eqref{eqn:NFL_per_HJB} and $V$ as solution to \eqref{eqn:NFL_HJB} for the admissible control set $\mathcal{U}_2$. Then $V^p\rightarrow V$ uniformly on $[0,T]$.
\end{thm}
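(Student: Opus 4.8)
Because the problem at hand is linear-quadratic, my primary plan is to solve both equations explicitly and compare, which yields uniform convergence essentially for free. With linear dynamics \eqref{eqn:lin_dynamics}, zero terminal cost and the coercive penalty $r u^2$ ($r>0$), the value function is a quadratic form $V(t,x)=x\t P(t)x+2q(t)\t x+m(t)$, where $P$ solves a finite-horizon matrix Riccati equation (with $P(T)=0$), $q$ solves the associated linear ODE, and $m$ absorbs the trace term $\tr(P GG\t)$ and the offset $h$; the optimal control is the admissible linear feedback $u^*=-\tfrac{1}{r}B\t(Px+q)$, so the ansatz is exact. The key observation is that the perturbation in \eqref{eqn:NFL_per_HJB} contributes only the $x$-independent term $-\tfrac{1}{2}\e^2 V^p_{pp}=-\e^2 P^p_{33}$. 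Matching coefficients of $x\t(\cdot)x$ and of $x$ therefore reproduces exactly the unperturbed equations for $P$ and $q$, so $P^p\equiv P$ and $q^p\equiv q$; only the constant ODE changes, giving
\[ V^p(t,x)-V(t,x)=\e^2\int_t^T P_{33}(\tau)\,d\tau. \]
Since $P$ is continuous, hence bounded, on the compact interval $[0,T]$ (the finite-horizon Riccati solution has no escape for $r>0$, $Q\geq 0$), I would conclude $\sup_{t\in[0,T]}|V^p-V|\leq \e^2 T\sup_{[0,T]}|P_{33}|\conv 0$, which is uniform convergence on $[0,T]$ with an explicit $O(\e^2)$ rate. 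The only point to check carefully is boundedness of the Riccati solution on $[0,T]$, which is routine.

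As a robust alternative that parallels the proof of Theorem \ref{thm:CCF_HJBperconv} and does not rely on the quadratic structure, I would realize \eqref{eqn:NFL_per_HJB} as the dynamic-programming equation of the stochastic control problem with perturbed price dynamics $dp_t^p=u_t\,dt+\e\,dv_t$ for an independent Wiener process $(v_t;\,t\geq 0)$; by Chapter 4 of \cite{1999YZ} its value function $V^p(t,x)=\inf_{u\in\mathcal{U}_2}J^p(x,u)$ is the unique (hence, by Theorem \ref{thm:CCF_LQRHJBperexis}, classical) solution of \eqref{eqn:NFL_per_HJB}. For fixed admissible $u$ the perturbed and unperturbed states differ only through the additive $\e v_t$ in the price coordinate; since the drift $Ax+Bu+h$ is globally Lipschitz, a Gronwall estimate gives $\sup_{0\leq t\leq T}|x_t^p-x_t|\leq C\e\sup_{0\leq t\leq T}|v_t|$ almost surely, so the trajectories, and with them the continuous integrand $x\t Qx+2x\t D+ru^2$, converge pathwise as $\e\conv 0$.

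The main obstacle in this second route, and the feature distinguishing it from Theorem \ref{thm:CCF_HJBperconv}, is that $\mathcal{U}_2$ is not compact: controls need only satisfy $\int_0^T u_t^2\,dt<\f$, so Lemma \ref{lem:GS} cannot be applied uniformly over all admissible controls. I would resolve this using the coercivity furnished by $r>0$. Because $V^p$ is bounded above by the cost of the zero control uniformly in $\e\in(0,1)$, and because $x\t Qx\geq 0$ forces $J^p(x,u)\geq r\,\E\int_0^T u_t^2\,dt - C(1+(\E\int_0^T u_t^2\,dt)^{1/2})$, any near-optimal control satisfies $\E\int_0^T u_t^2\,dt\leq K/r$ for a constant $K=K(x,T)$; hence the infimum defining $V^p$ and $V$ is unchanged if restricted to the sublevel set $\mathcal{U}_2^K=\{u\in\mathcal{U}_2:\E\int_0^T u_t^2\,dt\leq K/r\}$. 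On $\mathcal{U}_2^K$ the linear dynamics and Lemma \ref{lem:GS} give a bound on $\E|x_t^p|^2$ uniform in $t\in[0,T]$, $u\in\mathcal{U}_2^K$ and $\e\in(0,1)$, so the quadratic integrands are uniformly integrable. Dominated convergence then yields $|J^p(x,u)-J(x,u)|\conv 0$ uniformly over $\mathcal{U}_2^K$, whence $V^p\conv V$ pointwise; finally the Arzela-Ascoli argument from the proof of Theorem \ref{thm:CCF_HJBperexis}, applied to the uniformly bounded and equicontinuous family $\{V^p,V_t^p,V_{x_i}^p,V_{x_ix_j}^p\}$, upgrades this to uniform convergence on $[0,T]$.
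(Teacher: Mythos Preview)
Your proposal is correct, and both routes are sound. The paper itself gives no proof: it simply says ``very similar to the proof of Theorem \ref{thm:CCF_HJBperconv}, therefore omitted,'' i.e., it defers entirely to the perturbed-dynamics/viscosity argument (realize $V^p$ as the value function for $dp_t^p=u_t\,dt+\e\,dv_t$, dominated convergence on the costs, then Arzel\`a--Ascoli). Your second route is exactly this program, with one genuine improvement over what the paper implicitly assumes: you correctly flag that $\mathcal{U}_2$ is not compact, so Lemma \ref{lem:GS} cannot be applied uniformly in $u$ as in Theorem \ref{thm:CCF_HJBperconv}, and you repair this via the coercivity of $r u^2$, restricting to a sublevel set $\mathcal{U}_2^K$ without changing the infimum. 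That is a real gap the paper glosses over.

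Your primary route, by contrast, is a different and sharper argument than anything in the paper. Exploiting the quadratic ansatz $V=x\t Kx+2x\t S+q$, the perturbation $-\tfrac{1}{2}\e^2 V_{pp}=-\e^2 K_{33}$ enters only the constant ODE \eqref{eqn:q}, leaving \eqref{eqn:K} and \eqref{eqn:S} unchanged; hence $K^p\equiv K$, $S^p\equiv S$, and
\[
V^p(t,x)-V(t,x)=\e^2\int_t^T K_{33}(\tau)\,d\tau,
\]
which is independent of $x$ and $O(\e^2)$ uniformly on $[0,T]$. This buys you an explicit rate and uniformity in $x$ for free, at the cost of relying on the LQ structure (which is precisely the setting of Theorem \ref{thm:CCF_LQRHJBperconv}). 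The paper's deferred argument is more general in spirit but yields only $o(1)$ without a rate. Either approach is acceptable; your first is the cleaner one here.
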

The proof is very similar to the proof of Theorem \ref{thm:CCF_HJBperconv}, therefore omitted.

\subsection{Closed Form Solution}\label{sec:CCF_cl}

Standard arguments \cite[Section 2.3]{1989AM} show that $J(x,u)$ is quadratic in $x$. Furthermore, at any point $x\in \mathbb{R}^3$ and $t\in [0,T]$ the minimum cost-to-go is quadratic in $x$. Consequently, one can model $V$ of the form $V(t,x) = x\t K(t) x + 2 x\t S(t) + q(t)$ that satisfies the boundary condition $V(T,x) = 0,\enspace \forall x\in \mathbb{R}^3$. Substituting $V$ in \eqref{eqn:NFL_HJB} and applying first order optimization gives
\begin{equation}\label{eqn:NFL_opt_control}
u^*(t) = -r^{-1}B\t [K(t)x(t)+S(t)].
\end{equation}

Solving the closed loop expression we get the ODEs:
\begin{align}
& \label{eqn:K}\dot{K} + KA + A\t K - KBr^{-1}B\t K + Q = 0,\\
& \label{eqn:S}\dot S + (A - B r^{-1} B\t K )\t S + Kh  + D = 0,\\
& \label{eqn:q}\dot{q} + 2S\t h - S\t Br^{-1}B\t S + \tr(KGG\t) = 0,
\end{align}
with boundary conditions $K(T)=0$, $S(T)=0$ and $q(T)=0$.
The linear quadratic optimal control problem admits a unique optimum feedback controller given by \eqref{eqn:NFL_opt_control} which obtains the minimum value of the cost function $J(x_0,u^*) = x_0\t K(0) x_0 + 2x_0\t S(0) + q(0)$.

\subsection{Efficiency--Volatility Trade-off}

We would like to look at the relation between $r$, the \emph{volatility coefficient}, and the state penalizing part of the cost function obtained when the volatility term is removed from the cost function. We define the \emph{state penalizing cost} as
\begin{equation}\label{eqn:spcost}
J_{sp}^*(x_0,u^*) \enspace \triangleq \enspace \mathbb{E} \int_0^T \left [ x_t\t Qx_t + 2 x_t\t  D \right ]  dt,
\end{equation}
which is denoted as \emph{efficiency} when multiplied by $-1$.

\begin{thm}\label{thm:spc}
Suppose \ass{ass:fric}-\ass{ass:OU} hold. For all $x\in\R^3$, the state penalizing cost portion \eqref{eqn:spcost} of the cost function \eqref{eqn:NFL_int_app_cost} using optimal control $u^*$ is an increasing function of $r$.
\end{thm}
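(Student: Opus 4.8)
The plan is to exploit the structural fact that the volatility coefficient $r$ enters only the cost \eqref{eqn:NFL_int_app_cost} and \emph{not} the dynamics \eqref{eqn:lin_dynamics}, which is exactly the setting in which a revealed-preference (monotone comparative statics) argument applies. For an admissible control $u\in\mathcal{U}_2$, let $x_t$ be the trajectory it generates through \eqref{eqn:lin_dynamics} and define the two functionals
\[ S(u) \teq \E\int_0^T (x_t\t Q x_t + 2 x_t\t D)\, dt, \qquad W(u) \teq \E\int_0^T u_t^2\, dt, \]
which are finite on $\mathcal{U}_2$ by the moment bound of Lemma \ref{lem:GS} together with \ass{ass:OU} (and by the very definition of $\mathcal{U}_2$ for $W$). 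The total cost then decomposes as $J(x_0,u) = S(u) + r\, W(u)$, and the point is that neither $S(u)$ nor $W(u)$ depends on $r$; only the weight multiplying $W$ does. The quantity to be studied, $J_{sp}^*$ of \eqref{eqn:spcost}, is precisely $S(u_r^*)$ evaluated at the $r$-optimal control.

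First I would fix $0 < r_1 < r_2$ and let $u_1^*,u_2^*$ denote the associated optimal controls, whose existence and uniqueness is furnished by the linear-quadratic closed form of Sec.\ \ref{sec:CCF_cl} (the feedback \eqref{eqn:NFL_opt_control} built from the Riccati system \eqref{eqn:K}--\eqref{eqn:q}). Testing each optimal control against the other yields the two optimality inequalities
\[ S(u_1^*) + r_1 W(u_1^*) \leq S(u_2^*) + r_1 W(u_2^*), \qquad S(u_2^*) + r_2 W(u_2^*) \leq S(u_1^*) + r_2 W(u_1^*). \]
Adding them, the $S$-terms cancel and one is left with $(r_1 - r_2)\big(W(u_1^*) - W(u_2^*)\big) \leq 0$. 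Since $r_1 < r_2$, this forces $W(u_1^*) \geq W(u_2^*)$; that is, the volatility $W$ is non-increasing in $r$, which matches the intuition that a heavier penalty on $u^2$ suppresses control activity.

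Substituting $W(u_1^*) \geq W(u_2^*)$ back into the first optimality inequality gives
\[ S(u_1^*) - S(u_2^*) \leq r_1\big(W(u_2^*) - W(u_1^*)\big) \leq 0, \]
so $J_{sp}^* = S(u_r^*)$ is non-decreasing in $r$, which is the assertion of Theorem \ref{thm:spc}.

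I expect the difficulty to be technical rather than conceptual, and to lie in two places. First, one must rigorously justify that $S(u_i^*)$ and $W(u_i^*)$ are finite and that the minimizers genuinely exist inside $\mathcal{U}_2$; this is where Lemma \ref{lem:GS}, the bounded variances of \ass{ass:OU}, and the quadratic form of the value function from Sec.\ \ref{sec:CCF_cl} are invoked. Second, upgrading the conclusion from weakly to strictly increasing cannot rest on the cancellation alone: I would either use the strict convexity of $W$ in $u$ together with the uniqueness of the optimizer (forcing $u_1^*\neq u_2^*$ whenever $r_1\neq r_2$) to exclude equality, or differentiate $J^*(r) = S(r) + r\,W(r)$ and apply the envelope identity $\tfrac{d}{dr}J^*(r) = W(r)$ to obtain $S'(r) = -r\,W'(r)\geq 0$, which makes the monotonicity quantitative and pins down the trade-off rate.
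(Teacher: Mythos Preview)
Your argument is correct and takes a genuinely different route from the paper. The paper proves Theorem~\ref{thm:spc} by brute-force differentiation: it sets $\gamma=r^{-1}$, derives sensitivity ODEs for $dK/d\gamma$, $dS/d\gamma$, $dx^*/d\gamma$ from \eqref{eqn:K}--\eqref{eqn:S} and \eqref{eqn:NFL_dxdt}, assembles an explicit integral expression \eqref{eqn:R1R2} for $dJ_{sp}/d\gamma$, and then appeals to the specific zero-pattern of $A$ and $B$ in \eqref{eqn:ABG} to sign each factor. Your revealed-preference argument, by contrast, uses nothing about the LQ structure beyond the existence and uniqueness of a minimizer; it works for any control problem in which the parameter enters only as a weight on one additive piece of the cost, and in particular does not need the structural form \eqref{eqn:ABG} at all.

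What each approach buys: the paper's computation delivers an explicit formula for $dJ_{sp}/dr$ and makes visible the role of the noise covariance $GG^\top$ (the trade-off steepens with variance), at the price of heavy calculus and reliance on the particular matrices. Your argument is shorter and more robust but is, as you note, a priori only a weak-monotonicity statement. Your proposed upgrade via uniqueness is the right idea: if $S(u_1^*)=S(u_2^*)$ then the two optimality inequalities force $W(u_1^*)=W(u_2^*)$, so $u_2^*$ is also optimal at $r_1$; strict convexity of $u\mapsto r_1 W(u)$ then gives $u_1^*=u_2^*$, contradicting the explicit dependence of the feedback gain \eqref{eqn:NFL_opt_control} on $r$. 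The envelope route $S'(r)=-rW'(r)$ is also fine but, as you implicitly recognise, still needs $W'(r)<0$ to get strictness, which reduces to the same uniqueness observation.
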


\textit{Proof:} The proof is presented in Appendix \ref{thm:spc_pro}.

Increasing the \emph{volatility coefficient} increases social cost, therefore decreases \emph{efficiency}, while decreasing the coefficient decreases the cost, hence increases \emph{efficiency}. On the other hand increasing the \emph{volatility coefficient} decreases volatility, whereas decreasing \emph{volatility coefficient} increases volatility. Therefore, there is a trade-off between social \emph{efficiency} and \emph{non-volatility}.

\subsection{Simulations}

\subsubsection{Analytical Supportive Simulation}

Here we simulate a power market. We use Euler-Maruyama Method \cite{1974Ar} for discretization of the stochastic differential equations. The dynamics equations are $d_{k+1} = d_k - \rho\left ( d_k-(\beta-p_k) \right ) \Delta t + \sigma w_k^d\sqrt{\Delta t},\enspace s_{k+1} = s_k - \rho\left ( s_k-(p_k-\gamma) \right ) \Delta t + \sigma w_k^s\sqrt{\Delta t},\enspace p_{k+1} = p_k + u_k \Delta t$, where $\rho=0.05, \Delta t = 0.05, \beta = 75, \gamma = 25,\sigma=2,t_{final}=100$, with the initial conditions $x_0 = (d_0,s_0,p_0)\t=(25,25,50)\t$. We use mean-reverting processes with time varying means. The power market we simulate consists of a demand process with mean $(75-p)$ MWh, and a supply process with mean $(p-25)$ MWh. Therefore for a price of \$50 per MWh, the supplier is expected to produce $25$ MW of power, whereas the demand in the system is also expected to be 25 MW. In accordance with \ass{ass:incr_dyna_func}, the demand is an decreasing function of price, whereas supply is increasing. We calculate ${dJ_{sp}}/{dr}$ using Theorem \ref{thm:spc} using a range of values of $r$ and present the result  in Fig.\ \ref{fig:dJdr_r}, and as expected, it is always positive. Also, as expected it is a convex function; the value is very high for small values of $r$ and converges to 0 as $r$ increases. Increasing $r$, the \emph{volatility coefficient}, corresponds to decreasing volatility which ends up with a cost increase as ${dJ_{sp}}/{dr}>0$ for all $r>0$. In Fig. \ref{fig:tradeoff} we present the trade-off between the efficiency and the non-volatility. The numbers are normalized, and one can see that in a market with higher volatility the efficiency is higher. Here, on the $x$ axis 0 corresponds to the situation where $r$ is very large and 1 corresponds to the situation where $r=0$. On the $y$ axis, the corresponding values are normalized, so that 0 is the lowest and 1 is the highest efficiency that can be obtained.

\begin{figure}[ht]
\begin{minipage}[b]{0.5\linewidth}
\centering
\includegraphics[width=9.25cm]{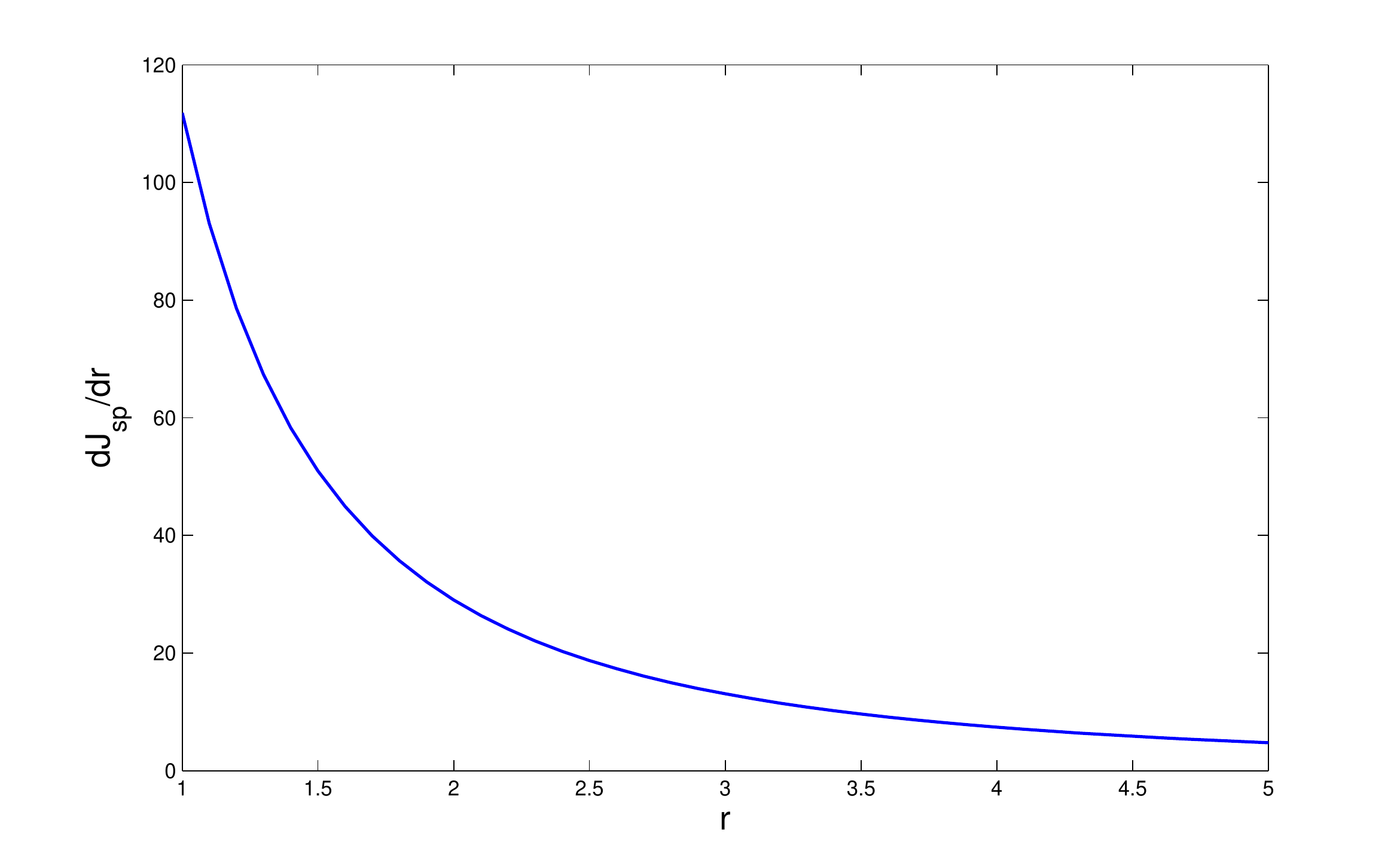}
\caption{$dJ_{sp}/dr$}
\label{fig:dJdr_r}
\end{minipage}
\begin{minipage}[b]{0.5\linewidth}
\centering
\includegraphics[width=9.25cm]{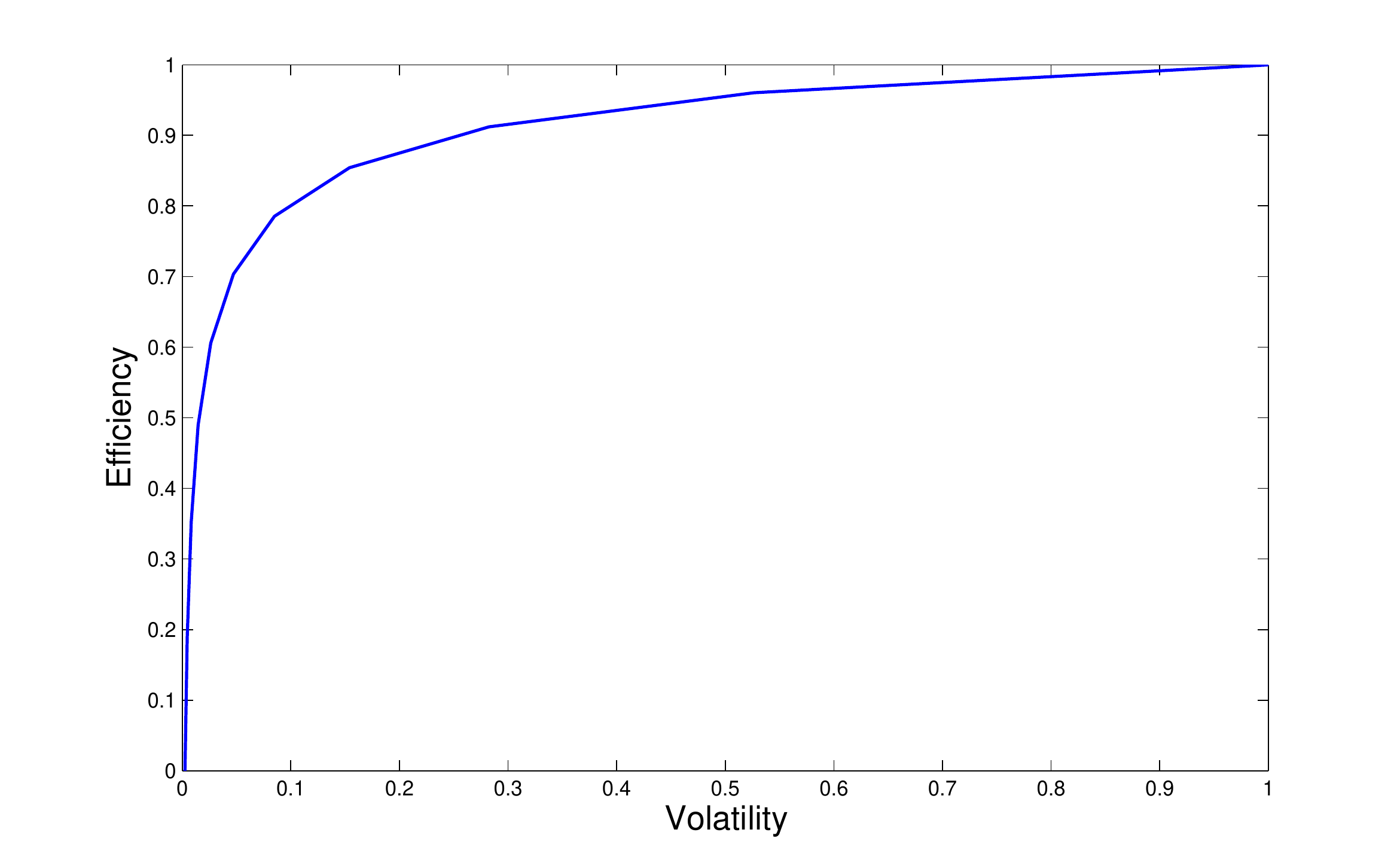}
\caption{Trade-off} 
\label{fig:tradeoff}
\end{minipage} 
\end{figure}

\subsubsection{Numerical Simulation}

Here we present a couple of simulations showing the dynamics when $r=0.01$ and $r=1000$. The high volatility in Fig.\ \ref{fig:dynamics_R01_C} compared to the low volatility in Fig.\ \ref{fig:dynamics_R1000_C} can be observed. One can also notice the effect of volatility on stability.

Also in Fig.\ \ref{fig:dynamics_R1000_C} the optimal actions of the regulator agent can be observed at 4 points on the trajectory. At P1, the demand goes up due to stochasticity and the regulator acts with full force to increase the price, so that stability can be obtained. At P2, price gets high, and the demand is taken under control; gradually the regulator decreases the price. Between 60 seconds and 80 seconds, we see that supply follows a higher level than the demand. The regulator acts to take the price down to a local minimum at P3. Then, until P4 the regulator gradually increases the price until it comes to a local maximum at P4.

\begin{figure}[ht]
\begin{minipage}[b]{0.5\linewidth}
\centering
\includegraphics[width=9.25cm]{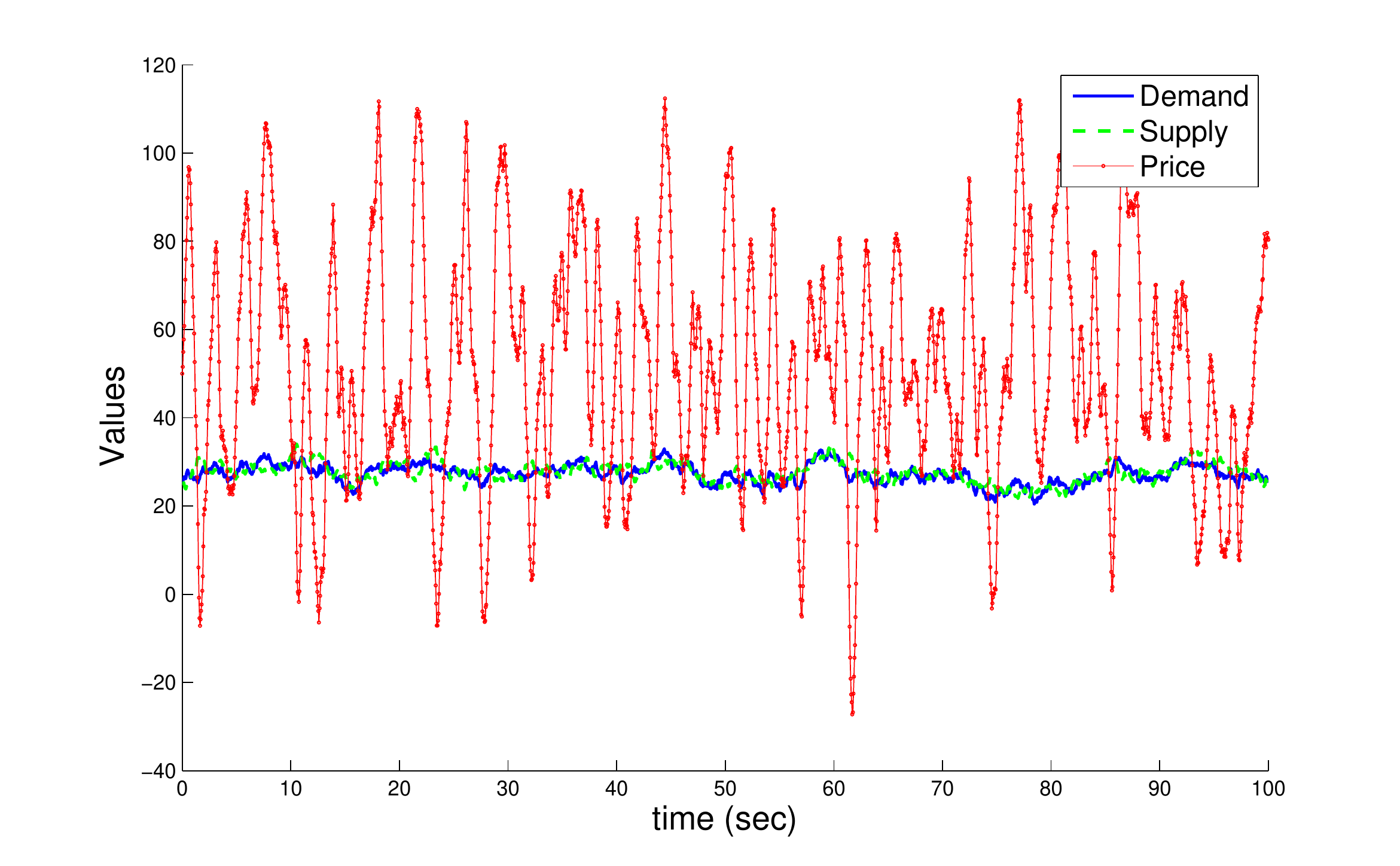}\\ \vspace{-.55cm}
\includegraphics[width=9.25cm,height=2.00cm]{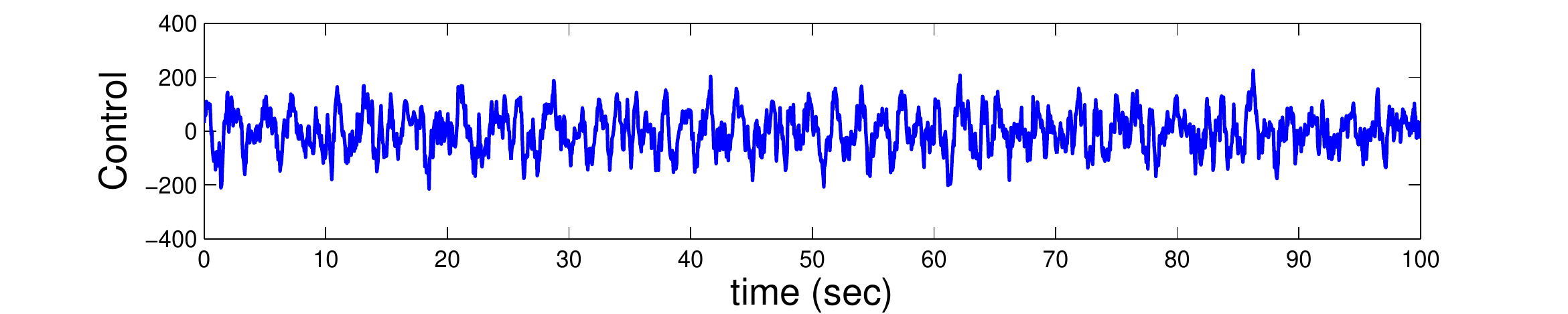}
\caption{Dynamics when r = 0.01}
\label{fig:dynamics_R01_C}
\end{minipage}
\hspace{0.5cm}
\begin{minipage}[b]{0.5\linewidth}
\centering
\includegraphics[width=9.25cm]{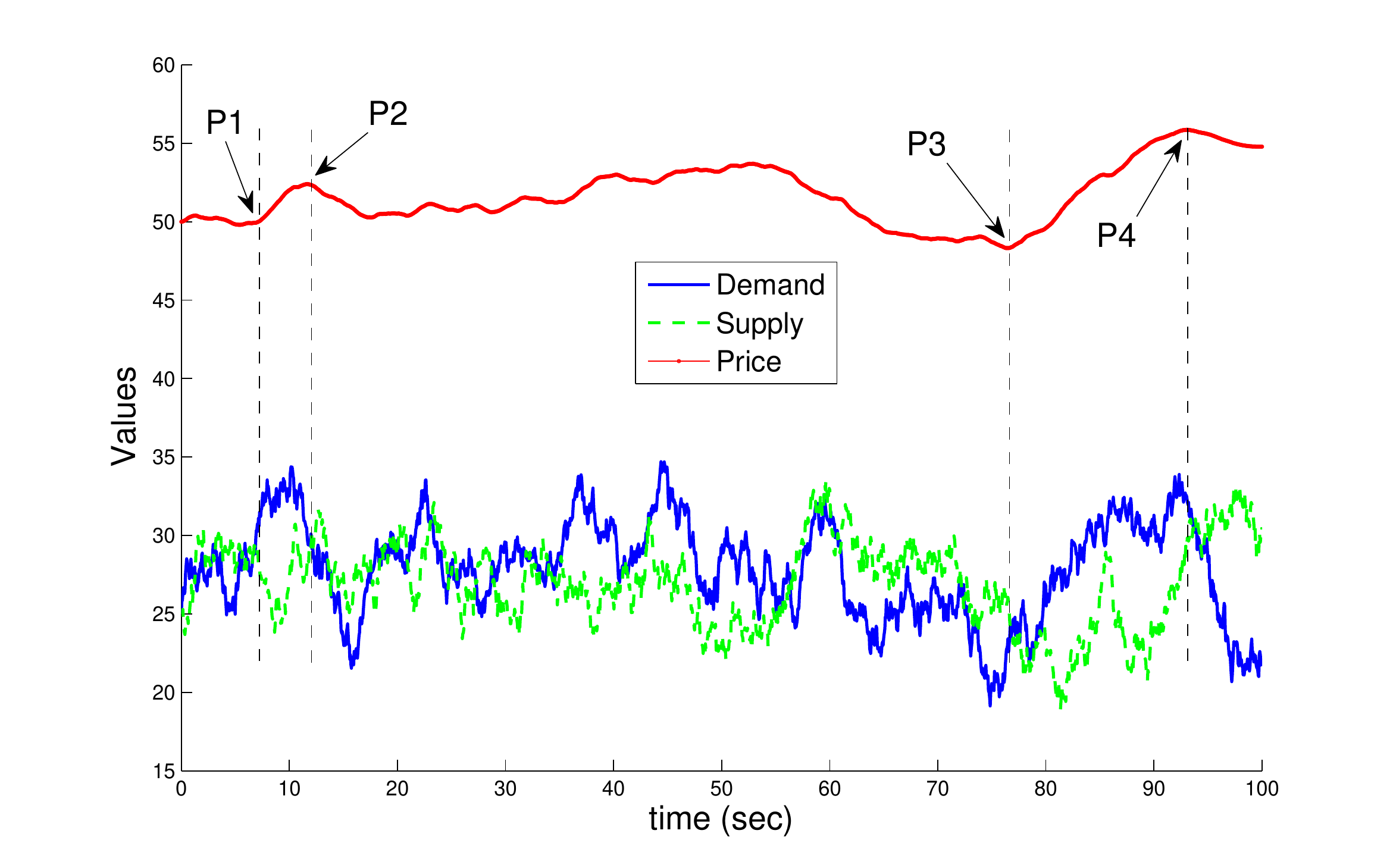}\\ \vspace{-.55cm}
\includegraphics[width=9.25cm,height=2.00cm]{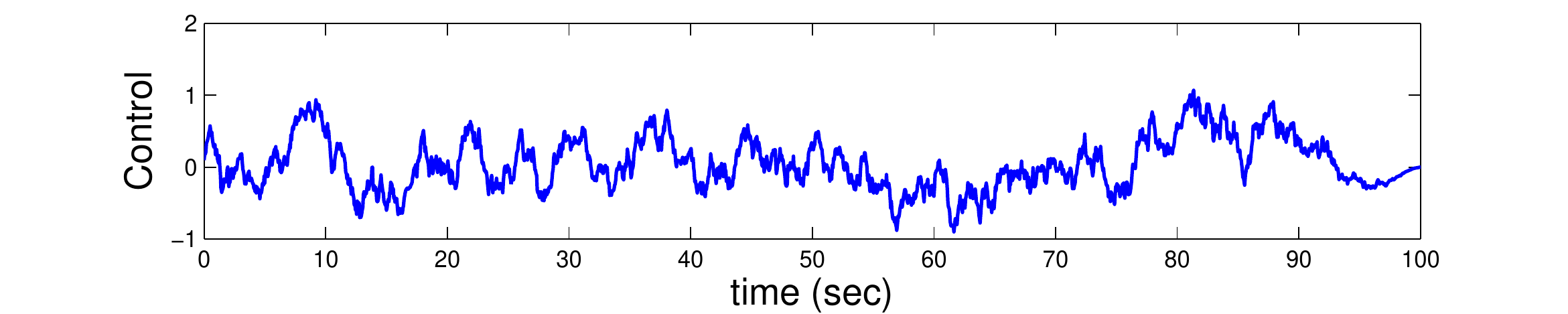}
\caption{Dynamics when r = 1000} 
\label{fig:dynamics_R1000_C}
\end{minipage} 
\end{figure}

Now we present two more simulations with $r=1$. The effect of the initial state on the trajectory is observed here. In Fig.\ \ref{fig:dynamics_dhs} initially, demand is higher than the supply, whereas in Fig.\ \ref{fig:dynamics_dls} demand is lower than the supply. As expected, the price process becomes very volatile in early stages to stabilise the market.

\begin{figure}[ht]
\begin{minipage}[b]{0.5\linewidth}
\centering
\includegraphics[width=9.25cm]{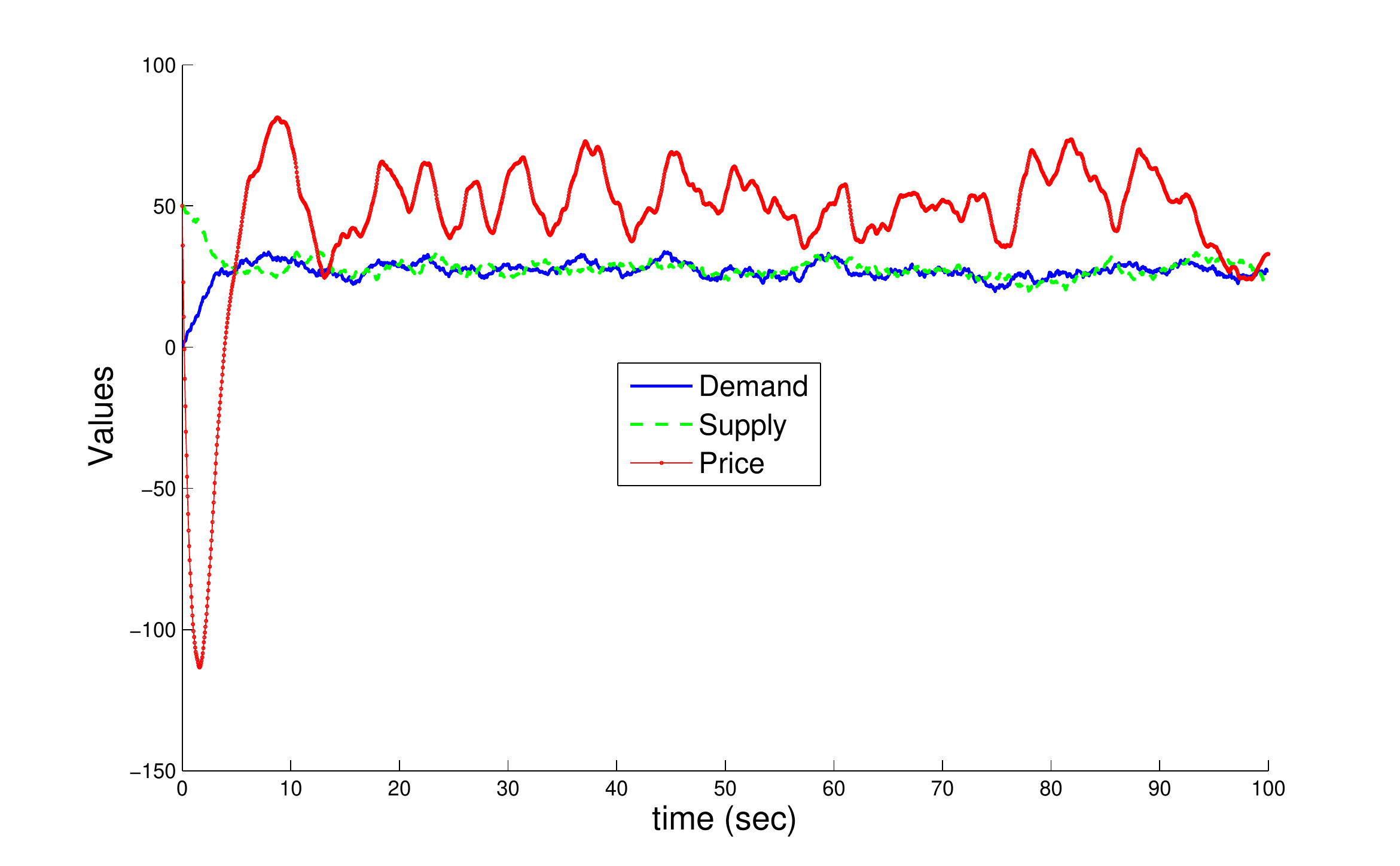}\\ \vspace{-.55cm}
\includegraphics[width=9.25cm,height=2.00cm]{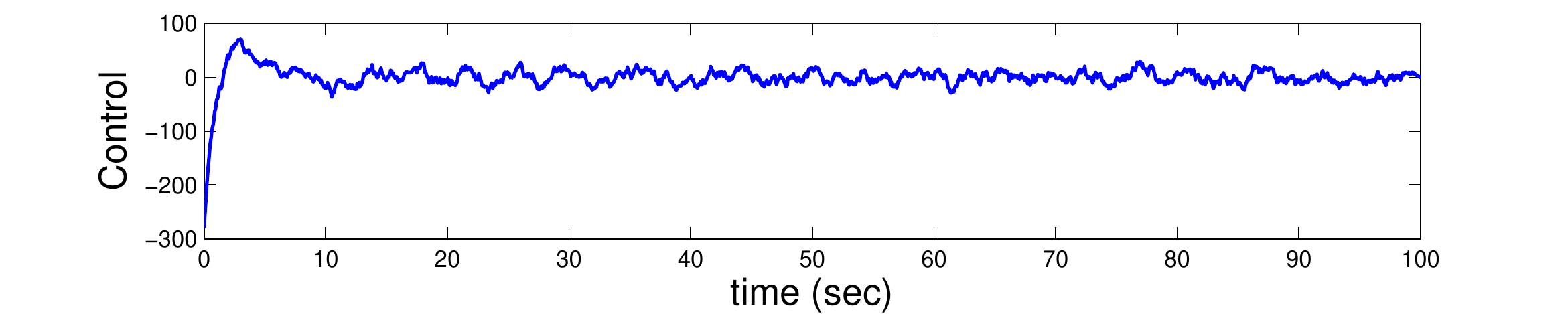}
\caption{Dynamics when initial supply is higher than demand}
\label{fig:dynamics_dhs}
\end{minipage}
\hspace{0.5cm}
\begin{minipage}[b]{0.5\linewidth}
\centering
\includegraphics[width=9.25cm]{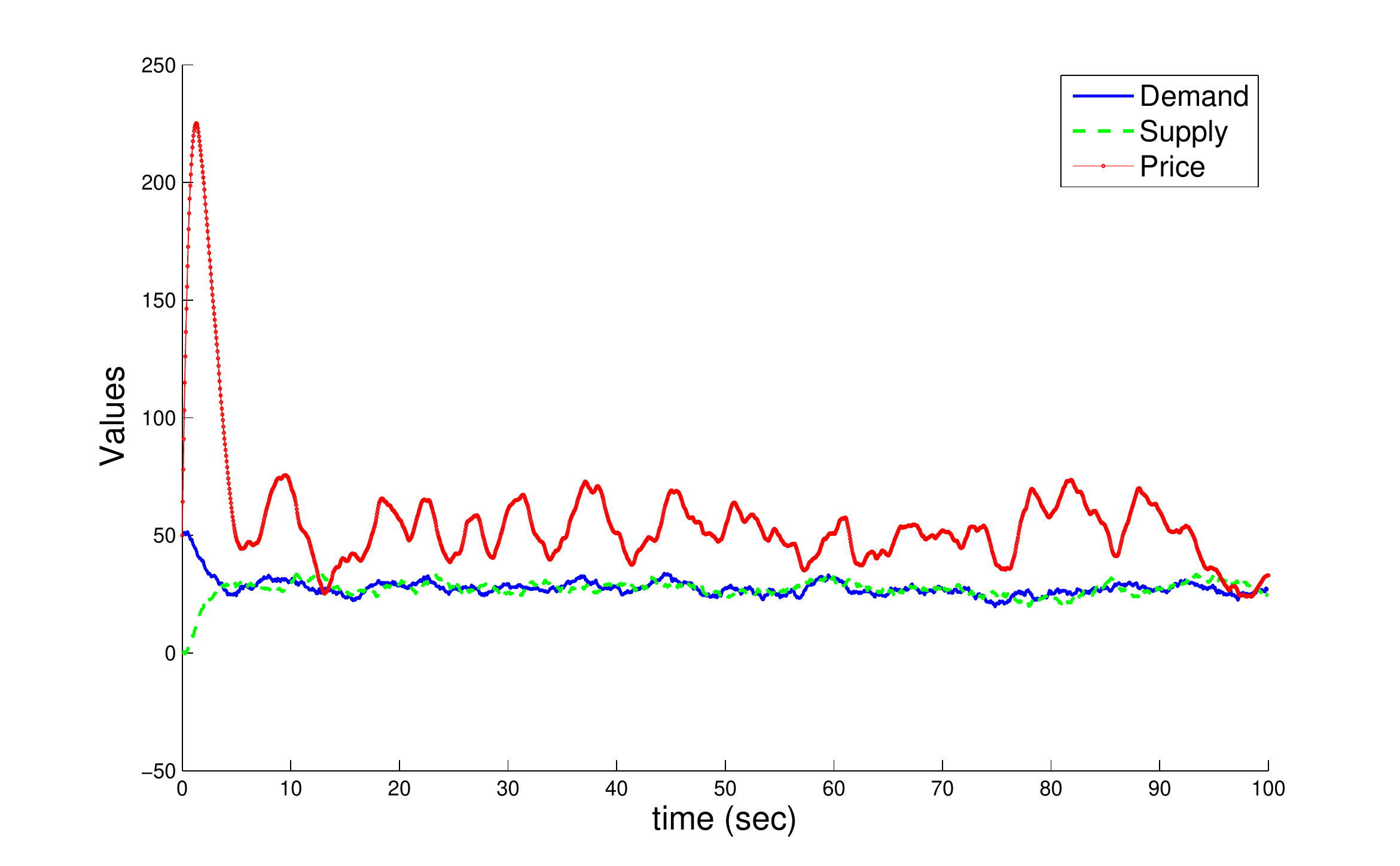}\\ \vspace{-.55cm}
\includegraphics[width=9.25cm,height=2.00cm]{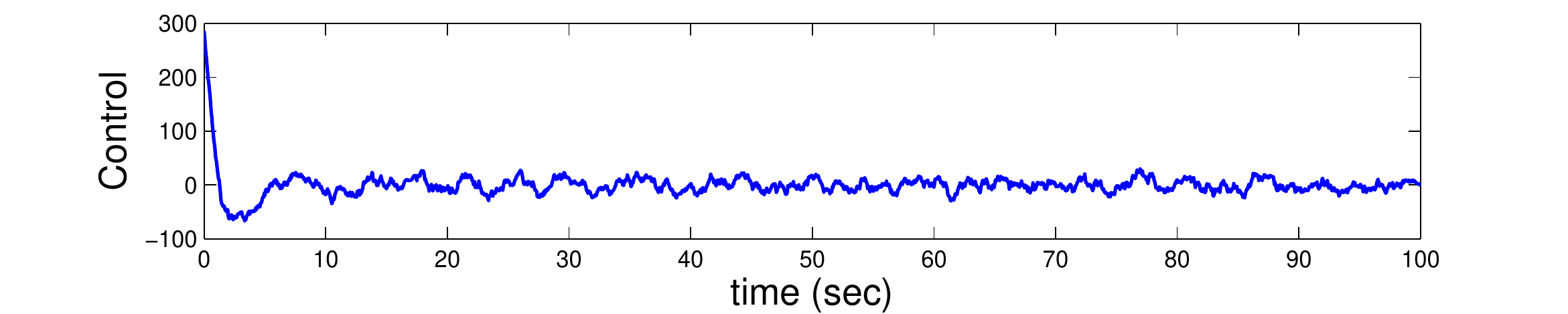}
\caption{Dynamics when initial supply is lower than demand} 
\label{fig:dynamics_dls}
\end{minipage} 
\end{figure}

Finally, we present an experimental result showing the relation between $r$ and the average absolute difference between supply and demand dynamics. Recall that high costs are paid when this difference is high, and as seen in Fig.\ \ref{fig:avg_dis}, as $r$ increases the average absolute difference increases. The $x$ axis is drawn on a logarithmic scale in order to capture the graph on lower values of $r$.

\begin{figure}
\centering
\includegraphics[width=10cm,height=6.18cm]{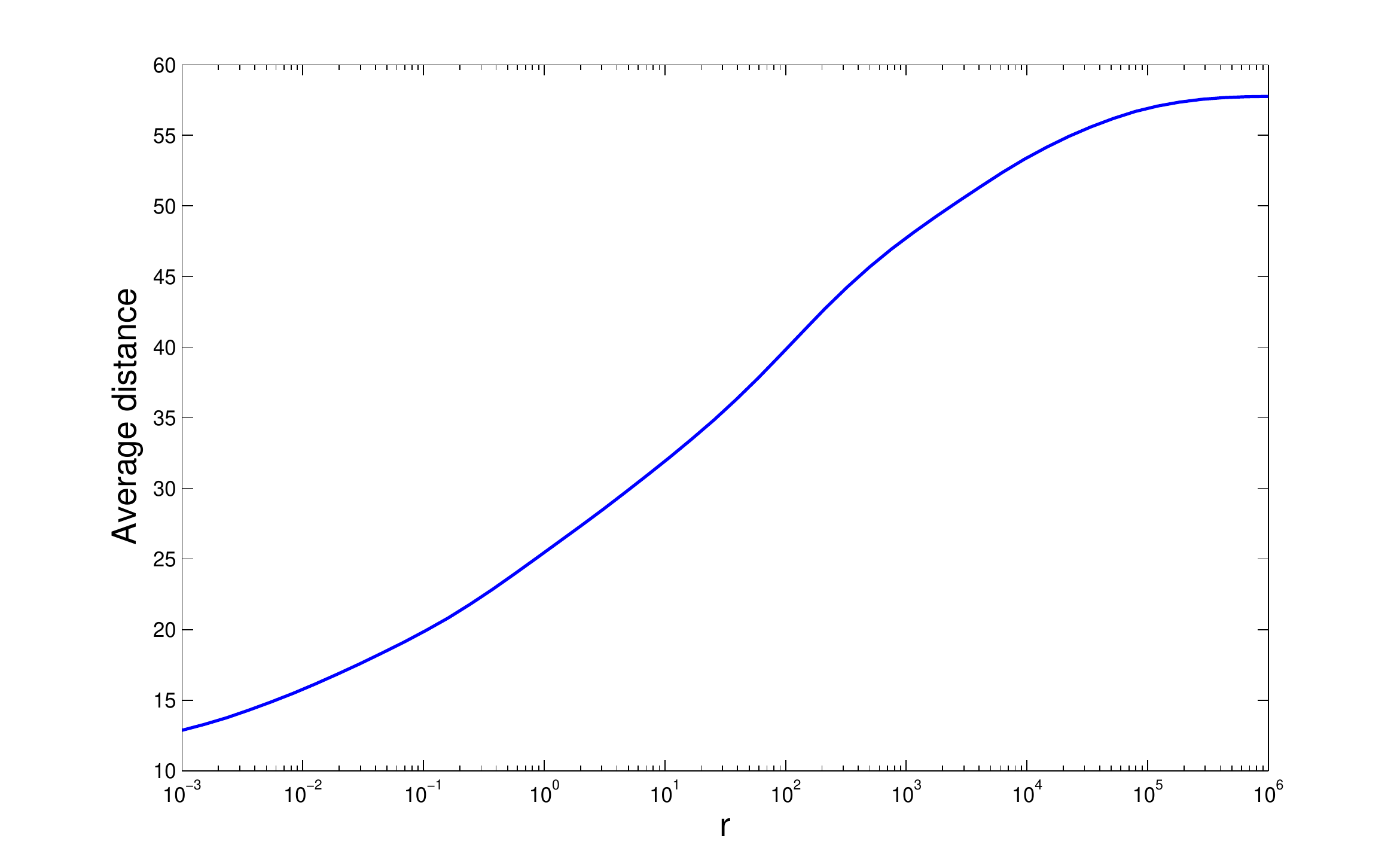}
\caption{Average absolute difference between demand and suply}
\label{fig:avg_dis}
\end{figure}

\section{DECENTRALIZED CONTROL FORMULATION}\label{sec:DCF}

We define a continuous dynamic game for $N^d$ consumers and $N^s$ suppliers. The agents continuously submit their bids as price-quantity graphs, and the system announces the resulting price. Agents buy or sell corresponding shares of supplies according to their bids. One important notion is that future demand and supply processes are dependent on the price process, which is determined instantly by the agents' price-quantity graphs shaped by their actions.

We have the set of agents $\b{N} = \{D_1,...,D_{N^d},S_1,...,S_{N^s} \}$. We define the family of three dimensional state processes $\{   (d_t^i,s_t^{d_i},p_t^{d_i})\t ; \, t\geq 0 , \, 1\leq i \leq N^d\}$ for the consumers and two dimensional state processes $\{   (s^i_t,p^{s_i}_t)\t ; \, t\geq 0 ,\, 1\leq i \leq N^s\}$ for the suppliers. The initial conditions $\{ d^i_0,s_0^{d_i},p_0^{d_i},1\leq i \leq N^d; \, s^j_0, p_0^{s_j},1 \leq j \leq N^s \}$ are mutually independently distributed bounded random variables which are independent of the standard Wiener processes $\{ w_t^{d_i},\, 1\leq i \leq N^d , \; w_t^{s_j},\, 1\leq j \leq N^s ; \; t\geq 0  \}$. The process $d_t^i$ is the demand dynamics for agent $D_i$, the process $s_t^{d_i}$ is the supply it receives, and the process $p_t^{d_i}$ is the parameter it applies to its pre-announced price-quantity graph function $\phi^{d_i}(p_t;p_t^{d_i})$. For the supplier side $s_t^i$ is the current supply and $p_t^{s_i}$ is the parameter for the price-quantity graph $\phi^{s_i}(p_t;p_t^{s_i})$. Here $\{ \phi^{d_i}, \, 1\leq i \leq N^d \}$ and $\{ \phi^{s_i}, \, 1\leq i \leq N^s\}$ are the price-quantity graphs that the consumers and the suppliers submit to the market clearing price functional $f^m(\cdot) \in \mathbf{C}_b$ for the instant price $p_t$ determination. The dynamics for the consumers and the suppliers for $t\geq 0$ are given as
\begin{equation}\label{eqn:G_dyna}
\begin{aligned}
dd_t^i = & f^{d_i}(d_t^i,p_t,\phi^{d_i}(p_t;p_t^{d_i}) )dt + \sigma_d dw_t^{d_i},   \enspace 1 \leq i \leq N^d,\\
dp_t^{d_i} = & u_t^{d_i} dt,\enspace 1 \leq i \leq N^d,\\
ds_t^i = & f^{s_i}(s_t^i,p_t,\phi^{s_i}(p_t;p_t^{s_i}))dt + \sigma_s dw_t^{s_i},  \enspace 1\leq i \leq N^s,\\
dp_t^{s_i} = & u_t^{s_i} dt, \enspace 1\leq i \leq N^s,\\
p_t = & f^m (\{\phi^{d_i}(\cdot,\cdot),\,1\leq i \leq N^d; \; \phi^{s_j}(\cdot,\cdot),\, 1\leq j \leq N^s \}).
\end{aligned}
\end{equation}

The actions of the agents $\{ u^{d_i},\, 1\leq i \leq N^d ; \; u^{s_j},\, 1\leq j \leq N^s\}$ control the size of the increments for $\{ p^{d_i},\, 1\leq i \leq N^d ; \; p^{s_j},\, 1\leq j \leq N^s\}$. The functional $ f^{d_i}, \, 1\leq i \leq N^d,$ is allowed to be a function of $d^i,\, p$ and $\phi^{d_i}(\cdot,\cdot)$, values of the demand of the consumer agent $D_i$, the price and its price-quantity graph; and $ f^{s_i}, \, 1\leq i \leq N^s,$ is allowed to be a functional of $s^i,p,\phi^{s_i}(\cdot,\cdot)$, values of the supply of the supplier $S_i$, the price and its price-quantity graph.   

Following \cite{2010CM}, the individual loss function of a consumer and a supplier are defined respectively:
\begin{equation}\label{eqn:G_inst_cost}
\begin{aligned}
g^d(\cdot) = &  p_t \cdot s_t^{d_i} - v \cdot \min(d_t^i,s_t^{d_i}) + c_{bo}(s_t^{d_i}-d_t^i), \\
g^s(\cdot)  = &  c(s_t^i)-p_t \cdot s_t^i.
\end{aligned}
\end{equation}

Finally, the cost functions associated with each consumer, each supplier and  corresponding control actions $u^{d_i},\, 1\leq i \leq N^d$, and $ u^{s_j},\, 1\leq j \leq N^s$, are specified to be

\begin{equation}\label{eqn:G_inte_cost}
\begin{aligned}
J_{d}(d_0^i,s_0^{d_i},p_0,u^{d_i}) = & \enspace  \mathbb{E} \int_0^T [ p_t \cdot s_t^{d_i} - v \cdot \min(d_t^i,s_t^{d_i}) + c_{bo}(s_t^{d_i}-d_t^i)] dt,\quad 1\leq i \leq N^d,\\
J_{s}(s_0^i,p_0,u^{s_i}) = & \enspace  \mathbb{E} \int_0^T [c(s_t^i)-p_t \cdot s_t^i] dt,\quad 1\leq i \leq N^s.
\end{aligned}
\end{equation}

We employ \ass{ass:init} for initial values and the disturbance processes, and \ass{ass:fric} on the functions $f^{d_i}(\cdot)$ and $f^{s_i}(\cdot)$. Moreover,

\begin{hypot}\label{ass:G_incr_dyna_func}
$f^{d_i}(\cdot),\, 1\leq i \leq N^d$, is a strictly decreasing function of $p$, whereas $f^{s_i}(\cdot),\, 1\leq i \leq N^s$, is strictly increasing. The price-quantity graphs for the consumers are decreasing functions of $p_t$ in the form of $\phi^{d_i}(p_t;p_t^{d_i})\triangleq f^{\phi^{d_i}}(p_t^{d_i}) - p_t$, whereas the price-quantity graphs are increasing in the form of $\phi^{s_i}(p_t;p_t^{s_i})\triangleq f^{\phi^{s_i}}(p_t^{s_i}) + p_t$, for the suppliers. Functions $f^{\phi^{d_i}}(p_t^{d_i})$ and $f^{\phi^{s_i}}(p_t^{s_i})$ are Lipschitz continuous on $\R$ with Lipschitz constants $Lip(f^{\phi^{d_i}}),\, 1\leq i \leq N^d$, and $Lip(f^{\phi^{s_i}}),\, 1\leq i \leq N^s$. 
\end{hypot}
Consequently, for some $\gamma>0,\, \eta>0$, the market clearing price function $f^m(\cdot) \in \b{C}_b:\R\rightarrow \R$ is a linear function in the form of $f^m \triangleq  (\gamma/(N^d+N^s)) \cdot  (  \sum_{i=1}^{N^d} f^{\phi^{d_i}}(\cdot) + \sum_{i=1}^{N^s} f^{\phi^{s_i}}(\cdot)  + \eta )$.

This assumption limits the model to a price process parameterized by $\gamma>0$ and $\eta>0$ obtained by price-quantity graph functions submitted by the consumer and supplier agents: $\phi^{d_i}(\cdot),\, 1\leq i \leq N^d$, $\phi^{s_i}(\cdot),\, 1\leq i \leq N^s$, that are linear functions of $p_t,t\geq 0$.

\ass{ass:OU} is employed: the demand processes $\{(d_t^i,\,t\geq 0);\; 1\leq i \leq N^d\}$ and the supply processes $\{(s_t^i,\, t\geq 0) ;\;  1\leq i \leq N^s\}$ are linear mean-reverting processes that have bounded variances.
As a special case, we consider linear quadratic functions below. The choice of quadratic terms can be explained by the convexity of the production cost and the blackout cost functions. The rest of the cost functions can be arranged in a way that fits the linear parameters of the quadratic cost functions defined below. We use a penalty function for the control actions and define the \emph{volatility coefficient} $r$. Increasing the volatility coefficient penalizes each agent's attempt to change its price-quantity functional; therefore increasing $r$ is equivalent to penalizing volatility in the market when the system of agents is taken as a mass. The nonlinear curve-fitting problem is solved in the least-squares sense given the input data and the observed output data, and we get the following cost functions:
\begin{equation}\label{eqn:G_LQG_cost}
\begin{aligned}
J_{d}(d_0^i,s_0^{d_i},p_0,u^{d_i}) = &  \mathbb{E} \int_0^T \Big [ {x_t^{d_i}}^\top Q^d x_t^{d_i} + 2{x_t^{d_i}}^\top D_t^{d_i} + r (u_t^{d_i})^2 \Big ] dt,\quad 1 \leq i \leq N^d,\\
J_{s}(s_0^i,p_0,u^{s_i}) = &  \mathbb{E} \int_0^T \Big [ {x_t^{s_i}}^\top Q^s x_t^{s_i} + 2{x_t^{s_i}}^\top D_t^{s_i} + r (u_t^{s_i})^2 \Big ] dt,\quad 1\leq i \leq N^s,
\end{aligned}
\end{equation}
where $x^{d_i} := (d^i,s^{d_i},p^{d_i})\t,\, x^{s_i} := (s^i,p^{s_i})\t$, $Q^{d,s}\geq 0$, $r>0$ are constant values, and $D_t^{d_i}$ is a continuous vector valued function of $\{x_t^{d_j},1\leq j \leq N^d, j\neq i;\, x_t^{s_j},1\leq j \leq N^s\}$ and $D_t^{s_i}$ is a continuous vector valued function of $\{x_t^{d_j},1\leq j \leq N^d;\, x_t^{s_j},1\leq j \leq N^s,j\neq i\}$. The cost functions are coupled: the price functional (dependent on all agents' actions) enters into the cost function parameters. Employing \ass{ass:OU}, the equation system \eqref{eqn:G_dyna} can be written in the form of
\begin{equation}\label{eqn:G_line_dyna}
\begin{aligned}
dx^{d_i}_t = &\psi(x_t^{d_i},u_t^{d_i})dt + G^d dw_t^{d_i},\quad t\geq 0,\\
dx^{d_i}_t = &\left ( A^{d_i} x^{d_i}_t + B^d u^{d_i}_t + h_t^{d_i} \right ) dt + G^d dw_t^{d_i},\quad t\geq 0,\\
dx^{s_i}_t = & \psi(x_t^{s_i},u_t^{s_i})dt + G^s dw_t^{s_i},\quad t\geq 0,\\
dx^{s_i}_t = &\left ( A^{s_i} x^{s_i}_t + B^s u^{s_i}_t + h_t^{s_i} \right ) dt + G^s dw_t^{s_i},\quad t\geq 0,
\end{aligned}
\end{equation}
where $\{ w_t^{d_i},\, 1\leq i \leq N^d , \; w_t^{s_i},\, 1\leq i \leq N^s ; \; t\geq 0  \}$ are standard Wiener processes with suitable dimensions and $x^{d_i}(0)=x^{d}_0,\, x^{s_i}(0)=x^{s}_0$. The function $h_t^{d_i}$ is of the form $h_t^{d_i}(p_t^{d_j},1\leq j \leq N^d, j\neq i;\, p_t^{s_j},1\leq j \leq N^s)$ and function $h_t^{s_i}$ is of the form $h_t^{s^i}(p_t^{s_j},1\leq j \leq N^s, j\neq i;\, p_t^{d_j},1\leq j \leq N^d)$.

The coefficients $[A^{d_i,s_i},B^{d,s}]\in \Th \in \R^{n(n+m)}$, will be called the \emph{dynamics parameters}. The variability of dynamics parameters from agent to agent is used to model a heterogeneous population of agents. Note that the dynamics are coupled among agents only through the price functional. The price functional enters into $h^{d_i}$ and $h^{s_i}$; and is a function on all agents' price-quantity graph functions. The following assumption is employed:
\begin{hypot}\label{ass:comp}
The set of dynamics parameters, $\Th$, is a compact set in the form of $\Th \subset \R^{n(n+m)}$.
\end{hypot}
\section{EQUILIBRIUM ANALYSIS}\label{sec:equ_ana}

In this section we first define the value functions for the consumers and the suppliers with the dynamics \eqref{eqn:G_line_dyna} and the cost functions \eqref{eqn:G_LQG_cost}. We then show the existence of suboptimal solutions to the HJB equations using the perturbation method. Secondly, we present the closed form solutions for the HJB equations and the statistical dependence among the agents through the price process. This dependence leads to an implementation issue which is overcome by a policy iteration style method applied by each agent to calculate the best response action. Finally, we show the existence of a unique subgame perfect equilibrium of the dynamic game under a fixed point argument in a system of agents where each agent applies the policy iteration method.

\subsection{Existence and Uniqueness of the Best Response Actions}

From now on, we consider \eqref{eqn:G_LQG_cost} and \eqref{eqn:G_line_dyna}. We define the admissible control set, $\mathcal{U}_3$, of each consumer and supplier as the set  of all feedback controls adapted to $\F_t$, the $\sigma$-field generated by the agents' trajectories and the price process $\{x_\tau^{d_{i}}, x_\tau^{s_{j}},p_\tau;0\leq \tau \leq t,\, 1\leq i \leq N^d,\, 1\leq j \leq N^s \}$. The minimum cost-to-go from any agent's initial state is described by the \emph{value functions} which are defined by $V^{d_i}(0,x_0^{d_i}) = \inf_{u\in \mathcal{U}_3} J_d(x_0^{d_i},u^{d_i}),\, 1\leq i \leq N^d; \; V^{s_i}(0,x_0^{s_i}) = \inf_{u\in \mathcal{U}_3} J_s(x_0^{s_i},u^{s_i}),\, 1\leq i \leq N^s$. Whenever the treatment is the same for both consumers and suppliers' value functions $V^{d_i},V^{s_i}$, we will drop the superscripts. The value function solves the Hamilton-Jacobi-Bellman (HJB) Equation:
\begin{equation}\label{eqn:G_HJB}
-\frac{\p V}{\p t}  + \sup_{u \in \mathcal{U}_3} \left \{ - \frac{\p V}{\p x}\t \psi - r(u)^2 \right \} - \frac{1}{2}\tr \left ( \frac{\p^2 V}{\p x^2}GG\t \right ) - x\t Q x - 2x\t D_t = 0,
\end{equation}
where $V(T,x) = 0$.

As discussed before in Section \ref{sec:CCF}, due to the lack of uniform parabolicity \cite{1999YZ}, classical solutions may be hard to obtain. Viscosity solutions are typically adopted in these circumstances. In order to approximate the solution we add the term $(1/2)\epsilon^2(\p^2V/\p p^2)$ to \eqref{eqn:G_HJB} and obtain uniform parabolicity. We obtain the perturbed value functions $V_p^{d_i}$ and $V_p^{s_i}$:
\begin{align}
& \nonumber -\frac{\p V_p^{d_i}}{\p t}  + \sup_{u \in \mathcal{U}_3} \left \{ - {\frac{\p V_p^{d_i}}{\p x^{d_i}}}^\top \psi^{d_i} -  r (u^{d_i})^2 \right \} - \frac{1}{2}\sigma_d^2\frac{\p^2 V_p^{d_i}}{\p d_i^2 }\\
& \label{eqn:G_per_HJB1} \hspace{3.5cm} - \frac{1}{2}\sigma_{s^d}^2\frac{\p^2 V_p^{d_i}}{\p {s^{d_i}}^2 } - \frac{1}{2}\epsilon^2 \frac{\p^2 V_p^{d_i}}{\p {p^{d_i}}^2 } - {x^{d_i}}^\top Q^d{x^{d_i}} - 2{x^{d_i}}^\top D_t^{d_i} = 0,
\end{align}
\begin{equation}\label{eqn:G_per_HJB2}
\frac{\p V_p^{s_i}}{\p t}  + \sup_{u \in \mathcal{U}_3} \left \{ - {\frac{\p V_p^{s_i}}{\p x^{s_i}}}^\top \psi^{s_i} -  r (u^{s_i})^2 \right \} - \frac{1}{2}\sigma_s^2\frac{\p^2 V_p^{s_i}}{\p s_i^2 } - \frac{1}{2}\epsilon^2 \frac{\p^2 V_p^{s_i}}{\p {p^{s_i}}^2 } - {x^{s_i}}^\top Q^s x^{s_i} - 2{x^{s_i}}^\top D_t^{s_i} = 0,
\end{equation}
where $V_p^{d_i}(T,x)=0$, and $V_p^{s_i}(T,x)=0$.

Eqs.\ \eqref{eqn:G_per_HJB1} and \eqref{eqn:G_per_HJB2} have unique solutions as stated in the following theorem.
\begin{thm}\label{thm:G_HJBperexis}
For all $\e>0$, Equations \eqref{eqn:G_per_HJB1} and \eqref{eqn:G_per_HJB2} have unique solutions for the admissible control set $\mathcal{U}_3$.
\end{thm}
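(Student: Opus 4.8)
The plan is to establish each of \eqref{eqn:G_per_HJB1} and \eqref{eqn:G_per_HJB2} separately by reducing it to the single-agent perturbed problem already settled in Theorem \ref{thm:CCF_HJBperexis}. The key observation is that, from the vantage point of one fixed agent solving its own equation, the coupling to the rest of the population enters only through the linear running-cost term $2{x^{d_i}}\t D_t^{d_i}$ (respectively $2{x^{s_i}}\t D_t^{s_i}$), and once every other agent's estimated strategy is held fixed this term is a given bounded continuous $\R^n$-valued function of $t\in[0,T]$; the consistency requirement that closes the loop is deferred to the equilibrium analysis. With $D_t^{d_i}$ so regarded, \eqref{eqn:G_per_HJB1} is a uniformly parabolic semilinear equation on $[0,T]\times\R^3$ whose Hamiltonian, after carrying out the supremum, is the quadratic $\tfrac{1}{4r}({\p V_p^{d_i}}/{\p p^{d_i}})^2$, attained at the linear feedback $u^*=-(2r)^{-1}\p V_p^{d_i}/\p p^{d_i}$, which is admissible in $\mathcal{U}_3$; \eqref{eqn:G_per_HJB2} is the analogous equation on $[0,T]\times\R^2$.

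First I would fix $\e>0$ and, exactly as in the proof of Theorem \ref{thm:CCF_HJBperexis}, introduce the compactly supported cut-off multipliers $h^n(x)$ (equal to $1$ on $\{|x|\le n\}$, vanishing on $\{|x|\ge n+1\}$, with $|h^n_x|\le 2$) to truncate the affine drift $A^{d_i}x^{d_i}+h_t^{d_i}$ and the unbounded quadratic running cost, producing a localized equation whose classical solvability on the bounded truncated domain is standard. Assumption \ass{ass:comp} (compactness of $\Th$) together with \ass{ass:OU} guarantees that $A^{d_i},B^d$ are bounded and $h_t^{d_i}$ is bounded and continuous, so the analogue of Lemma \ref{lem:CCF_exisparder} applies and yields bounds on $V^n$ and its first spatial derivatives that are uniform in $n$ on each bounded cylinder.

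Next I would invoke the interior $L^\l$ (Sobolev-type) parabolic estimates to bound $\|V^n\|^{(2)}_{\l,D}$ uniformly; choosing $\l>3$ for the consumer equation (and $\l>2$ for the lower-dimensional supplier equation), the Sobolev embedding gives a uniform H\"older modulus for $V^n$ and $V^n_{x_i}$, and a further application bounds $V^n_t$ and $V^n_{x_ix_j}$ in H\"older norm on compact subsets. An Arzel\`a--Ascoli extraction then produces a subsequence converging, together with its first time and up-to-second spatial derivatives, uniformly on compacta to a limit $V_p^{d_i}$ (resp. $V_p^{s_i}$) solving \eqref{eqn:G_per_HJB1} (resp. \eqref{eqn:G_per_HJB2}); boundedness of $Q^{d,s}$ and $D_t$ places the limit in the quadratic-growth class, and Lemma \ref{lem:GS} controls the moments needed for the growth bound.

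For uniqueness I would run the verification argument of Theorem \ref{thm:CCF_HJBperexis}: augmenting the price dynamics by $\e\,dv_t$ for an independent Wiener process $v$ turns $V_p^{d_i}$ into the value function of a genuinely nondegenerate stochastic control problem, and It\=o's formula applied to $V_p^{d_i}(t,x_t^{d_i})$ along any $u\in\mathcal{U}_3$ shows that any classical solution in the growth class must coincide with that value function, which is unique. I expect the only genuinely new point — and hence the step to check most carefully — to be that merely continuous and bounded (rather than smooth) coupling data $D_t^{d_i}$ suffice for every estimate above; this is benign because $D_t^{d_i}$ enters only the affine term $2{x^{d_i}}\t D_t^{d_i}$, which in the equivalent quadratic-ansatz reduction $V_p=x\t K(t)x+2x\t S(t)+q(t)$ feeds the linear ODE for $S(t)$ (and thence $q(t)$) without perturbing the Riccati equation for $K(t)$, so continuity and boundedness of $D_t^{d_i}$ are precisely what is needed for a unique $\b{C}^{1}$ solution $S$ on $[0,T]$ and hence for the asserted classical solution.
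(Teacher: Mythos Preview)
Your proposal is correct and is precisely the approach the paper has in mind: the paper's own proof simply reads ``very similar to the Proof of Theorem \ref{thm:CCF_HJBperexis}, therefore omitted,'' and what you have written is exactly the adaptation of that argument to the per-agent equations, with the coupling term $D_t$ treated as fixed bounded continuous data. The additional remarks you make about $D_t$ feeding only the linear ODE for $S(t)$ go slightly beyond what the paper spells out, but they are consistent with the closed-form analysis that follows and do not alter the method.
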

\textit{Proof:} The proof is very similar to the proof of Theorem \ref{thm:CCF_HJBperexis}, therefore omitted.

In the theorem below, we prove that the solutions to the perturbed value functions $V_p^d$ and $V_p^s$ \eqref{eqn:G_per_HJB1}, \eqref{eqn:G_per_HJB2} converge uniformly to the value function $V$ obtained from the HJB Equation \eqref{eqn:G_HJB}.
\begin{thm}\label{thm:G_HJBperconv}
For $x\in \R^3$ or $x\in \R^2$ suitably and $\epsilon>0$, if we define $V_p$ as the solution to \eqref{eqn:G_per_HJB1} or \eqref{eqn:G_per_HJB2} and $V$ as the solution to \eqref{eqn:G_HJB} for the admissible control set $\mathcal{U}_3$, then $V_p\rightarrow V$ uniformly on $[0,T]$ as $\e\conv 0$.
\end{thm}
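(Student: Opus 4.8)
The plan is to follow the template of the proof of Theorem~\ref{thm:CCF_HJBperconv}, adapting it to the decentralized linear-quadratic setting and to the quadratic (rather than bounded) admissible class $\mathcal{U}_3$. Throughout, the best-response value functions $V_p^{d_i}$ and $V_p^{s_i}$ are computed with the strategies of all other agents held fixed; consequently the coupling terms $h_t^{d_i},\,h_t^{s_i}$ and the cost parameters $D_t^{d_i},\,D_t^{s_i}$ reduce to given continuous adapted input processes, and the analysis decouples into a collection of single-agent problems to which the argument applies verbatim. For a representative agent I would first realize the perturbed equation as a genuine stochastic control problem: introducing an independent standard Wiener process $(v_t;\,t\geq 0)$, I replace the price-parameter dynamics $dp_t^{d_i}=u_t^{d_i}\,dt$ by $dp_t^{d_i,p}=u_t^{d_i}\,dt+\epsilon\,dv_t$ (and analogously on the supplier side). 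By the results invoked in Theorem~\ref{thm:G_HJBperexis} (cf.\ Chapter~4 of \cite{1999YZ}), the value function of this nondegenerate problem is the unique classical solution $V_p$ of \eqref{eqn:G_per_HJB1}--\eqref{eqn:G_per_HJB2}, so it suffices to compare the two \emph{control problems} rather than the PDEs directly.

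Next I would establish pathwise convergence of the states. Working with the reduced linear dynamics \eqref{eqn:G_line_dyna}, the perturbation adds $\epsilon\,dv_t$ only to the price-parameter component, so for a fixed admissible control $u\in\mathcal{U}_3$ the difference in that component is exactly $\epsilon v_t$, whence $P\{\lim_{\epsilon\conv 0}\sup_{0\leq t\leq T}|p_t^{d_i,p}-p_t^{d_i}|=0\}=1$. Feeding this through the Lipschitz dynamics \eqref{eqn:G_line_dyna} under \ass{ass:fric} and \ass{ass:G_incr_dyna_func} (a standard Gronwall estimate) then yields almost-sure uniform convergence of the full perturbed state $x^{d_i,p}$ to $x^{d_i}$ on $[0,T]$. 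I would then compare costs: because the integrand in \eqref{eqn:G_LQG_cost} is quadratic in $x$ and the exogenous inputs are continuous, $|J_p(x,u)-J(x,u)|\conv 0$ as $\epsilon\conv 0$ follows from Lebesgue's dominated convergence theorem, the required domination coming from Lemma~\ref{lem:GS} together with \ass{ass:fric}, which give $\E|x_t^{d_i,p}|^2$ bounded uniformly in $t\in[0,T]$ and in $\epsilon\in(0,1]$.

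The main obstacle, and the point where the game/LQ case genuinely differs from Theorem~\ref{thm:CCF_HJBperconv}, is that $\mathcal{U}_3$ is \emph{not} compact: controls need only satisfy $\int_0^T u_t^2\,dt<\f$, so one cannot obtain moment bounds uniform over all of $\mathcal{U}_3$, and passing $J_p(x,u)\conv J(x,u)$ through the infimum over $u$ is not automatic. I would resolve this by exploiting the coercivity supplied by the penalty $r(u)^2$ with $r>0$, together with $Q^{d,s}\geq 0$: since $V(t,x)$ is finite and the state cost is bounded below, any near-optimal control obeys $J(x,u)\geq r\,\E\int_0^T u_t^2\,dt - C(x)$, so for each $(t,x)$ there is a sublevel constant $M$ such that the infima defining both $V$ and $V_p$ are unchanged if one restricts to $\{u:\,\E\int_0^T u_t^2\,dt\leq M\}$, on which Lemma~\ref{lem:GS} does deliver uniform second-moment bounds. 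On this restricted, energy-bounded class the dominated-convergence step is uniform in $u$, which yields the pointwise limit $V_p(t,x)\conv V(t,x)$.

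Finally, to upgrade pointwise to uniform convergence on $[0,T]$, I would reuse the equicontinuity already established in the existence argument: the local $L^\l$ and H\"older estimates from the proof of Theorem~\ref{thm:CCF_HJBperexis}, inherited here through Theorem~\ref{thm:G_HJBperexis}, bound $V_p$ and its time derivative uniformly in $\epsilon$, so the family $\{V_p\}_{\epsilon>0}$ is equicontinuous in $t$. An Arzel\`a--Ascoli argument, exactly as in Theorem~\ref{thm:CCF_HJBperconv}, then promotes the pointwise limit to uniform convergence on $[0,T]$, completing the proof for each $V_p^{d_i}$ (with $x\in\R^3$) and $V_p^{s_i}$ (with $x\in\R^2$).
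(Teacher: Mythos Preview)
Your proposal is correct and follows exactly the route the paper intends: the paper's own ``proof'' of this theorem is the one-line remark that it is very similar to the proof of Theorem~\ref{thm:CCF_HJBperconv} and is therefore omitted, so your plan of replaying that argument agent-by-agent is precisely what is being asked for. You in fact go further than the paper by explicitly flagging and resolving the non-compactness of $\mathcal{U}_3$ via the coercivity of the $r\,u^2$ penalty to restrict to an energy-bounded sublevel set before invoking Lemma~\ref{lem:GS} and dominated convergence; the paper (here and already in Theorem~\ref{thm:CCF_LQRHJBperconv} for $\mathcal{U}_2$) simply asserts the analogy without addressing this point.
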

\textit{Proof:} The proof is very similar to the proof presented for Theorem \ref{thm:CCF_HJBperconv}, therefore omitted.

\subsection{Closed Form Solution}

Standard arguments \cite[Section 2.3]{1989AM} show that $J(x,u)$ is quadratic in $x$. Furthermore, at any point $x\in \mathbb{R}^3$ or suitably $x\in \mathbb{R}^2$ and $t\in [0,T]$, the minimum cost-to-go is quadratic in $x$. Consequently, $V$ is of the form $V(0,x) = x\t K(0) x + 2 x\t S(0) + q(0)$, that satisfies the boundary condition $V^{d_i}(T,x^{d_i}) = 0,\enspace \forall x^{d_i}\in \mathbb{R}^3$, and $V^{s_i}(T,x^{s_i}) = 0,\enspace \forall x^{s_i}\in \mathbb{R}^2$.

Following the same steps in Sec.\ \ref{sec:CCF_cl} we obtain
\begin{equation}\label{eqn:G_opt_control}
u^*(t) = -r^{-1}B\t [K(t)x(t)+S(t)].
\end{equation}

$K$ and $S$ in \eqref{eqn:G_opt_control} are iterated backwards in time, and depend on other agents' actions on $0\leq t \leq T$. This implies that at time $t\geq 0$, an agent can not calculate its best response simply through its own trajectory and the control action history of all the agents on $0 \leq s \leq t$. The agents are coupled through the price process, and the full trajectory of the price process needs to be calculated in order to obtain the best response action. The analysis of the best response of each agent and the corresponding equilibrium is presented in the next section.

\subsection{Subgame Perfect Equilibrium}\label{sec:spe}
In this section we analyze the equilibrium properties. At each time iteration and at each point in the state space each agent solves the ODEs \eqref{eqn:K}, \eqref{eqn:S} and \eqref{eqn:q} for all the consumers and the suppliers in the system, calculates their best response actions \eqref{eqn:G_opt_control}, and simultaneously solves these equations for each agent to obtain the unique fixed point in the action space. As stated before, the admissible control set is $\mathcal{U}_3$, the set  of all feedback controls adapted to $\F_t$, the $\sigma$-field generated by the agents' trajectories and the price process $\{x_\tau^{d_{i}}, x_\tau^{s_{j}},p_\tau;0\leq \tau \leq t,\, 1\leq i \leq N^d,\, 1\leq j \leq N^s \}$. Each individual agent knows the dynamics and cost function parameters of all agents in the system. Therefore, at a certain time $t\geq 0$, and for a given point in the state space, each agent can solve the ODEs \eqref{eqn:K}, \eqref{eqn:S} and \eqref{eqn:q} that depend on all agents, and get the unique fixed point for the action profile. The state processes of all agents in the system are statistically dependent due to the price functional; however, at each $t\geq 0$, given that all dynamics and state information is known, the best response calculations for all agents can be independently calculated by each agent in the system. We show that the system of equations regarding the best response actions of all agents in the system has a unique solution. We also present the policy iteration procedure that leads to the unique solution of the system of equations when applied by all agents in the system. Due to the stochasticity of the system dynamics \eqref{eqn:G_line_dyna}, this procedure is repeated by each agent until the fixed point is obtained at each time iteration. The compactness of the parameter set and the boundedness of the price functional ensure the existence of the fixed point. 

\begin{hypot}\label{ass:contr}
$[A^{d_i,s_i},B]\in \Th$ is controllable, $[Q^{1/2},A^{d_i,s_i}]$ is observable, and $A_*$ is a Hurwitz matrix. For all $\th\in \Th$, all the eigenvalues of $A_*(\theta) \triangleq A(\theta) - B(\theta) r^{-1}B\t(\theta) K(\theta)$ have negative real part; $A_*$ is continuous over $\Th$; there exists $\kappa>0,\, \rho>0$ such that $\lVert e^{A_*(\theta)t}\rVert \leq \kappa e^{-\rho t},\, \forall t \geq 0$.
\end{hypot}
The closed form solution is written for $S$ and $K$ as
\begin{align}
\label{eqn:loop_S} S(t) = & \int_t^{T} e^{-A_*\t(t-\tau)}K(\tau)h(p(\tau))d\tau + \int_t^{T} e^{-A_*\t(t-\tau)}D(p(\tau))d\tau \triangleq \mathcal{T}_1 p_t,\\
\nonumber K(t) \triangleq & \enspace \mathcal{T}_2 p_t.
\end{align}
where $\mathcal{T}_2$ is the solution to the Riccati equation \eqref{eqn:K}.

Since the solution $S(t;\th),\, \th \in \Th$, to the ordinary differential equation \eqref{eqn:S}, and the solution $K(t;\th),\, \th\in \Th$, to the Riccati equation \eqref{eqn:K} parameterized by $\th \in \Th$ are smooth functions of $\th$ (see \cite{1984De_TAC}), $S(t;\th)$ and $K(t;\th)$ satisfy the following lemma.

\begin{lem}\label{lem:S_K_contr}
Under \ass{ass:init}, \ass{ass:OU}, \ass{ass:G_incr_dyna_func}, \ass{ass:comp} and \ass{ass:contr}, we have $\mathcal{T}_1 p \in \b{C}_b[0,\infty)$ and $\mathcal{T}_2 p \in \b{C}_b[0,\infty)$ for any $p(\cdot) \in \b{C}_b[0,\infty)$.   
\end{lem}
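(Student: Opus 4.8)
The plan is to treat the two claims separately, since $\mathcal{T}_2$ and $\mathcal{T}_1$ are of quite different character. For $\mathcal{T}_2$, I would first note that the Riccati equation \eqref{eqn:K} contains only the constant data $A,B,Q,r$ for a fixed $\th\in\Th$ and does not involve $p$ at all; hence $\mathcal{T}_2 p = K$ does not actually depend on the argument $p$, and the claim collapses to showing that the stabilizing Riccati solution $K$ is a bounded continuous function of time on $[0,\f)$. This is exactly where \ass{ass:contr} enters: controllability of $[A,B]$ and observability of $[Q^{1/2},A]$ give existence and uniqueness of the positive semidefinite stabilizing solution of the associated algebraic Riccati equation, to which the finite-horizon solution $K(t)$ converges as $T\conv\f$. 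Using the smooth dependence of $K(t;\th)$ on $\th$ from \cite{1984De_TAC} together with the compactness of $\Th$ from \ass{ass:comp}, $K$ is bounded uniformly in $\th$ and continuous in $t$, so $\mathcal{T}_2 p\in\b{C}_b[0,\f)$.

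For $\mathcal{T}_1$ I would work directly from the integral representation
\[
S(t) = \int_t^{\f} e^{-A_*\t(t-\tau)}\big[K(\tau)h(p(\tau)) + D(p(\tau))\big]\,d\tau .
\]
The first step is to show that the forcing term $\tau\mapsto K(\tau)h(p(\tau)) + D(p(\tau))$ is bounded and continuous whenever $p\in\b{C}_b[0,\f)$. Continuity is inherited from continuity of $h$, $D$ and $p$; for boundedness I would invoke \ass{ass:G_incr_dyna_func}, which makes the price-quantity maps $f^{\phi}$ Lipschitz and the clearing functional $f^m$ bounded, so that $h(p(\cdot))$ and $D(p(\cdot))$ are bounded on $[0,\f)$ by a constant depending only on the Lipschitz constants and $\sup_t|p_t|$; combined with the bound on $K$ from the first paragraph, the whole forcing term is dominated by some constant $M'$.

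The second step is the decay estimate. Writing $-(t-\tau)=\tau-t\geq 0$ on the domain of integration and invoking the exponential bound $\lVert e^{A_*\t s}\rVert \leq \kappa e^{-\rho s}$ of \ass{ass:contr} (valid for $A_*\t$ since it shares the spectrum of $A_*$), I would obtain
\[
\lVert S(t)\rVert \leq \int_t^{\f}\kappa e^{-\rho(\tau-t)}M'\,d\tau = \frac{\kappa M'}{\rho},
\]
uniformly in $t$ and in $\th\in\Th$, which gives boundedness of $\mathcal{T}_1 p$. Continuity of $S$ in $t$ then follows either from dominated convergence applied to the integral (the integrand being dominated by $\kappa M' e^{-\rho(\tau-t)}$) or, more cleanly, from the fact that $S$ solves the linear ODE \eqref{eqn:S} with bounded right-hand side $-A_*\t S - Kh - D$, which makes $S$ locally Lipschitz in $t$. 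I expect the main obstacle to be the first step: arguing carefully that the infinite-horizon forcing term stays uniformly bounded for an \emph{arbitrary} bounded continuous price trajectory, using only the Lipschitz and boundedness structure supplied by \ass{ass:G_incr_dyna_func} and the mean-reverting, bounded-variance structure of \ass{ass:OU}; once the integrand is controlled, the exponential stability of \ass{ass:contr} does the rest, and the convergence of the improper integral obtained by replacing the finite horizon $T$ with $\f$ is routine.
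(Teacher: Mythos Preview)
Your proposal is correct and follows essentially the same route as the paper. The paper does not give a separate proof of this lemma; it merely remarks, in the sentence preceding the statement, that $S(t;\th)$ and $K(t;\th)$ are smooth functions of $\th$ by \cite{1984De_TAC}, leaving the boundedness to be read off from compactness of $\Th$ and the stability assumption. Your write-up makes exactly these ingredients explicit --- the $p$-independence of the Riccati solution, compactness of $\Th$ for uniform bounds on $K$, and the exponential estimate $\lVert e^{A_*\t s}\rVert\leq\kappa e^{-\rho s}$ from \ass{ass:contr} to control the $S$-integral --- so you are simply filling in the details the paper leaves implicit.
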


Therefore $\mathcal{T}_1$ and $\mathcal{T}_2$ are bounded continuous maps. With the best response actions applied, assuming $x_0^{d_i} = 0,1\leq i \leq N^d;\; x_0^{s_i}=0,1\leq i \leq N^s$, without loss of generality, in the closed loop we have
\begin{equation}\label{eqn:loop_x}
\begin{aligned}
\mathbb{E} x_t^{d_i}(t) = &- \int_0^t e^{A_*^d (t-\tau)} B^d r^{-1} {B^d}\t S^{d_i}(p(\tau)) d\tau + \int_0^t e^{A_*^d(t-\tau)} h^{d_i}(p(\tau)) d\tau \triangleq \mathcal{T}_3 p_t,\\
\mathbb{E} x_t^{s_i}(t) = & - \int_0^t e^{A_*^s(t-\tau)} B^s r^{-1} {B^s}\t S^{s_i}(p(\tau)) d\tau + \int_0^t e^{A_*^s(t-\tau)} h^{s_i}(p(\tau)) d\tau \triangleq  \mathcal{T}_4 p_t.    
\end{aligned}
\end{equation}
\begin{lem}
Under \ass{ass:init}, \ass{ass:OU}, \ass{ass:G_incr_dyna_func}, \ass{ass:comp} and \ass{ass:contr}, $\mathcal{T}_3 p \in \b{C}_b[0,\infty)$ and $\mathcal{T}_4 p \in \b{C}_b[0,\infty)$  for any $p(\cdot)\in \b{C}_b[0,\infty)$.   
\end{lem}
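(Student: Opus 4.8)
The plan is to treat $\mathcal{T}_3$ in detail and observe that $\mathcal{T}_4$ is handled identically, the only differences being the state dimension ($\R^2$ instead of $\R^3$) and the supplier data $(A_*^s,B^s,S^{s_i},h^{s_i})$ in place of the consumer data. Fix $p(\cdot)\in\b{C}_b[0,\infty)$ and write $y(t)\triangleq\mathcal{T}_3 p_t$ as in \eqref{eqn:loop_x}. The expression for $y$ is a convolution of the matrix exponential $e^{A_*^d(t-\tau)}$ against the forcing term $g(\tau)\triangleq -B^d r^{-1}{B^d}\t S^{d_i}(\tau)+h^{d_i}(p(\tau))$; equivalently, $y$ is the variation-of-constants (Duhamel) solution of the stable linear ODE $\dot y=A_*^d y+g$ with $y(0)=0$. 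I would establish boundedness and continuity of $y$ separately, each reduced to properties of $g$ together with the exponential estimate in \ass{ass:contr}.

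For boundedness I would first record that the forcing $g$ is uniformly bounded on $[0,\infty)$. By Lemma~\ref{lem:S_K_contr}, $S^{d_i}=\mathcal{T}_1 p\in\b{C}_b[0,\infty)$, so $\|S^{d_i}\|_\infty=:M_S<\infty$; and by \ass{ass:G_incr_dyna_func} the market-clearing functional $f^m$, and hence the reduced coupling term $h^{d_i}$, are affine/Lipschitz in the price, so $h^{d_i}(p(\cdot))$ is bounded by some $M_h<\infty$ whenever $p$ is bounded. Invoking $\|e^{A_*^d t}\|\leq\kappa e^{-\rho t}$ from \ass{ass:contr} and the substitution $\sigma=t-\tau$, the geometric tail $\int_0^\infty\kappa e^{-\rho\sigma}\,d\sigma=\kappa/\rho$ yields
\[
\sup_{t\geq 0}\|y(t)\|\leq\frac{\kappa}{\rho}\bigl(\|B^d r^{-1}{B^d}\t\|\,M_S+M_h\bigr)<\infty,
\]
so $\mathcal{T}_3 p$ is bounded uniformly in $t$.

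For continuity I would use that $g$ is continuous: $S^{d_i}=\mathcal{T}_1 p$ is continuous by Lemma~\ref{lem:S_K_contr}, and $h^{d_i}(p(\cdot))$ is the composition of the Lipschitz price functional (\ass{ass:G_incr_dyna_func}) with the continuous $p$, hence continuous. Continuity of $y$ then follows, for $\delta>0$, from the semigroup splitting
\[
y(t+\delta)-y(t)=\int_t^{t+\delta}e^{A_*^d(t+\delta-\tau)}g(\tau)\,d\tau+(e^{A_*^d\delta}-I)\int_0^t e^{A_*^d(t-\tau)}g(\tau)\,d\tau,
\]
where the first term is bounded by $\delta\,\kappa\,\|g\|_\infty$ and the second by $\|e^{A_*^d\delta}-I\|\,(\kappa/\rho)\,\|g\|_\infty$; both vanish as $\delta\to 0$ by continuity of the exponential at the origin and the decay estimate of \ass{ass:contr}. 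Equivalently, $y$ is the $\b{C}^1$ Duhamel solution of $\dot y=A_*^d y+g$, $y(0)=0$. This gives $\mathcal{T}_3 p\in\b{C}_b[0,\infty)$, and the same argument with supplier data yields $\mathcal{T}_4 p\in\b{C}_b[0,\infty)$.

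The main obstacle is not the convolution estimates, which are mechanical once the exponential bound of \ass{ass:contr} is available, but rather justifying rigorously that the forcing $g$ is bounded and continuous as a functional of $p\in\b{C}_b$. This requires unwinding how the multi-agent coupling in \eqref{eqn:G_line_dyna} collapses onto dependence on the single price trajectory through the linear market-clearing functional $f^m$ of \ass{ass:G_incr_dyna_func}, and checking that the Lipschitz constants $Lip(f^{\phi^{d_i}}),\,Lip(f^{\phi^{s_i}})$ propagate through $h^{d_i}$ and through the preceding lemma's bound on $S^{d_i}$; the compactness of $\Th$ in \ass{ass:comp} is what makes these constants, and hence $M_S$, $M_h$, $\kappa$ and $\rho$, uniform over the agent population.
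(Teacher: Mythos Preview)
Your proposal is correct and follows essentially the same approach as the paper: the paper's proof simply invokes that $A_*$ is Hurwitz (from \ass{ass:contr}) and that $S(t;\theta)\in\b{C}_b[0,\infty)$ by Lemma~\ref{lem:S_K_contr}, then asserts the conclusion. You have merely made explicit the convolution bound $\kappa/\rho$ and the continuity argument that the paper leaves to the reader.
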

\begin{IEEEproof}
Due to \ass{ass:contr}, $A_*$ is a Hurwitz matrix. Moreover we have shown in Lemma \ref{lem:S_K_contr} that $S(t;\theta)$ is a bounded value in $ \b{C}_b[0,\infty)$. Therefore  $\mathcal{T}_3 p \in \b{C}_b[0,\infty)$  and  $\mathcal{T}_4 p \in \b{C}_b[0,\infty)$ follows.
\end{IEEEproof}

Now, we write the price function $f^m(\cdot)$ for $t\geq 0$:
\begin{equation}\label{eqn:G_contr} 
p_t =  f^m (\{\phi^{d_i}(p_t;p_t^{d_i}),\,1\leq i \leq N^d; \; \phi^{s_i}(p_t;p_t^{s_i}),1\leq i \leq N^s \}) \enspace \triangleq \enspace \mathcal{T}_5 p_t. 
\end{equation}

The following lemma establishes that $\mathcal{T}_5$ defined above is a map from $\b{C}_b[0,\infty)$ to itself. 

\begin{lem}
Under \ass{ass:init}, \ass{ass:OU}, \ass{ass:G_incr_dyna_func}, \ass{ass:comp} and \ass{ass:contr}, we have  $\mathcal{T}_5 p \in \b{C}_b[0,\infty)$  for any $p\in \b{C}_b[0,\infty)$.   
\end{lem}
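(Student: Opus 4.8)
The plan is to exhibit $\mathcal{T}_5$ as a finite composition of maps, each of which is already known—or is easily seen—to send $\b{C}_b[0,\infty)$ into itself, and then to conclude from the stability of $\b{C}_b[0,\infty)$ under such compositions. Fixing $p(\cdot)\in\b{C}_b[0,\infty)$, I would first read off from \eqref{eqn:G_contr} and the surrounding construction the chain that produces $\mathcal{T}_5 p$: the best-response data $S$ and $K$ are generated from $p$ via $\mathcal{T}_1$ and $\mathcal{T}_2$; the expected closed-loop trajectories $\mathbb{E}x_t^{d_i}$ and $\mathbb{E}x_t^{s_i}$ are then generated via $\mathcal{T}_3$ and $\mathcal{T}_4$; their last coordinates are the price-quantity parameters $p_t^{d_i}$ and $p_t^{s_i}$; these are passed through the graph functions $f^{\phi^{d_i}}$ and $f^{\phi^{s_i}}$; and finally the market-clearing functional $f^m$ aggregates everything into the new price path $\mathcal{T}_5 p$.

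The first links are supplied directly by earlier results. Lemma \ref{lem:S_K_contr} gives $\mathcal{T}_1 p,\mathcal{T}_2 p\in\b{C}_b[0,\infty)$, and the lemma immediately preceding this one gives $\mathcal{T}_3 p,\mathcal{T}_4 p\in\b{C}_b[0,\infty)$ under exactly the present hypotheses. Projecting onto the final coordinate—a bounded linear, hence continuous, operation—I would then conclude that the parameter paths $t\mapsto p_t^{d_i}$ and $t\mapsto p_t^{s_i}$ are themselves in $\b{C}_b[0,\infty)$.

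For the remaining links I would lean on \ass{ass:G_incr_dyna_func}. Each $f^{\phi^{d_i}}$ and $f^{\phi^{s_i}}$ is Lipschitz on $\R$, hence continuous, and the linear-growth bound $|f^{\phi^{d_i}}(y)|\leq|f^{\phi^{d_i}}(0)|+Lip(f^{\phi^{d_i}})\,|y|$ shows that it maps bounded sets to bounded sets; composing with the bounded continuous parameter paths therefore keeps each $t\mapsto f^{\phi^{d_i}}(p_t^{d_i})$ in $\b{C}_b[0,\infty)$, and similarly on the supply side. Since the same assumption fixes $f^m$ as the affine combination $(\gamma/(N^d+N^s))\big(\sum_i f^{\phi^{d_i}}+\sum_i f^{\phi^{s_i}}+\eta\big)$, and a finite affine combination of $\b{C}_b[0,\infty)$ functions again lies in $\b{C}_b[0,\infty)$, chaining the four stability statements delivers $\mathcal{T}_5 p\in\b{C}_b[0,\infty)$.

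I do not expect a genuine obstacle here: once the composition is laid out, every step is routine. The one point that needs care is recognizing that it is the \emph{Lipschitz} (not merely continuous) hypothesis on the $f^{\phi}$ that upgrades ``continuous'' to ``bounded-and-continuous,'' by furnishing the linear-growth estimate above. I would also stress that this lemma asserts only that $\mathcal{T}_5$ is a self-map of $\b{C}_b[0,\infty)$; the existence and uniqueness of a fixed point $p=\mathcal{T}_5 p$—that is, of an actual equilibrium price path—is a separate question addressed afterwards.
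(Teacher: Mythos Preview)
The paper states this lemma without proof, treating it as an immediate consequence of the two preceding lemmas together with the affine form of $f^m$ imposed by \ass{ass:G_incr_dyna_func}. Your decomposition---feeding $p$ through $\mathcal{T}_1,\mathcal{T}_2$ to obtain $S,K$, then through $\mathcal{T}_3,\mathcal{T}_4$ to obtain the expected closed-loop states, projecting onto the last coordinate to recover the parameter paths $p^{d_i},p^{s_i}$, composing with the Lipschitz graph functions $f^{\phi^{d_i}},f^{\phi^{s_i}}$, and finally taking the finite affine combination $f^m$---is exactly the implicit argument, and your observation that Lipschitz continuity (not mere continuity) is what guarantees boundedness is the right emphasis. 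Your proposal is correct and simply makes explicit what the paper leaves to the reader.
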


In the following theorem we show that $\mathcal{T}_5$ has a fixed point. Following that we deduce that in a system of consumers and suppliers with dynamics \eqref{eqn:G_line_dyna} and cost functions \eqref{eqn:G_LQG_cost} the system has a unique equilibrium, and this equilibrium is indeed the unique subgame perfect equilibrium within Markovian strategies. At this point we introduce the following technical assumption:

\begin{hypot}\label{ass:ctrcn} \vspace{-.8cm}
\begin{equation*}
\begin{aligned}
& \frac{\gamma\kappa^2}{N^d+N^s} \Bigg ( N^d M_{Lip(f^{\phi^{d}})} \cdot \bigg ( \frac{1}{\rho^2} \lVert B^d \rVert^2  r^{-1} \Big ( M_{K^d} M_{h^{d}} + M_{D^{d}} \Big ) + \frac{1}{\rho} M_{h^{d}} \bigg)\\ 
& \qquadthree + N^s M_{Lip(f^{\phi^{s}})} \cdot \bigg ( \frac{1}{\rho^2} \lVert B^s \rVert^2  r^{-1} \Big ( M_{K^s}  M_{h^{s}} +  M_{D^{s}} \Big ) + \frac{1}{\rho} M_{h^{s}}  \bigg) \Bigg ) < 1,  
\end{aligned}
\end{equation*}
where $\gamma>0$ is specified in \ass{ass:G_incr_dyna_func}, $\kappa>0$ is specified in \ass{ass:contr}, $\lVert K^d(t) \rVert \leq M_{K^d}$, $\lVert K^s(t) \rVert \leq M_{K^s}$ for all $0\leq t \leq T$, $M_{h^d} = \max_{1\leq i \leq N^d}  \lVert h^{d_i} \rVert$, $M_{h^s} = \max_{1\leq i \leq N^s}  \lVert h^{s_i} \rVert$, $M_{D^d} = \max_{1\leq i \leq N^d}  \lVert D^{d_i} \rVert$, $M_{D^s} = \max_{1\leq i \leq N^s}  \lVert D^{s_i} \rVert$, $M_{Lip(f^{\phi^{d}})} = \max_{1\leq i \leq N^d}  \lVert Lip(f^{\phi^{d_i}}) \rVert$, $M_{Lip(f^{\phi^{s}})} = \max_{1\leq i \leq N^s}  \lVert Lip(f^{\phi^{s_i}}) \rVert$, $Lip(f^{\phi^d})$ and $Lip(f^{\phi^s})$ are the Lipschitz constants respectively for $f^{\phi^{d_i}}(\cdot)$ and $f^{\phi^{s_i}}(\cdot)$ specified in \ass{ass:G_incr_dyna_func}.    
\end{hypot}

This technical assumption ensures the uniqueness of the price process. Note that there is a trade-off between $\gamma$ and $r$ in this inequality. A small $r$ means cheaper control actions for the agents; therefore, the price process is more likely to be volatile and eventually intractable. This assumption ensures tractability even for small values of $r$. Note that numerical results show that this assumption can be satisfied.  

\begin{thm}\label{thm:pfixed}
Under \ass{ass:init}, \ass{ass:OU}, \ass{ass:G_incr_dyna_func}, \ass{ass:comp}, \ass{ass:contr} and \ass{ass:ctrcn}, the map $\mathcal{T}_5:\b{C}_b[0,\infty)\rightarrow \b{C}_b[0,\infty)$ has a unique fixed point which is uniformly Lipschitz continuous on $[0,\infty)$.
\end{thm}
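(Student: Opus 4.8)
The plan is to establish that $\mathcal{T}_5$ is a contraction mapping on the complete metric space $\b{C}_b[0,\infty)$ equipped with the supremum norm, and then invoke the Banach fixed point theorem to conclude existence and uniqueness of the fixed point. Since the preceding lemmas already guarantee that $\mathcal{T}_5$ maps $\b{C}_b[0,\infty)$ into itself, the only remaining work is to bound the Lipschitz constant of $\mathcal{T}_5$ and show it is strictly less than one; the contraction property will then simultaneously deliver uniqueness and, via the fixed point equation itself, the claimed uniform Lipschitz continuity of the fixed point.

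First I would take two candidate price paths $p,\tilde p \in \b{C}_b[0,\infty)$ and estimate $\lVert \mathcal{T}_5 p - \mathcal{T}_5 \tilde p \rVert$ by unwinding the definition of $f^m$ in \ass{ass:G_incr_dyna_func}. The factor $\gamma/(N^d+N^s)$ comes out front, and the sums over the consumer and supplier price-quantity graph functions $f^{\phi^{d_i}}, f^{\phi^{s_i}}$ must be controlled through their Lipschitz constants $Lip(f^{\phi^{d_i}}), Lip(f^{\phi^{s_i}})$. Because each $p^{d_i}_t$ and $p^{s_i}_t$ depends on the price path through the chain $p \mapsto S \mapsto \E x$, I would track the dependence backward through the maps $\mathcal{T}_1,\dots,\mathcal{T}_4$ defined in \eqref{eqn:loop_S} and \eqref{eqn:loop_x}. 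The key analytic input is the exponential decay bound $\lVert e^{A_*(\theta)t}\rVert \leq \kappa e^{-\rho t}$ from \ass{ass:contr}: integrating $\kappa e^{-\rho(t-\tau)}$ against a perturbation produces the $1/\rho$ factors in the plain drift terms and, after composing the $S$-dependence with the state dependence (two integrations), the $1/\rho^2$ factors that multiply the $\lVert B^d\rVert^2 r^{-1}$ terms in \ass{ass:ctrcn}. Collecting the maxima $M_{K^d}, M_{h^d}, M_{D^d}$ and their supplier analogues as uniform bounds on the relevant quantities, the resulting Lipschitz constant of $\mathcal{T}_5$ is exactly the expression appearing on the left-hand side of \ass{ass:ctrcn}, which is assumed strictly less than one.

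With the contraction constant in hand, Banach's theorem gives a unique fixed point $p^* = \mathcal{T}_5 p^*$. To obtain the uniform Lipschitz continuity on $[0,\infty)$, I would argue directly from the fixed point equation: since $p^*_t = f^m(\{\phi^{d_i}(p^*_t;p^{d_i}_t),\phi^{s_i}(p^*_t;p^{s_i}_t)\})$ and each parameter path $p^{d_i}, p^{s_i}$ evolves through the bounded-coefficient dynamics with integrands that are themselves bounded and Lipschitz in their arguments (again via \ass{ass:contr} and the boundedness established in the earlier lemmas), the time-derivative of $p^*$ is uniformly bounded. A uniform bound on $\dot p^*$ yields a uniform Lipschitz constant, independent of the horizon $T$.

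The main obstacle I anticipate is the bookkeeping in the second step: correctly propagating a perturbation in the price path through the two-stage composition (price into $S$ via $\mathcal{T}_1$, then into the expected state via $\mathcal{T}_3,\mathcal{T}_4$, then back into the price via $f^m$) while keeping every Lipschitz constant and every operator norm sharp enough that the final bound matches the precise form of \ass{ass:ctrcn}. In particular, the two distinct contributions per agent—one from the control-feedback channel carrying the $r^{-1}\lVert B\rVert^2$ and $1/\rho^2$ factors, and one from the direct drift channel $h$ carrying the single $1/\rho$ factor—must be separated cleanly, and one must verify that the bounds hold uniformly over the compact parameter set $\Th$ (using \ass{ass:comp} and the continuity of $A_*$ over $\Th$ from \ass{ass:contr}) so that taking maxima over agents is legitimate.
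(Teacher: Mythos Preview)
Your proposal is correct and follows essentially the same route as the paper: take two price paths, unwind $\mathcal{T}_5$ through $f^m$, propagate the perturbation back through the $S$-map \eqref{eqn:loop_S} and the expected-state maps \eqref{eqn:loop_x} using the decay bound $\lVert e^{A_*(\theta)t}\rVert \leq \kappa e^{-\rho t}$ to produce the $1/\rho$ and $1/\rho^2$ factors, collect the agent-wise maxima, and read off the contraction constant from \ass{ass:ctrcn}. Your explicit treatment of the uniform Lipschitz continuity of the fixed point (via a uniform bound on $\dot p^*$) is in fact more detailed than the paper's own proof, which establishes the contraction and invokes Banach but does not spell out the Lipschitz conclusion separately.
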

\textit{Proof:} The proof is given in Appendix \ref{thm:pfixed_pro}.

The main result of this section immediately follows Theorem \ref{thm:pfixed}.
\begin{cor}\label{cor:punique}
Under \ass{ass:init}, \ass{ass:OU}, \ass{ass:G_incr_dyna_func}, \ass{ass:comp}, \ass{ass:contr} and \ass{ass:ctrcn}, the expected value of the equation system \eqref{eqn:G_dyna} admits a unique bounded solution. 
\end{cor}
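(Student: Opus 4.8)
The plan is to show that the corollary is an immediate consequence of the fixed-point result in Theorem \ref{thm:pfixed}, once one recognizes that the expected dynamics of the coupled system \eqref{eqn:G_dyna} are entirely parameterized by the common price trajectory $p(\cdot)$. First I would take expectations throughout \eqref{eqn:G_dyna} and substitute the optimal best-response controls \eqref{eqn:G_opt_control}. Since the agents interact only through the market price -- which enters the drift terms $h^{d_i}, h^{s_i}$ and the cost parameters $D^{d_i}, D^{s_i}$ -- the resulting closed-loop expected state trajectories $\E x_t^{d_i}$ and $\E x_t^{s_i}$ are exactly the images $\mathcal{T}_3 p$ and $\mathcal{T}_4 p$ displayed in \eqref{eqn:loop_x}. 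The lemmas preceding \eqref{eqn:G_contr} guarantee that these maps send $\b{C}_b[0,\infty)$ into itself, so that for any bounded continuous candidate price the reconstructed expected states are again bounded and continuous.

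Next I would impose the market-clearing self-consistency constraint \eqref{eqn:G_contr}, namely $p = \mathcal{T}_5 p$, which expresses that the price produced by aggregating the agents' price-quantity graphs must coincide with the price that generated their best responses. Under \ass{ass:init}, \ass{ass:OU}, \ass{ass:G_incr_dyna_func}, \ass{ass:comp}, \ass{ass:contr} and \ass{ass:ctrcn}, Theorem \ref{thm:pfixed} asserts that $\mathcal{T}_5$ has a unique fixed point $p^\ast$ in $\b{C}_b[0,\infty)$, uniformly Lipschitz on $[0,\infty)$. This $p^\ast$ is therefore the unique admissible bounded price trajectory consistent with all agents simultaneously playing their best responses.

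Finally I would reconstruct the remaining expected quantities by substituting $p^\ast$ back through the bounded continuous maps: $\E x_t^{d_i} = \mathcal{T}_3 p^\ast$, $\E x_t^{s_i} = \mathcal{T}_4 p^\ast$, together with the associated gains $\mathcal{T}_1 p^\ast$ and $\mathcal{T}_2 p^\ast$, each of which lies in $\b{C}_b[0,\infty)$ by Lemma \ref{lem:S_K_contr} and the subsequent lemmas. Since $p^\ast$ is unique and every one of these maps is single-valued, the entire expected solution of \eqref{eqn:G_dyna} is uniquely determined and bounded, which is the asserted claim.

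The hard part is genuinely upstream, in Theorem \ref{thm:pfixed}: showing that the quantitative estimate encoded in \ass{ass:ctrcn} actually renders $\mathcal{T}_5$ a contraction on $\b{C}_b[0,\infty)$, so that the Banach fixed-point theorem yields a unique $p^\ast$. Given that theorem, the corollary follows by the bookkeeping just described, and the only point requiring care at this level is to verify that boundedness truly propagates through the reconstruction -- that feeding the bounded $p^\ast$ through $\mathcal{T}_1,\dots,\mathcal{T}_4$ cannot re-introduce unbounded growth. This is precisely what the Hurwitz property of $A_\ast$ in \ass{ass:contr} (via the exponential bound $\lVert e^{A_\ast t}\rVert \leq \kappa e^{-\rho t}$) and the uniform norm bounds $M_{K^d}, M_{h^d}, M_{D^d}, \dots$ collected in \ass{ass:ctrcn} are designed to guarantee.
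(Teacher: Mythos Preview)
Your proposal is correct and matches the paper's approach: the paper presents Corollary~\ref{cor:punique} as an immediate consequence of Theorem~\ref{thm:pfixed} without a separate proof, and your argument simply fills in the bookkeeping of reconstructing the expected states via $\mathcal{T}_1,\ldots,\mathcal{T}_4$ from the unique fixed point $p^\ast$. If anything, you have supplied more detail than the paper does.
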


\subsection{Policy Iteration} 
We now describe the iterative policy of an agent from its policy space. At time $t\in [0,T]$, for a fixed iteration number $k\geq 0$ and $\tau \in [t,T]$, suppose that there is a priori $p_\tau(k)\in \b{C}_b[0,\infty)$. Then the best response action \eqref{eqn:G_opt_control} of each agent is in the form of $u_\tau^*(k+1) = -r^{-1}B\t [K_\tau(k+1)x_\tau(k)+S_\tau(k+1)]$. Taking the same steps in the previous section we get the recursion for $p_\tau(k)$ as $\mathbb{E}[p_\tau(k+1)] = \mathcal{T}_5 p_\tau(k)$. The procedure can be applied for all $t \leq \tau \leq T$, and the recursion converges to a unique $p^*(\tau),\, t \leq \tau \leq T$, and once the price trajectory is obtained, each agent is able to calculate its best response action \eqref{eqn:G_opt_control}. The existence and uniqueness of $p^*(\tau),\, t \leq \tau \leq T$, are shown in Theorem \ref{thm:pfixed} by use of a fixed point argument. This procedure is independently performed by each agent in the system at each time iteration. The following Proposition may be proved by Theorem \ref{thm:pfixed}. 
\begin{prop}\label{prop:pol_itr}
Under \ass{ass:init}, \ass{ass:OU}, \ass{ass:G_incr_dyna_func}, \ass{ass:comp}, \ass{ass:contr} and \ass{ass:ctrcn}, $\lim_{k\rightarrow \infty} p_\tau (k) =p^*(\tau)$ for any $p^*\in [0,\infty)$ where $p^*$ is the solution to \eqref{eqn:G_dyna}.
\end{prop}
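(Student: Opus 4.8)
The plan is to recognize the policy iteration as the fixed-point iteration of the contraction $\mathcal{T}_5$ and to invoke the Banach fixed point theorem, so that the bulk of the work has already been carried out in Theorem \ref{thm:pfixed}.

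First I would make the recursion explicit. Given a candidate trajectory $p_\tau(k)\in \b{C}_b[0,\f)$ at iteration $k$, each agent forms its best response \eqref{eqn:G_opt_control} using the gains $K_\tau(k+1)$ and $S_\tau(k+1)$ obtained by solving the Riccati equation \eqref{eqn:K} and the linear equation \eqref{eqn:S} along $p_\tau(k)$; these are exactly the maps $\mathcal{T}_2 p(k)$ and $\mathcal{T}_1 p(k)$. Substituting these responses into the closed-loop expected state dynamics \eqref{eqn:loop_x} and then into the market-clearing functional yields, by \eqref{eqn:G_contr}, the update $\mathbb{E}[p_\tau(k+1)] = \mathcal{T}_5 p_\tau(k)$. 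The preceding lemmas guarantee that $\mathcal{T}_1,\dots,\mathcal{T}_5$ each map $\b{C}_b[0,\f)$ into itself, so every iterate stays in $\b{C}_b[0,\f)$ and the recursion is well defined.

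Second, I would recall from the proof of Theorem \ref{thm:pfixed} that, under \ass{ass:ctrcn}, the map $\mathcal{T}_5$ is a contraction on the complete metric space $(\b{C}_b[0,\f),\lVert\cdot\rVert_\f)$ whose modulus $L$ equals the left-hand side of \ass{ass:ctrcn} and satisfies $L<1$; this constant is assembled from the Lipschitz constants of the $f^{\phi}$ functions in \ass{ass:G_incr_dyna_func}, the exponential decay bound $\lVert e^{A_* t}\rVert \leq \kappa e^{-\rho t}$ of \ass{ass:contr}, and the uniform bounds $M_{K},M_{h},M_{D}$. The Banach fixed point theorem then gives $\lVert p(k)-p^*\rVert_\f \leq L^k \lVert p(0)-p^*\rVert_\f \conv 0$, where $p^*$ is the unique fixed point of $\mathcal{T}_5$. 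In particular $p_\tau(k)\conv p^*(\tau)$ uniformly in $\tau$, and a fortiori for each fixed $\tau$. By Corollary \ref{cor:punique} this $p^*$ is precisely the unique bounded solution of the expected price equation of \eqref{eqn:G_dyna}, which is the claimed limit.

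I expect the only genuine obstacle to be the identification step rather than the convergence itself: one must verify carefully that running the policy iteration with the $(k+1)$-step gains $K_\tau(k+1),S_\tau(k+1)$ reproduces exactly the operator $\mathcal{T}_5$ analyzed in Theorem \ref{thm:pfixed}, i.e.\ that solving \eqref{eqn:K}--\eqref{eqn:S} backward at each step and then taking expectations in \eqref{eqn:loop_x} commutes with the closed-loop dynamics and leaves the contraction modulus unchanged. Once this identification is clean, the geometric rate $L^k$ furnished by \ass{ass:ctrcn} delivers the convergence immediately, with no estimate needed beyond those already established for Theorem \ref{thm:pfixed}.
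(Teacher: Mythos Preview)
Your proposal is correct and is exactly the approach the paper intends: the paper does not give a separate proof but simply states that the proposition ``may be proved by Theorem \ref{thm:pfixed},'' i.e.\ by identifying the policy iteration recursion $\mathbb{E}[p_\tau(k+1)]=\mathcal{T}_5 p_\tau(k)$ with the Banach fixed-point iteration of the contraction established there. Your write-up spells out that identification and the resulting geometric convergence, which is precisely what the paper leaves implicit.
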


Before we present the subgame perfect equilibrium theorem, we employ the assumption below:

\begin{hypot}\label{ass:punish}
Agents can only use Markovian strategies, i.e., we rule out many non-myopic subgame perfect equilibria mostly based on future punishments such as \emph{grim-trigger strategies} in repeated and dynamic games \cite{1979Ru_ET}. 
\end{hypot}

A Markovian strategy $\gamma_i$ of a player is defined to be a strategy where for each $t$, $\gamma_i(t,x)$ depends on $\F_t$, the $\sigma$-field generated by the agents' trajectories and the price process $\{x_\tau^{d_{i}}, x_\tau^{s_{j}},p_\tau;0\leq \tau \leq t,\, 1\leq i \leq N^d,\, 1\leq j \leq N^s \}$ only through $t$, $\{x_t^{d_{i}}, x_t^{s_{j}},p_t;\, 1\leq i \leq N^d,\, 1\leq j \leq N^s \}$.

Let us define $\Gamma$ to be the class of mappings $\gamma:[0,T]\times \R^3 \rightarrow \R$ with the property that $u(t)=\gamma(t,x)$ is adapted to $\F_t$, the $\sigma$-field generated by the agents' trajectories and the price process $\{x_\tau^{d_{i}}, x_\tau^{s_{j}},p_\tau;0\leq \tau \leq t,\, 1\leq i \leq N^d,\, 1\leq j \leq N^s \}$. A \emph{subgame perfect equilibrium} of the dynamic game with the set of agents $\b{N}$ with dynamics \eqref{eqn:G_dyna}, and with the cost functions \eqref{eqn:G_inte_cost} is a strategy profile $\gamma^* \in \Gamma$ such that for any history $h$, the strategy profile $\gamma^*|_h$ is a Nash equilibrium of the subgame based on the history $h$. 

Under \ass{ass:punish} the iterative update of agents' policies results in the system's unique subgame perfect equilibrium.

\begin{cor}
Under \ass{ass:fric}, \ass{ass:init}-\ass{ass:punish}, for agents with dynamics \eqref{eqn:G_dyna}, the action profile obtained when all agents apply \eqref{eqn:G_opt_control} at any $t\geq 0$, with the iterative procedure described above is the unique subgame perfect equilibrium of the game.
\end{cor}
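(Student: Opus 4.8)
The plan is to assemble the pieces established above into a standard dynamic-programming argument for subgame perfection, layered on top of the fixed-point construction of the equilibrium price process. First I would verify that, for a fixed price trajectory $p(\cdot)\in\b{C}_b[0,\infty)$, each agent faces a standard linear-quadratic stochastic control problem whose unique optimizer is the feedback law \eqref{eqn:G_opt_control}: this is exactly the content of Theorems \ref{thm:G_HJBperexis} and \ref{thm:G_HJBperconv} together with the closed-form derivation of $K$ and $S$, combined with an It\=o-formula verification argument showing that the feedback control attains the value function $V(t,x)$ at every $(t,x)$, not merely along one realized trajectory. The uniqueness of the LQ optimizer furnishes a single-valued best-response map for each agent against any fixed price process.

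Second I would close the loop between the agents' best responses and the price they collectively generate. When every agent plays \eqref{eqn:G_opt_control}, the induced parameters $p^{d_i},p^{s_i}$ feed through the market-clearing functional $f^m$, and by \eqref{eqn:G_contr} the consistent price must be a fixed point of $\mathcal{T}_5$. Theorem \ref{thm:pfixed} supplies exactly one such fixed point $p^*$, uniformly Lipschitz on $[0,\infty)$, and Corollary \ref{cor:punique} guarantees that the associated expected state trajectories are the unique bounded solution of \eqref{eqn:G_dyna}. Proposition \ref{prop:pol_itr} then certifies that the policy-iteration recursion $\E[p_\tau(k+1)]=\mathcal{T}_5 p_\tau(k)$, run independently by each agent, converges to this same $p^*$, so the action profile actually computed by the iterative procedure coincides with the feedback profile $\gamma^*$ obtained by substituting $p^*$ into \eqref{eqn:G_opt_control}.

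Third I would establish the equilibrium and subgame-perfection claims. Since against the others' strategies each agent's best response is \eqref{eqn:G_opt_control} evaluated at $p^*$, and this is precisely that agent's own prescribed strategy, $\gamma^*$ is a Nash equilibrium; the deviation argument is well posed because a unilateral deviation by one agent, with the remaining agents holding their Markovian feedback laws, still yields a bounded price process (the functional $f^m$ is bounded and $f^{\phi^{d_i}},f^{\phi^{s_i}}$ are Lipschitz under \ass{ass:G_incr_dyna_func}), hence a standard LQ problem for the deviator whose unique optimum is again \eqref{eqn:G_opt_control}. Subgame perfection follows from the dynamic-programming principle: because \ass{ass:punish} restricts attention to Markovian strategies, any history $h$ terminating at $(t,x_t)$ induces a subgame whose data depend only on $(t,x_t)$, and the HJB equation \eqref{eqn:G_HJB} holds at every such $(t,x)$, so $\gamma^*|_h$ is a Nash equilibrium of that subgame. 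For uniqueness, any Markovian subgame perfect equilibrium $\tilde\gamma$ must have each agent best-responding, forcing its induced price to be a fixed point of $\mathcal{T}_5$; by the uniqueness in Theorem \ref{thm:pfixed} this price equals $p^*$, and single-valuedness of the best-response map then forces $\tilde\gamma=\gamma^*$.

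The step I expect to be the main obstacle is the verification that the feedback law \eqref{eqn:G_opt_control} is genuinely optimal at \emph{every} state-time pair rather than only on the equilibrium path, which is the ingredient that upgrades Nash to subgame perfect. This requires the value function, obtained only as a limit $V_p\to V$ of perturbed classical solutions via Theorem \ref{thm:G_HJBperconv}, to satisfy a verification identity under It\=o's formula despite the degeneracy of $GG\t$; care is also needed to confirm that the off-equilibrium price process remains in $\b{C}_b[0,\infty)$, so that each deviator's control problem stays within the class where uniqueness of the optimizer has been secured.
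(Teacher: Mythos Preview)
Your proposal is correct and follows the same approach the paper has in mind: the Corollary is stated without proof because it is meant to be an immediate consequence of Theorem~\ref{thm:pfixed}, Corollary~\ref{cor:punique}, and Proposition~\ref{prop:pol_itr}, together with the restriction to Markovian strategies in \ass{ass:punish}. Your write-up simply makes explicit the three layers (unique LQ best response for a fixed price, closure via the $\mathcal{T}_5$ fixed point, and the dynamic-programming upgrade from Nash to subgame perfect) that the paper leaves implicit, so there is no substantive divergence.
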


The system has a unique equilibrium within Markovian strategies, and the action profile corresponding to the equilibrium can be obtained by an iterative algorithm applied by each agent. The equilibrium is shown by use of a fixed point argument, and the procedure is explained by a policy iteration methodology. At each time iteration each agent looks at the future, estimates the price trajectory, and calculates the best response action. This procedure leads to the unique best response profile, and the equilibrium is obtained.

\subsection{Efficiency--Volatility Trade-off}\label{sec:eff_vol}

We would like to look at the relation between the social cost function and the action penalizing parameter, \emph{volatility coefficient}, $r$. The social cost function is defined as $J \triangleq J (\{d^i,s^{d_i},s^j,p,u^{d_i},u^{s_j}$, $1\leq i \leq N^d,\, 1\leq j \leq N^s\})  =   \sum_{i=1}^{N^d} J_{d}(d^i,s^{d_i},p,u^{d_i}) + \sum_{i=1}^{N^s} J_{s}(s^i,p,u^{s_i})$,
\begin{equation}\label{eqn:G_s_cost}
\begin{aligned}
J =  \sum_{i=1}^{N^d} \mathbb{E} \int_0^T \left ( {x_t^{d_i}}^\top Q^d x_t^{d_i} + 2{x_t^{d_i}}^\top D_t^{d_i} + r (u_t^{d_i})^2 \right ) dt + \sum_{i=1}^{N^s} \mathbb{E} \int_0^T \left ( {x_t^{s_i}}^\top Q^s x_t^{s_i} + 2{x_t^{s_i}}^\top D_t^{s_i} + r (u_t^{s_i})^2 \right ) dt.
\end{aligned}
\end{equation}

We define \emph{efficiency} as the quantity obtained when the social cost is multiplied by $-1$. \emph{Volatility} on the other hand is defined by the price fluctuation measured by $\sum_{i=1}^{N^d} \mathbb{E} \int_0^T (u_t^{d_i})^2 dt + \sum_{i=1}^{N^s} \mathbb{E} \int_0^T (u_t^{s_i})^2 dt$.

Then we remove $r u^2$ from the cost function and define the \emph{state penalizing social cost} under the best response actions \eqref{eqn:G_opt_control} applied both by the consumers and the suppliers. Note that the social cost is merely a summation of the same type of cost functions; therefore the analysis of a single summand term can be easily carried onto the whole cost function. The state penalizing social cost function is written:%
\begin{equation}\label{eqn:G_spcost}
J_{sp}^* \enspace \triangleq \enspace \sum_{i=1}^{N^d} \mathbb{E} \int_0^T \left ( {x_t^{d_i}}^\top Q^d x_t^{d_i} + 2{x_t^{d_i}}^\top D_t^{d_i}  \right ) dt + \sum_{i=1}^{N^s} \mathbb{E} \int_0^T \left ( {x_t^{s_i}}^\top Q^s x_t^{s_i} + 2{x_t^{s_i}}^\top D_t^{s_i}  \right ) dt.
\end{equation}

\begin{thm}\label{thm:G_spc}
For all $x\in\R^3$, or $x\in\R^2$ suitably, the state penalizing cost portion \eqref{eqn:G_spcost} of the cost function \eqref{eqn:G_s_cost} using the best response solution $u^*$ is an increasing function of $r$.
\end{thm}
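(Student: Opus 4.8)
The plan is to reduce the claim to the single-agent comparative-statics result already obtained in Theorem \ref{thm:spc}. Because the social cost \eqref{eqn:G_s_cost} and its state-penalizing part \eqref{eqn:G_spcost} are finite sums of terms of exactly the form treated in the centralized problem, it suffices to prove monotonicity in $r$ for one representative summand and then add the resulting inequalities over all consumers and suppliers, since monotonicity is preserved under summation. I would therefore fix a single agent, say consumer $D_i$, and analyze its state-penalizing cost $J_{sp}^{d_i}(r) \triangleq \E\int_0^T ({x_t^{d_i}}\t Q^d x_t^{d_i} + 2{x_t^{d_i}}\t D_t^{d_i})\,dt$ evaluated along the equilibrium best response \eqref{eqn:G_opt_control}.

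For a fixed equilibrium price trajectory, the agent's problem is precisely the linear-quadratic problem of Section \ref{sec:CCF_cl}: its best response minimizes $J_{d}=J_{sp}^{d_i}+r\,\E\int_0^T (u_t^{d_i})^2\,dt$ over its own admissible control, with optimal feedback \eqref{eqn:G_opt_control}. I would then run the revealed-preference (swap) argument underlying the proof of Theorem \ref{thm:spc}: writing $\mathcal{V}(u)\triangleq \E\int_0^T u_t^2\,dt$ and letting $u_1,u_2$ be the best responses at $r_1<r_2$, optimality of $u_1$ at $r_1$ and of $u_2$ at $r_2$ yield two inequalities whose sum collapses to $(r_1-r_2)(\mathcal{V}(u_1)-\mathcal{V}(u_2))\leq 0$, hence $\mathcal{V}(u_1)\geq \mathcal{V}(u_2)$, i.e. volatility decreases in $r$. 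Substituting this back into the single optimality inequality for $u_1$ at $r_1$ gives $J_{sp}^{d_i}(r_1)\leq J_{sp}^{d_i}(r_2)$ directly, with no differentiability required. Summing over $1\le i\le N^d$ and the analogous supplier terms then delivers the theorem.

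The main obstacle is that, unlike in the centralized case, the two objectives being compared are not the same functional with only $r$ altered: the equilibrium price process $p^*$ enters each agent's data through $h_t^{d_i}$ and $D_t^{d_i}$, and $p^*$ itself shifts when $r$ changes. Thus the swap argument cannot be applied verbatim, since both the state-penalizing functional $J_{sp}^{d_i}$ and the dynamics differ between $r_1$ and $r_2$. I would close this gap using the fixed-point structure of Theorem \ref{thm:pfixed}: under \ass{ass:ctrcn} the map $\mathcal{T}_5$ is a contraction, so the equilibrium price is the unique, uniformly Lipschitz fixed point and varies controllably with $r$. The comparative statics would then be carried out either by freezing $p^*$ and invoking Theorem \ref{thm:spc} termwise (treating the price coupling as exogenous data), or by propagating the sign through the contraction so that the $r$-induced perturbation of $p^*$ cannot reverse the monotonicity obtained for the frozen-coupling problem. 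Ensuring that this price-feedback perturbation is dominated by the direct effect of $r$ is the delicate point, and the contraction constant in \ass{ass:ctrcn} being strictly below $1$ is exactly what guarantees the direct single-agent effect prevails.
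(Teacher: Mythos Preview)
Your high-level reduction---treat the social cost as a finite sum of single-agent LQ terms and prove monotonicity termwise---is exactly what the paper does; it remarks that ``the social cost is merely a summation of the same type of cost functions'' and declares the proof ``very similar'' to that of Theorem~\ref{thm:spc}. However, the single-agent step you propose is \emph{not} the one underlying Theorem~\ref{thm:spc}. Appendix~\ref{thm:spc_pro} does not use a revealed-preference swap; it differentiates directly, setting $\gamma=r^{-1}$, deriving coupled ODEs for $dK/d\gamma$, $dS/d\gamma$, $dx^*/d\gamma$ (equations \eqref{eqn:NFL_ndKdR}--\eqref{eqn:NFL_ddtdxdR}), and reading off the sign of $dJ_{sp}/d\gamma$ from the explicit expression \eqref{eqn:R1R2} using the special structure of $A,B$ in \eqref{eqn:ABG}. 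Your swap argument is a correct and more elementary alternative for the frozen-data problem---it needs no smoothness in $r$ and no Riccati sensitivity analysis---but you should not attribute it to the paper.

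The genuine gap you identify is real, and the paper does not close it either. When $r$ moves, the equilibrium price $p^*$ moves with it, and each agent's exogenous data $h_t^{d_i},D_t^{d_i}$ shift accordingly; the two optimality inequalities in your swap live in \emph{different} optimization problems and no longer telescope to $(r_1-r_2)(\mathcal V(u_1)-\mathcal V(u_2))\le 0$. The paper's termwise invocation of Theorem~\ref{thm:spc} tacitly freezes $h$ and $D$, so it captures only the direct effect. Your contraction-based remedy points in the right direction but is not yet a proof: to make it rigorous you would need (i) a bound $\|p^*(r_1)-p^*(r_2)\|_\infty\le L|r_1-r_2|$ from the fixed-point equation and \ass{ass:ctrcn}, (ii) Lipschitz dependence of $J_{sp}^{d_i}(\cdot\,;p)$ on $p$, and (iii) an argument that the resulting cross terms are dominated by the direct term. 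Absent those estimates, ``the contraction constant being strictly below $1$'' is suggestive but does not by itself force the indirect price-feedback effect to have the right sign or to be small.
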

\textit{Proof:} The proof is very similar to the proof of Theorem \ref{thm:spc}; therefore not provided.

\begin{cor}\label{thm:G}
Suppose \ass{ass:fric}, \ass{ass:init}-\ass{ass:punish} hold. Let the price adjustment of the consumers and the suppliers be penalized with a factor of the \emph{volatility coefficient}. Increasing the \emph{volatility coefficient} increases the integrated social cost, while decreasing the coefficient decreases the cost. 

Therefore, there is an inherent trade-off between social efficiency and non-volatility.
\end{cor}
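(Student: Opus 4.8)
The plan is to obtain the statement as a direct consequence of Theorem \ref{thm:G_spc} together with a short monotonicity argument on the volatility term, so that no new heavy machinery is required. First I would record the decomposition of the social cost \eqref{eqn:G_s_cost} into its two pieces, writing $J(r) = J_{sp}^*(r) + r\,\mathcal{V}(r)$, where $J_{sp}^*$ is the state penalizing social cost \eqref{eqn:G_spcost} and $\mathcal{V}(r) \triangleq \sum_{i=1}^{N^d} \E\int_0^T (u_t^{d_i})^2 dt + \sum_{i=1}^{N^s} \E\int_0^T (u_t^{s_i})^2 dt$ is the volatility measure, both evaluated at the equilibrium best-response profile \eqref{eqn:G_opt_control} associated with the coefficient $r$. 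Corollary \ref{cor:punique} and Theorem \ref{thm:pfixed} guarantee that this profile exists and is unique for each admissible $r>0$, so $J_{sp}^*(r)$ and $\mathcal{V}(r)$ are well-defined functions of $r$, and the two halves of the trade-off can be phrased as monotonicity statements in $r$.

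For the efficiency half I would simply invoke Theorem \ref{thm:G_spc}: the state penalizing cost $J_{sp}^*(r)$ is increasing in $r$. Since \emph{efficiency} was defined as $-J_{sp}^*$, this says efficiency strictly decreases as $r$ grows, giving one direction of the claimed cost behaviour directly.

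For the volatility half I would establish that $\mathcal{V}(r)$ is decreasing in $r$ by an exchange argument. Fixing $r_1<r_2$ and writing $J_{sp}^{(k)},\mathcal{V}^{(k)}$ for the state penalizing cost and volatility under the profile optimal at $r_k$, the two optimality inequalities $J_{sp}^{(1)}+r_1\mathcal{V}^{(1)} \le J_{sp}^{(2)}+r_1\mathcal{V}^{(2)}$ and $J_{sp}^{(2)}+r_2\mathcal{V}^{(2)} \le J_{sp}^{(1)}+r_2\mathcal{V}^{(1)}$ add to give $(r_2-r_1)(\mathcal{V}^{(1)}-\mathcal{V}^{(2)}) \ge 0$, hence $\mathcal{V}^{(2)}\le\mathcal{V}^{(1)}$. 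Combining the two halves, increasing $r$ simultaneously lowers volatility and lowers efficiency (raises $J_{sp}^*$, and hence $J$ as well, since $\mathcal{V}\ge 0$), while lowering $r$ does the reverse; this is precisely the asserted no-free-lunch trade-off between efficiency and non-volatility.

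The main obstacle is that the exchange argument above is a statement about a single minimization, whereas in the decentralized model the controls \eqref{eqn:G_opt_control} are Nash best responses and the social cost is a \emph{sum} of individual costs rather than a jointly minimized objective, so the optimality inequalities hold only agent by agent and are coupled through the price functional $f^m$. To make the argument rigorous I would run the exchange inequality for each agent's own cost $J_d$ or $J_s$ separately while freezing the equilibrium price trajectory, and then sum over agents; the remaining point to verify is that the equilibrium price process itself behaves controllably in $r$, which I would extract from the Lipschitz fixed-point structure of $\mathcal{T}_5$ established in Theorem \ref{thm:pfixed} together with the dependence of the closed-form gains $K,S$ on $r$ through \eqref{eqn:K}--\eqref{eqn:S}. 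This coupling is exactly why the clean centralized exchange argument does not transfer verbatim, and why Theorem \ref{thm:G_spc}, which already encodes the per-agent monotonicity at equilibrium, should be invoked rather than re-derived.
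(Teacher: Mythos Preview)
The paper offers no separate proof of this corollary: it is stated as an immediate restatement of Theorem~\ref{thm:G_spc} in trade-off language, with the claim that larger $r$ yields lower volatility left as the intended and intuitive effect of penalizing the control effort. Your first half, invoking Theorem~\ref{thm:G_spc} for the efficiency direction, is exactly what the paper does and is all that is actually needed here.

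Your second half is more ambitious than the paper: you try to \emph{prove} that $\mathcal{V}(r)$ is nonincreasing via an exchange argument. That argument is valid when a single optimizer minimizes $J_{sp}+r\mathcal{V}$, but, as you yourself flag, it does not carry over to the Nash setting, because the inequalities $J_{sp}^{(1)}+r_1\mathcal{V}^{(1)} \le J_{sp}^{(2)}+r_1\mathcal{V}^{(2)}$ presuppose that the $r_1$-profile minimizes the \emph{social} cost, whereas each agent only minimizes its own $J_d$ or $J_s$. Your proposed repair, running the exchange agent by agent against a frozen equilibrium price trajectory, does not close the gap: agent $i$'s control from the $r_2$-equilibrium is a best response to the $r_2$-equilibrium price $p^{(2)}$, not to the frozen $p^{(1)}$, so inserting it into the $r_1$ problem with $p^{(1)}$ fixed gives no usable optimality inequality. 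The Lipschitz property of $\mathcal{T}_5$ in Theorem~\ref{thm:pfixed} controls dependence on the price argument, not on the parameter $r$, so it does not by itself supply the missing continuity-in-$r$ estimate. Relatedly, the parenthetical ``and hence $J$ as well, since $\mathcal{V}\ge 0$'' is unjustified in the Nash case for the same reason: the chain $J(r_1)\le J_{sp}^{(2)}+r_1\mathcal{V}^{(2)}\le J(r_2)$ requires a global minimization that you do not have.

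In short, your efficiency argument matches the paper and suffices; the volatility argument goes beyond what the paper proves, and the obstacle you identify is real and is not resolved by the sketch you give.
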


\subsection{Simulations}\label{sec:sim}

Here we simulate a power market. We use Euler-Maruyama Method \cite{1974Ar} for discretization of the stochastic differential equations. The dynamics equations for $1\leq i \leq N^d,\, 1\leq j \leq N^s$ are $d_{k+1}^i = d_k^i - \rho\left ( d_k^i-(\beta-p_k) \right ) \Delta t + \sigma w_k^{d_i}\sqrt{\Delta t},\enspace p_{k+1}^{d_i} = p_k^{d_i} + u_k^{d_i} \Delta t,\enspace s_{k+1}^j = s_k^j - \rho\left ( s_k^j-(p_k-p_k^{s_j}) \right ) \Delta t + \sigma w_k^{s_j}\sqrt{\Delta t},\enspace p_{k+1}^{s_j} = p_k^{s_j} + u_k^{s_j} \Delta t,\enspace p_k = (\sum_{1\leq l \leq N^d}p_k^{d_l}+\sum_{1\leq l \leq N^s}p_k^{s_l})/(N^d+N^s)$, where $\rho=0.05, \Delta t = 0.05, \beta = 75,\sigma=2,t_{final}=500$, with the initial conditions $(d_0^{d_i},p_0^{d_i})\t=(25,50)\t,\;(s_0^{s_i},p_0^{s_i})\t=(25,50)\t, \; p_0 = 50$.

We present a couple of figures showing the dynamics when $r=0.005$ and $r=100$. The high volatility of the price in Fig.\ \ref{fig:dynamics_R005} compared to the low volatility of the price in Fig.\ \ref{fig:G_dynamics_R1000} can be observed. One can also notice the effect of volatility on stability. In Fig.\ \ref{fig:dynamics_R005}, the aggregate demand and aggregate supply dynamics follow a much closer path, whereas in Fig.\ \ref{fig:G_dynamics_R1000}, the gap between these two processes turns out to be fluctuating. As the highest costs are paid when the absolute difference between the aggregate demand and supply is the highest, the social cost paid in Fig.\ \ref{fig:G_dynamics_R1000} is larger than in Fig.\ \ref{fig:dynamics_R005}.

\begin{figure}[ht]
\begin{minipage}[b]{0.5\linewidth}
\centering
\includegraphics[width=8.35cm]{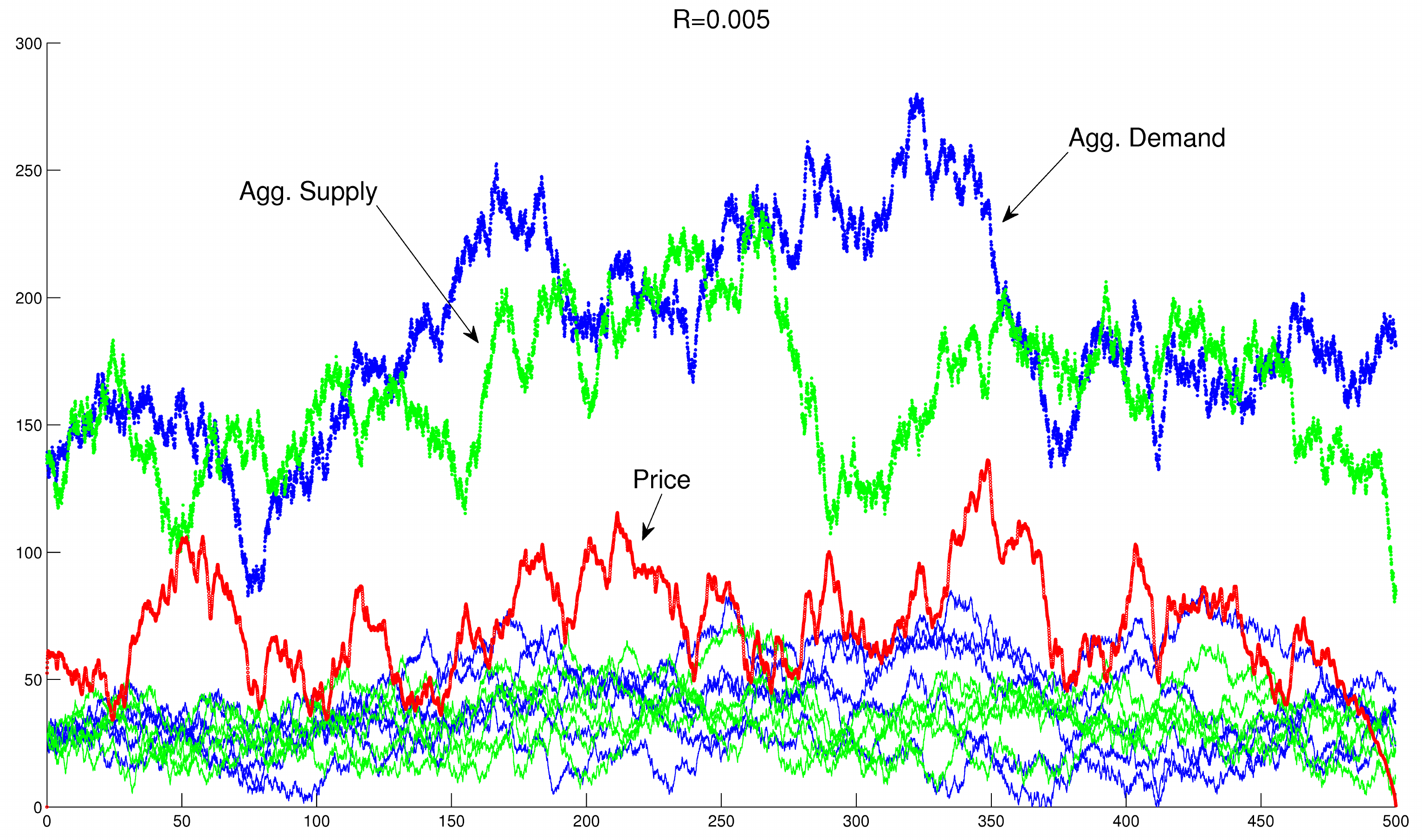}
\caption{r=0.005}
\label{fig:dynamics_R005}
\end{minipage}
\begin{minipage}[b]{0.5\linewidth}
\centering
\includegraphics[width=8.35cm]{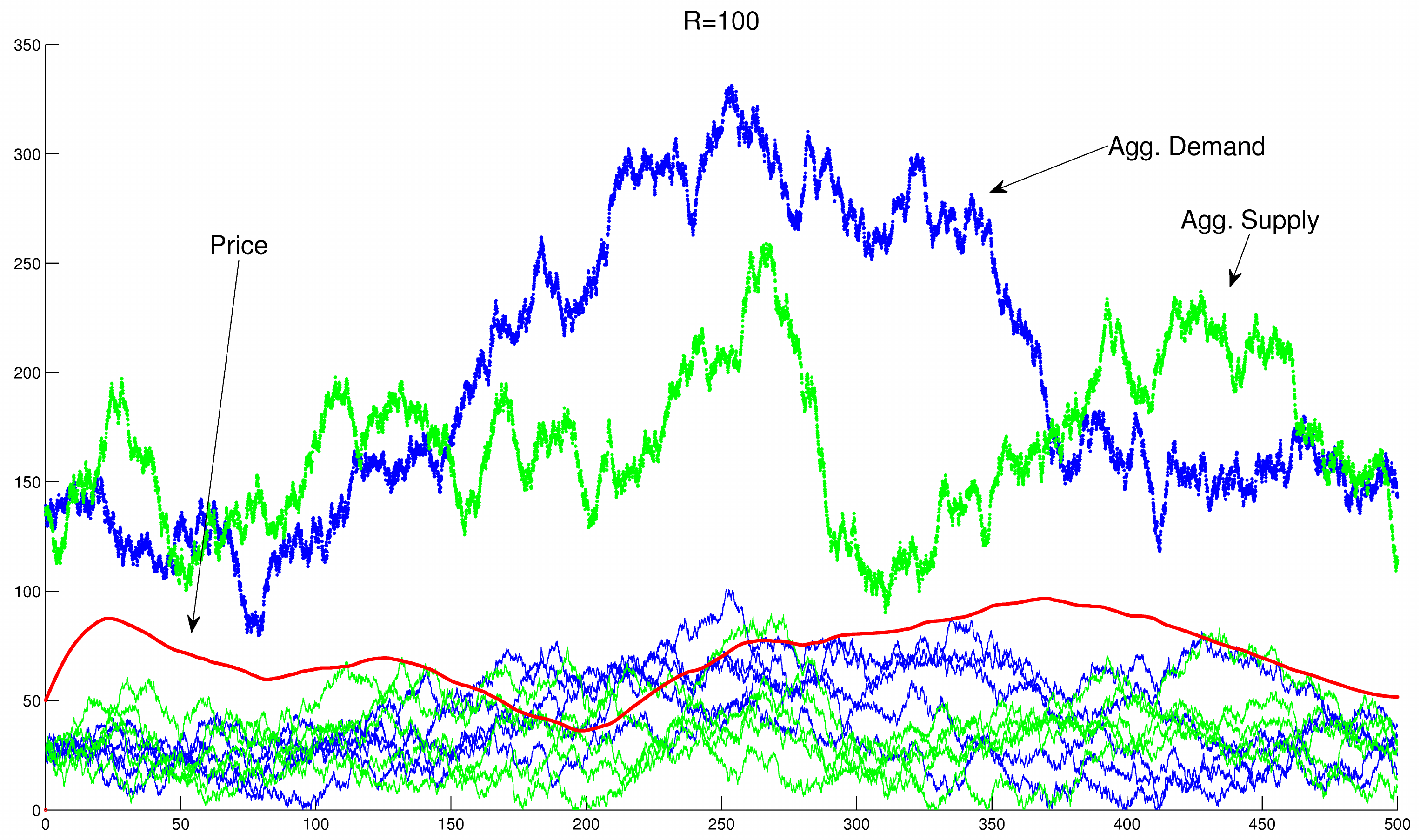}
\caption{r=100} 
\label{fig:G_dynamics_R1000}
\end{minipage} 
\end{figure}

\section{CONCLUSION}\label{sec:Conc}

The \textit{no friction} assumption of the traditional market mechanisms is a significant factor in terms of analyzing dynamic continuous time markets. Friction adds complexity to the analysis of the market, and volatility is an inevitable result of friction as our stylized model shows.

The market model has been defined as a social cost optimization problem for the regulator, and it has been shown that a simple linear closed form control exists. As a special case we studied quadratic cost functions and have shown that there is a trade-off between social efficiency of the market and non-volatility. High efficiency requires a volatile market, whereas one has to compromise efficiency in order to get stability in prices. Then we have extended the optimization model to a dynamic game theoretic framework and have shown the same efficiency--volatility trade-off for this model with closed form best response actions. The complexity increases as the number of agents in the system increases; therefore, this leads to an implementation problem. In future work we will apply mean field methods and study the limit behaviour of a large population model of suppliers and consumers to precompute the price process, therefore decrease the computational complexity.

Also, in the current model, it is assumed that each agent in the system knows all the dynamical and cost function parameters of other agents in the system. Adaptation and learning methods will be applied in future work for a system of agents that start with partial statistical information on other agents' dynamical and cost function parameters. The adapting (or learning) agents will be allowed to only observe a random partial fraction of the population of agents' trajectories and the price process. Convergence rates will be studied and properties of the equilibrium will be investigated in this setup. The behaviour of the \emph{efficiency--volatility} theorem will also be analyzed.

\appendices
\section{PROOF OF THEOREM \ref{thm:CCF_exis}} \label{thm:CCF_exis_pro}

\begin{IEEEproof}

The existence of the solution follows a standard argument \cite[Chapter 2, Theorem 5.1]{1999YZ}. The continuity of $V(x)$ is based on the continuous dependence of the cost function on the initial state values.

The plan of the proof is as follows: first, to show the uniqueness of the control, we show that the cost function $J$ is strictly convex in $p$. By contradiction: we show that the probability of two distinct control action trajectories leading to the same price process $p$ is of Lebesgue measure zero. This proves uniqueness.

We have 
\[ J(t,x_t,u) = \E \int_t^T [ -v \min(d_\tau,s_\tau)+c(s_\tau)+c_{bo}(d_\tau-s_\tau) ]d\tau,\]
which can be written for $\Delta t > 0$
\[ J(t,x_t,u) = [-v \min(d_t,s_t)+c(s_t)+c_{bo}(d_t-s_t)]\Delta t + \E\int_{\Delta t}^T (.)d\tau.\]
The first part is independent of $p_t$. We have
\[ J(t,x_t,u) = (\cdot) + \E\int_{\Delta t}^T [-v \min(d_\tau,s_\tau)+c(s_\tau)+c_{bo}(d_\tau-s_\tau)]d\tau.\]
For $s=\tau-\Delta t$, 
\begin{align*}
& J(t,x_t,u) = (\cdot) + \E \int_0^{T-\Delta t} [-v \min(d_s+f^d(d_s,p_s)\Delta t + dw_s^d,s_s+f^s(s_s,p_s)\Delta t + dw_s^s) +\\
& \qquad c(s_s+f^s(s_s,p_s)\Delta t + dw_s^s)+c_{bo}(d_s+f^d(d_s,p_s)\Delta t + dw_s^d-s_s-f^s(s_s,p_s)\Delta t - dw_s^s)]ds.
\end{align*}
Therefore, $J(t,x_t,u) = (\cdot) + \E \int_0^{T-\Delta t} [ g_{s+\Delta t} ]ds$.
For $d_{s+\Delta t} < s_{s+\Delta t}$, as $p$ increases, $\E g_{s+\Delta t}$ also increases and tends to $\infty$ as $p\conv\f$. As $p$ decreases, $\E g_{s+\Delta t}$ decreases and tends to $g^*$ as $p\conv p^*$, where $\E (d_{s+\Delta t} = s_{s+\Delta t})$. For $d_{s+\Delta t} > s_{s+\Delta t}$, as $p$ decreases, $g_{s+\Delta t}$ increases and tends to $\infty$ as $p\conv\f$. As $p$ increases, $g_{s+\Delta t}$ decreases and tends to $g^*$ as $p\conv p^*$, where $\E (d_{s+\Delta t} = s_{s+\Delta t})$. 

We have $J(d_s,s_s,\frac{1}{2}(\hat{p}_s+\tilde{p}_s),u)\leq \frac{1}{2}[J(d_s,s_s,\hat{p}_s,u) + J(d_s,s_s,\tilde{p}_s,u)]$, and the inequality holds on $A\triangleq\{(s,\omega),\tilde{p}_s\neq\hat{p}_s\}$. Let $\mathbb{E}\int_0^T 1_{\tilde{p}_s\neq\hat{p}_s} ds>0$, i.e., $A$ has a strictly positive measure. Then the control $(1/2)(\hat{u}+\tilde{u})\in\mathcal{U}$ gives
\begin{align*}
J\left (x_0,\frac{1}{2}(\hat{u}+\tilde{u}) \right ) < & \frac{1}{2} [J(x_0,\hat{u}) + J(x_0,\tilde{u})],\\
= & \inf_{u\in\mathcal{U}}J(x_0,u).
\end{align*} 
which is a contradiction. Therefore, $\mathbb{E}\int_0^T 1_{\tilde{p}_s\neq\hat{p}_s} ds=0$. The trajectories of $p_s$ are continuous with probability 1 by \eqref{eqn:control_cont}. Hence, we have $\tilde{p}_s - \hat{p}_s \equiv 0$ on $[0,T]$ with probability 1. We have $\int_0^{s}(\tilde{u}_\tau-\hat{u}_\tau)d\tau =\tilde{p}_s-\hat{p}_s$, for all $s\in[0,T]$, by \eqref{eqn:control_cont}. Hence, with probability 1, $\tilde{u}_s-\hat{u}_s=0$ a.e. on $[0,T]$ which is equivalent to $\mathbb{E}\int_0^T 1_{\tilde{u}_s\neq\hat{u}_s}ds =\int_0^T\mathbb{P}_{\Omega}(\tilde{u}_s\neq\hat{u}_s)ds=0$. Consequently, $\mathbb{P}_\Omega(\tilde{u}_s\neq\hat{u}_s)>0$ only of Lebesgue measure zero on $s\in[0,T]$. Therefore, uniqueness is proved.
\end{IEEEproof}

\section{PROOF OF THEOREM \ref{thm:spc}}\label{thm:spc_pro}
\begin{IEEEproof}
If we were to only find $d J(x,u^*)/d r$, we would calculate it from
\begin{equation*}
\frac{ d J(x,u^*)}{d r} =x_0\t \frac{d K(0)}{d r} x_0 + 2 x_0\t \frac{d S(0)}{d r} + \frac{d q(0)}{d r}. 
\end{equation*}
However, we are interested in $dJ_{sp}(x,u^*)/dr$.

The quadratic cost function was shown \eqref{eqn:NFL_int_app_cost} to be $J(x,u^*) = \int_0^T [ {x^*}\t (t)Qx^*(t) + 2{x^*}\t(t)D + r (u^*(t))^2  ] dt$. We seek to compute $dJ_{sp}(x,u^*)/dr$. However, the calculations are easier for $dJ_{sp}(x,u^*)/d\gamma$, where $\gamma=r^{-1}$. We have
\begin{equation}\label{eqn:NFL_dJdt}
\frac{d J_{sp}(x,u^*)}{d t}\big |_{t=T} = f^J(x) =\\  
\left( {x^*}\t(t,\gamma)Q{x^*}(t,\gamma) + 2 {x^*}\t(t,\gamma) D\right) \big |_{t=T},
\end{equation}
with the initial condition $ J_{sp}(x,u^*)|_{t=0}=0$.
We take the derivative of \eqref{eqn:NFL_dJdt} with respect to $\gamma$ and changing the order of differentiation
\begin{multline}\label{eqn:NFL_ddtdJdR}
\frac{d}{dt}\left( \frac{d J_{sp}(x,u^*)}{d\gamma} \right ) = \left ( \frac{d x^*(t,\gamma)}{d\gamma} \right )\t Q x^*(t,\gamma) \\
+ {x^*}\t (t,\gamma) Q \left ( \frac{dx^*(t,\gamma)}{d\gamma} \right ) + 2 \left (\frac{dx^*(t,\gamma)}{d\gamma} \right)\t D
, \quad \frac{d J_{sp}(x_0,u^*)}{d\gamma}=0.
\end{multline}

In order to get \eqref{eqn:NFL_ddtdJdR}, we need  $d x^*(t,\gamma)/ d \gamma$. Using \eqref{eqn:lin_dynamics} and \eqref{eqn:NFL_opt_control} we obtain
\begin{multline}\label{eqn:NFL_dxdt}
dx^*(t,\gamma) = f^x(x^*(t,\gamma),\gamma,K(t,\gamma),S(t,\gamma))dt + Gd\omega =\\ 
\left ( Ax^*(t,\gamma) - B \gamma B\t \big ( K(t,\gamma)x^*(t,\gamma) + S(t,\gamma) \big ) + h \right ) dt + G d\omega ,\quad x(0)=0.
\end{multline}

Taking the derivative of \eqref{eqn:NFL_dxdt} with respect to $\gamma$ and changing the order of differentiation
\begin{equation}\label{eqn:NFL_ddtdxdR}
\begin{aligned}
& \frac{d}{dt}\left( \frac{d x^*(t,\gamma)}{d\gamma} \right ) = A\left(\frac{dx^*(t,\gamma)}{d\gamma}\right) - B\gamma B\t K(t,\gamma) \left(\frac{dx^*(t,\gamma)}{d\gamma}\right)- B\gamma B\t\frac{dK(t,\gamma)}{d\gamma}x^*(t,\gamma)\\ 
& \qquadtwo  - BB\t K(t,\gamma)x^*(t,\gamma) - B\gamma B\t \frac{dS(t,\gamma)}{d\gamma} - BB\t S(t,\gamma) ,\quad \frac{d x(0)}{d\gamma}=0.
\end{aligned}
\end{equation}

In order to calculate \eqref{eqn:NFL_ddtdxdR}, we need to calculate $ d K(t,\gamma) /d\gamma = f^K(K(t,\gamma),\gamma)$ and\\ $ d S(t,\gamma)/d \gamma=f^S(K(t,\gamma),\gamma,S(t,\gamma))$.

The differential equation for $K(t,\gamma),\enspace 0\leq t \leq T$, and $S(t,\gamma),\enspace 0\leq t \leq T$, were shown to be \eqref{eqn:K} and \eqref{eqn:S}. Using the same changing order of differention technique, we get:
\begin{align}
\label{eqn:NFL_ddtdKdR} \frac{d}{dt}\left( \frac{d K(t,\gamma)}{d\gamma} \right ) = & \frac{\p f^K(\cdot)}{\p K(t,\gamma)} \frac{d K(t,\gamma)}{ d \gamma} + \frac{\p f^K(\cdot)}{\p \gamma}, \quad \frac{d K(T)}{d\gamma}=0,\\
\label{eqn:NFL_ddtdSdR} \frac{d}{dt}\left( \frac{d S(t,\gamma)}{d\gamma} \right ) = & \frac{\p f^S(\cdot)}{\p S(t,\gamma)} \frac{d S(t,\gamma)}{ d \gamma} + \frac{\p f^S(\cdot)}{\p \gamma} + \frac{\p f^S(\cdot)}{\p K(t,\gamma)} \frac{d K(t,\gamma)}{ d \gamma},\quad \frac{d S(T)}{d\gamma}=0.
\end{align}

The differential equations \eqref{eqn:NFL_ddtdKdR} and \eqref{eqn:NFL_ddtdSdR} are expanded as
\begin{equation}\label{eqn:NFL_ndKdR}
\begin{aligned}
& \frac{d}{dt}\left ( \frac{d K(t,\gamma)}{d\gamma} \right )= -\left(\frac{dK(t,\gamma)}{d\gamma}\right )A - A\t \left(\frac{dK(t,\gamma)}{d\gamma}\right ) + \left(\frac{dK(t,\gamma)}{d\gamma}\right ) B \gamma B\t K(t,\gamma) \\
& \qquadtwo  + K(t,\gamma)B\gamma B\t \left(\frac{dK(t,\gamma)}{d\gamma}\right ) + K(t,\gamma)BB\t K(t,\gamma) ; \text{ and }
\end{aligned}
\end{equation}
\begin{equation}\label{eqn:NFL_ndSdR}
\begin{aligned}
& \frac{d}{dt} \left(\frac{d S(t,\gamma)} {d\gamma} \right) = -A\t \frac{dS(t,\gamma)}{d\gamma} + (B\gamma B\t K(t,\gamma))\t \frac{dS(t,\gamma)}{d\gamma} \\
& \qquadtwo + \left(\frac{dK(t,\gamma)}{d\gamma}\right)\t B\gamma B\t S(t,\gamma) + K\t(t) B B\t S(t,\gamma) - \frac{dK(t,\gamma)}{d\gamma}h.
\end{aligned}
\end{equation}

Therefore we have all the components to calculate $dx^*(t,\gamma)/d\gamma$ \eqref{eqn:NFL_ddtdxdR}. Consequently, we are ready to calculate $dJ_{sp}(x,u^*)/d\gamma$ \eqref{eqn:NFL_ddtdJdR}. One can write
\begin{equation}\label{eqn:NFL_JgR}
\frac{dJ_{sp}(x,u^*)}{d\gamma} = \frac{dJ_{sp}(x,u^*)}{d(f(r))}= \frac{dJ_{sp}(x,u^*)}{f'(\cdot)dr} = \frac{dJ_{sp}(x,u^*)}{dr}\frac{-1}{r^{-2}},\text{ where }\gamma = r^{-1}.
\end{equation}

From \eqref{eqn:NFL_JgR} one can calculate $dJ_{sp}(x,u^*)/dr$. Here we use the specific nature of matrices $A$ and $B$ given in \eqref{eqn:ABG} and show analytically that $dJ_{sp}(x,u^*)/dr > 0$, and it is an increasing function of time. Here we show the derivations calculated at the steady state values of $K$, $S$, and $x$. However, the calculations also hold for any $0<t\leq T$. %
At steady state only the terms involving the noise variables effect $dJ_{sp}(x,u^*)/d\gamma$. Apart from that for all the other terms $dJ_{sp}(x,u^*)/d\gamma=0$. We solve the stochastic differential equation \eqref{eqn:NFL_ddtdxdR} and obtain
\begin{equation}\label{eqn:sto_xs}
\begin{aligned}
& \frac{dx^*(t,\gamma)}{d\gamma} = \int_0^t e^{\left (A-B\gamma B\t K(\tau,\gamma)\right )(t-\tau)}\bigg [ \Big(-B\gamma B\t\frac{dK(\tau,\gamma)}{d\gamma}-BB\t K(\tau,\gamma)\Big)x^*(\tau,\gamma)\\
& \qquadeight -B\gamma B\t \frac{dS(\tau,\gamma)}{d\gamma}-BB\t S(\tau,\gamma) \bigg ] d\tau.
\end{aligned}
\end{equation}

The stochastic differential equation for $x^*$ in \eqref{eqn:sto_xs} gives the solution
\[
x^*(\tau,\gamma) = \int_0^\tau e^{\left (A-B\gamma B\t K(s,\gamma)\right )(\tau-s)}(-B\gamma B\t S(s,\gamma) + h) +  \int_0^\tau e^{\left (A-B\gamma B\t K(s,\gamma)\right )(\tau-s)} G dW(s).
\]

We now inject these two equations into \eqref{eqn:NFL_ddtdJdR}, solve the stochastic differential equation, arrange the terms, and obtain
\begin{equation}\label{eqn:R1R2}
\begin{aligned}
& R_1(t) = \int_s^t e^{\left(A-B\gamma B\t K(\tau,\gamma)\right )\t(\tau-s)}\Big (-B\gamma B\t\frac{dK(\tau,\gamma)}{d\gamma}-BB\t K(\tau,\gamma)\Big)\t \cdot\\ 
& \qquadtwo e^{\left (A-B\gamma B\t K(\tau,\gamma)\right )\t(s-\tau)}d\tau,\\
& R_2(t) = \int_0^t e^{\left(A-B\gamma B\t K(\tau,\gamma)\right )(t-s)}GG\t R_1(t) e^{\left(A-B\gamma B\t K(\tau,\gamma)\right )\t(t-s)}ds,\\
& \frac{dJ_{sp}(x,u^*)}{d\gamma} = 2\int_0^T \tr(R_2(t)*Q) dt.
\end{aligned}
\end{equation}

All the terms in \eqref{eqn:R1R2} are positive except for $(-B\gamma B\t dK(\tau,\gamma)/d\gamma-BB\t K(\tau,\gamma))\t$. The matrix $BB\t$ is an all zeros matrix except for $1$ at the rightbottom entry. The multiplication with $dK(\tau,\gamma)/d\gamma$ and $K(\tau,\gamma)$ give positive values due to \eqref{eqn:ABG} that solve \eqref{eqn:NFL_ddtdKdR} and \eqref{eqn:K}. Hence, $R_1(t)<0$.

Therefore, one obtains $dJ_{sp}(x,u^*)/d\gamma < 0$ for all $\gamma>0$. As $dJ_{sp}(x,u^*)/dr =  - (dJ_{sp}(x,u^*)/d\gamma) r^{-2}$, one obtains $dJ_{sp}(x,u^*)/dr >0$ for all $r>0$.

Also, one can see the role of the noise variance on the trade-off. $G$ in \eqref{eqn:R1R2} is the noise variance matrix; and notice that an increase in the noise variance leads to an increase in $dJ_{sp}^*/dr$.

Hence, the state penalizing cost portion \eqref{eqn:spcost} of the cost function \eqref{eqn:NFL_int_app_cost} in the closed loop using the optimal control $u^*$ is an increasing function of $r$. Thus, Theorem \ref{thm:spc} is proved.
\end{IEEEproof}

\section{PROOF OF THEOREM \ref{thm:pfixed}}\label{thm:pfixed_pro}
\begin{IEEEproof}
We rewrite the operator $\mathcal{T}_5$ \eqref{eqn:G_contr} here again:
\begin{equation*}
\begin{aligned} 
p_t \; = \; & f^m (\{\phi^{d_i}(p_t;p_t^{d_i}),\,1\leq i \leq N^d; \; \phi^{s_i}(p_t;p_t^{s_i}),1\leq i \leq N^s \})\\
p_t \; \triangleq \; & \gamma \cdot \left ( \frac{ \sum_{i=1}^{N^d} f^{\phi^{d_i}}(\cdot) + \sum_{i=1}^{N^s} f^{\phi^{s_i}}(\cdot) }{N^d+N^s} + \eta \right).
\end{aligned}
\end{equation*}

$\mathcal{T}_5$ is a map from the Banach space $\b{C}_b[0,\infty)$ to itself. For any $x,y\in \b{C}_b[0,\infty)$, injecting $\mathbb{E}x^{d_i},\mathbb{E}x^{s_i}$ from \eqref{eqn:loop_x} and $S(\cdot)$ from \eqref{eqn:loop_S} we have
\begin{align*}
\lVert ( \mathcal{T}_5 x - \mathcal{T}_5 y ) (t) \rVert & \\
& \hspace{-2cm} \leq \frac{\gamma\lVert x - y \rVert_\infty}{N^d+N^s} \cdot \Bigg (  \sum_{i=1}^{N^d} f^{\phi^{d_i}} \bigg ( - \int_0^t e^{A_*^d (t-\tau)} B^d r^{-1} {B^d}\t \cdot\\  
& \hspace{-1cm} \bigg ( \int_t^{T} e^{A_*^d (s-t)}K^d (s) h^{d_i} ds + \int_t^{T} e^{A_*^d(s-t)}D^{d_i} ds \bigg ) d\tau   + \int_0^t e^{A_*^d(t-\tau)} h^{d_i}  d\tau \bigg)\\  
& \hspace{-1cm} + \sum_{i=1}^{N^s} f^{\phi^{s_i}} \bigg ( \int_0^t e^{A_*^s (t-\tau)} B^s r^{-1} {B^s}\t \bigg ( \int_t^{T} e^{A_*^s (s-t)} K^s (s) h^{s_i} ds  \\
& \hspace{-1cm} + \int_t^{T} e^{A_*^s (s-t)}D^{s_i} ds \bigg ) d\tau- \int_0^t e^{A_*^s (t-\tau)} h^{s_i}  d\tau \bigg)  \Bigg).
\end{align*}
Then we solve the integrals and obtain
\begin{align*}
\lVert ( \mathcal{T}_5 x - \mathcal{T}_5 y ) (t) \rVert & \\
& \hspace{-2cm} \leq \frac{\gamma\kappa^2\lVert x - y \rVert_\infty}{N^d+N^s} \Bigg ( \sum_{i=1}^{N^d} f^{\phi^{d_i}} \bigg ( \frac{1}{\rho^2} \lVert B^d \rVert^2  r^{-1} \Big ( M_{K^d} \lVert h^{d_i} \rVert + \lVert D^{d_i} \rVert \Big ) + \frac{1}{\rho} \lVert h^{d_i} \rVert \bigg)\\ 
& \hspace{-1cm} + \sum_{i=1}^{N^s} f^{\phi^{s_i}} \bigg ( \frac{1}{\rho^2} \lVert B^s \rVert^2  r^{-1} \Big ( M_{K^s} \lVert h^{s_i} \rVert + \lVert D^{s_i} \rVert \Big ) + \frac{1}{\rho} \lVert h^{s_i} \rVert \bigg) \Bigg ),  
\end{align*}
where $\lVert e^{A_*(\theta)t}\rVert \leq \kappa e^{-\rho t},\, \forall t \geq 0$ due to \ass{ass:contr}, $\lVert K^d(t) \rVert \leq M_{K^d}$, $\lVert K^s(t) \rVert \leq M_{K^s}$ for all $0\leq t \leq T$, $M_{h^d} = \max_{1\leq i \leq N^d}  \lVert h^{d_i} \rVert$, $M_{h^s} = \max_{1\leq i \leq N^s}  \lVert h^{s_i} \rVert$, $M_{D^d} = \max_{1\leq i \leq N^d}  \lVert D^{d_i} \rVert$, $M_{D^s} = \max_{1\leq i \leq N^s}  \lVert D^{s_i} \rVert$. All the bounds given above exist due to \ass{ass:comp}.

Now employing \ass{ass:G_incr_dyna_func} we have the bound below using the Lipschitz continuity of $f^{\phi^{d_i}},\, 1\leq i \leq N^d$ and $f^{\phi^{s_i}},\, 1\leq i \leq N^s$ with Lipschitz constants $Lip(f^{\phi^{d_i}}),\, 1\leq i \leq N^d$ and\\ $Lip(f^{\phi^{s_i}}),\, 1\leq i \leq N^s$:
\begin{align*}    
\lVert ( \mathcal{T}_5 x - \mathcal{T}_5 y ) (t) \rVert & \\
& \hspace{-2cm} \leq \frac{\gamma\kappa^2 \lVert x - y \rVert_\infty}{N^d+N^s} \Bigg ( \sum_{i=1}^{N^d} Lip(f^{\phi^{d_i}}) \cdot \bigg ( \frac{1}{\rho^2} \lVert B^d \rVert^2  r^{-1} \Big ( M_{K^d} \lVert h^{d_i} \rVert + \lVert D^{d_i} \rVert \Big ) + \frac{1}{\rho} \lVert h^{d_i} \rVert \bigg)\\ 
& \hspace{-1cm} + \sum_{i=1}^{N^s} Lip(f^{\phi^{s_i}})\cdot \bigg ( \frac{1}{\rho^2} \lVert B^s \rVert^2  r^{-1} \Big ( M_{K^s} \lVert h^{s_i} \rVert + \lVert D^{s_i} \rVert \Big ) + \frac{1}{\rho} \lVert h^{s_i} \rVert \bigg) \Bigg ).
\end{align*}
Now we insert $M_{Lip(f^{\phi^{d}})} = \max_{1\leq i \leq N^d}  \lVert Lip(f^{\phi^{d_i}}) \rVert$, $M_{Lip(f^{\phi^{s}})} = \max_{1\leq i \leq N^s}  \lVert Lip(f^{\phi^{s_i}}) \rVert$ and obtain
\begin{align*}
\lVert ( \mathcal{T}_5 x - \mathcal{T}_5 y ) (t) \rVert & \\    
& \hspace{-2cm} \leq \frac{\gamma\kappa^2\lVert x - y \rVert_\infty}{N^d+N^s} \Bigg ( N^d M_{Lip(f^{\phi^{d}})} \cdot \bigg ( \frac{1}{\rho^2} \lVert B^d \rVert^2  r^{-1} \Big ( M_{K^d} M_{h^{d}} + M_{D^{d}} \Big ) + \frac{1}{\rho} M_{h^{d}} \bigg)\\ 
& \hspace{-1cm} + N^s M_{Lip(f^{\phi^{s}})} \cdot \bigg ( \frac{1}{\rho^2} \lVert B^s \rVert^2  r^{-1} \Big ( M_{K^s}  M_{h^{s}} +  M_{D^{s}} \Big ) + \frac{1}{\rho} M_{h^{s}}  \bigg) \Bigg ).  
\end{align*}

Then from \ass{ass:ctrcn} it follows that $\mathcal{T}_5$ is a contraction and therefore has a unique fixed point $p\in \b{C}_b[0,\infty)$.

\end{IEEEproof}

\bibliographystyle{./IEEEtran} 
\bibliography{./IEEEabrv,./master}

\end{document}